\newif\ifhighlight
\newif\ifshowtext
\title{Simulation of the abstract Tile Assembly Model Using Crisscross Slats}
\author{Phillip Drake}{University of Arkansas, USA}{padrake@uark.edu}{}{This author's work was supported in part by NSF grant 2329908}
\author{Daniel Hader}{University of Arkansas, USA}{dhader@uark.edu}{}{This author's work was supported in part by NSF grant 2329908}
\author{Matthew J. Patitz}{University of Arkansas, USA}{patitz@uark.edu}{https://orcid.org/0000-0001-9287-4028}{This author's work was supported in part by NSF grant 2329908}
\authorrunning{P.~Drake, D. Hader, and M.\,J. Patitz}
\keywords{DNA origami, tile-assembly, self-assembly, aTAM, kinetic modeling, computational modeling} 
\begin{document}

\maketitle

\begin{abstract}
The abstract Tile Assembly Model provides an excellent foundation for the mathematical study of DNA-tile-based self-assembling systems, especially those wherein logic is embedded within the designs of the tiles so that they follow prescribed algorithms. While the theoretical power of such algorithmic self-assembling systems is immense, being computationally universal and capable of building complex shapes using information-theoretically optimal numbers of tiles, physical DNA-based implementations of these systems still encounter formidable error rates and undesired nucleation that stifle this theoretical potential. Slat-based self-assembly is a recent development wherein DNA forms long slats that combine together in 2 layers, rather than the aTAM's square tiles in a plane. In this approach, the length of the slats is key; while tiles only generally bind to 2 neighboring tiles at a time, slats may bind to dozens of other slats. This increased coordination between slats means that several mismatched slats must coincidentally meet in just the right way for errors to persist, unlike tiles where only a few are required. Consequently, while still a novel technology, large slat-based DNA constructions have been implemented in the lab with great success and incredible resilience to many of the problems that plague tile-based constructions. These improved error characteristics come at a cost however, as slat-based systems are often more difficult to design and simulate than analogous tile-based ones. Moreover, it has not been clear whether slats, with their larger sizes and different geometries, have the same theoretical capabilities as tiles. In this paper, we show that slats are in fact capable of doing anything that tiles can, at least at scale. We provide constructions demonstrating that any aTAM system may be converted to and simulated by an effectively equivalent system of slats. Furthermore, we show that these simulating slat systems can be made more efficiently, using shorter slats and a smaller scale factor, if the simulated tile system avoids certain growth patterns that are typically uncommon anyway. Specifically, we consider 5 classes of aTAM systems with increasing complexity, from zig-zag systems which grow in a rigid pattern to the full class of all aTAM systems, and show how they may be converted to equivalent slat systems. We show that the simplest class, zig-zag systems, may be simulated by slats at only a $2c \times 2c$ scale, where $c$ is the freely chosen coordination number (a.k.a. cooperativity) of the slats. We further show that the full class of aTAM systems can be simulated at only a $5c \times 5c$ scale, and the other intermediate classes may be simulated using scales between these. Together, these results prove that slats have the full theoretical power of aTAM tiles while also providing constructions that are compact enough to potentially provide designs for DNA-based implementations of slat systems that are both capable of powerful algorithmic self-assembly and possessing of the strong error resilience of slats. 

\end{abstract}

\maketitle

\clearpage
\pagenumbering{arabic}

\section{Introduction}\label{sec:intro}

In self-assembly, simple, disorganized components combine to form structures more complex than themselves, driven primarily by local interactions and environmental conditions. From the crystallization of water molecules into the intricate 6-fold symmetry of snowflakes, to the clustering of space dust and gasses into robust solar systems with mechanisms to mitigate the deleterious effects of debris and radiation, self-assembly processes occur at all scales of nature. Such processes are central to many fields of science and engineering, including the relatively young field of DNA-nanotechnology. Here, synthetic strands of DNA are used, not as a means to store genetic information, but rather as building blocks for nano-scale structures, far too small to assemble using conventional human building techniques. Taking advantage of the base-pairing dynamics of DNA, synthetic strands can be mixed in solution under carefully tuned conditions so that they naturally combine to form incredibly precise shapes \cite{RothOrigami, Douglas2009, ke2012three,OrigamiTiles,MonaLisa}, and even follow embedded algorithms \cite{evans2014crystals, SeemanDNARobots2010, WinfreeDNARobots2010, PadillaSignals, rothemund2004algorithmic, Zhang2017, woods2019diverse}. On an atomic scale, DNA is far too complex to completely and efficiently model so heuristics and simplifications are often used when designing DNA-based self-assembling systems. Tile-assembly models are one such simplification that have seen great success in facilitating the design of such systems. In tile-assembly, it is assumed that DNA strands are designed so that they tend to combine into small rigid units called \emph{tiles} resembling squares (or sometimes other shapes). These rigid units are augmented with extra lengths of single-stranded DNA that dangle from their sides (often called ``glues'' or ``handles'') to enable individual units to selectively combine with one another. The utility of tile-assembly comes from its simplicity and relationship with existing models in mathematics and computer science. While individual DNA strands are difficult to model, when designed to behave like tiles their self-assembly is relatively well understood and many important dynamics can be easily captured by simple mathematical rules. 

Theoretically, tile-assembly models have been extensively studied, and models such as the abstract Tile-Assembly Model (aTAM) have been shown to be algorithmically universal in that they are capable of simulating arbitrary Turing machines \cite{Winf98,jCCSA,jSADS,SolWin07}. Practically, tile-assembly models have seen significant use as design tools for complicated DNA-based nano-structures \cite{evans2014crystals, woods2019diverse}. There are however a few key difficulties that arise when attempting to realize tile-based DNA constructions. One primary difficulty is nucleation.
To ensure that the self-assembly process occurs as expected, it is generally important that assembly begins from a chosen starting \emph{seed} structure; however it can be extremely difficult to guarantee that growth does not begin spuriously by the improbable combination of a small number of tiles away from the seed. Using conventional approaches to DNA-based tile-assembly, spurious nucleation is a major hurdle to building large structures. Another difficulty comes from so-called \emph{growth errors}. While tiles may be designed so that the correct tile attachments are thermodynamically preferred, it is unlikely that erroneous attachments can be prevented altogether. Typically such errors are short lived due to the entropic penalties they incur, but if enough occur in quick succession and in just the right way, it's possible for the errors to become locked-in. Techniques such as proofreading (in various forms) \cite{SolCookWin08,SolWin05,WinBek03,ReiSahYin04,ChenHealing,CheSchGoeWin07,ChenGoel04} have been developed to mitigate these problems, but they still act as a major obstacle to larger scale DNA-based tile-assembly.

One recent development, however, has seemingly overcome both of these problems through the use of \emph{slat}-shaped DNA units  \cite{ShihNucleation,Shih-OrigamiSlats,SlatsEnzyme,SlatsBranching} rather than square tile-shaped ones. Unlike tiles which attach to at most 4 neighbors and combine in a plane, slats are long and designed to attach in multiple layers so that a single slat may span across and attach to dozens of others. For a square tile where 2 of its sides attach to an existing assembly, erroneous attachments often occur when just one of the sides correctly binds to the assembly. One of two attachments is still relatively strong and an erroneous tile may remain attached for a substantial amount of time. Even worse, it is only really necessary for 4 or so individual tiles to coincidentally bump into one another for spurious nucleation to occur. While unlikely, this is almost guaranteed to happen frequently in a mass-action system on the scale of moles. Because each slat needs to attach to 8 or even 16 others to achieve a stable bond, erroneous attachments are generally much shorter lived and less likely to lock-in, and the likelihood of spurious nucleation drops precipitously (effectively to zero \cite{ShihNucleation}).

In the lab, slats are generally implemented either using DNA-origami or as individual strands of DNA. In the DNA-origami motif, slats are often 6-helix bundles (a very common origami construction) with single-stranded DNA ``handles'' extending from one side. While both techniques are still novel, origami-based slats have been demonstrated to be incredibly robust to spurious nucleation and computer simulations have indicated that slats naturally exhibit error correcting behavior since individual erroneous attachments have little effect on correct growth later in the assembly process \cite{SlatAcceleration}. Theoretical models of slat-assembly have been introduced, naturally expanding on tile-assembly models, but little is currently known about their dynamics. In this paper, we consider the abstract Slat Assembly Model (aSAM) introduced in \cite{SlatAcceleration}, and investigate its relationship to the aTAM. Specifically we consider the extent to which aTAM tiles may be simulated by aSAM slats. In this context, simulation refers to ``intrinsic simulation'' a notion borrowed from the study of cellular automata \cite{Ollinger-CSP08,ollinger2003intrinsic} and which has been used extensively to compare tile-assembly models and develop a rich complexity theory for them \cite{IUSA,temp1notIU,DirectedNotIU,DDDIU,NeedForSeed,DDDxModelSimICALP,jDuples,WoodsIUSurvey}.
Unlike typical simulations between models of computation, where the dynamics of one model are captured symbolically by the dynamics of another, intrinsic simulation is inherently geometric. For a system $S$, be it of tiles or slats, to simulate another system $S'$ requires that $S$ ``looks like'' $S'$ when zoomed-out and furthermore that any order of attachments in $S'$ may be replicated by attachments in $S$.

\paragraph*{Our Results}

In this paper, we show that all systems in the aTAM may be intrinsically simulated by aSAM systems. Moreover, we show that if one is willing to forgo some less useful dynamics of the aTAM, then this simulation may be done quite efficiently, both in the scale factor required for the simulation and in the complexity of the necessary slats. Specifically, we consider 5 different classes of aTAM systems of increasing complexity.
The first class, \emph{zig-zag} systems, are still fully capable of Turing universal computation, but are restricted to growing solely in a back-and-forth pattern.
The second class of systems, called \emph{standard} systems, represents a simplified set of aTAM dynamics common to most theoretical constructions. These standard systems make simplifying assumptions such as requiring that no tiles mismatch with their neighbors, requiring each tile to attach with no more strength than necessary, and requiring that only 1 terminal assembly is possible. Despite this, all but the most convoluted theoretical aTAM constructions may generally be defined as standard systems. In standard systems, it is assumed that no tiles attach using ``across-the-gap'' cooperation, where a tile binds to 2 tiles of an existing assembly that are not adjacent to one another. Such attachments are generally more difficult to simulate and rarely, if ever, appear in physically implemented tile-based systems. Still, the 3rd class of aTAM systems considered in this paper are standard systems augmented with the ability to perform across-the-gap cooperation. In the 4th class we allow tiles to mismatch with their neighbors so long as they attach with sufficient strength and the system is directed (i.e. makes a unique assembly), and our 5th consists of all aTAM systems.

Table~\ref{tab:results-table} details our results with respect to a parameter $c$ of our aSAM systems called the ``cooperativity'' or sometimes ``coordination number'' of the slats. This number may be freely chosen, independent of our results, and its value effectively describes how many functionally redundant attachment domains appear along the length of each slat. In practice, increasing this number will result in a slat system more robust to spontaneous nucleation and growth errors at the cost of requiring longer and more numerous slats. In our results, we show that we can simulate arbitrary aTAM systems using slats of length no greater than $5c$ at a scale factor of $5c$, that is each aTAM tile is represented by a $5c \times 5c$ block of slats. We further show that this may be optimized to slats of maximum length $3c$ with a $2c$ scale factor for zig-zag systems. For classes in between the zig-zag systems and full aTAM, we show that the scale factor and maximum slat length grow accordingly. We present these results in increasing complexity (and note that software for converting classes of aTAM systems to slats, simulate their self-assembly, and visualize the results can be found online at \url{http://self-assembly.net/wiki/index.php/Abstract_Slat_Assembly_Model_(aSAM)}). It should be noted that a difference between a scale factor of $2c$ and $5c$ is negligible in a purely theoretical context. The real motivation for exploring and simulating different families of aTAM systems is to try and find ``practical'' transformations from the logic of tile-based assembly into error-robust slats which may be implementable. These constructions therefore are not only intended to show that aTAM dynamics may be simulated by slat dynamics, but also serve to illustrate the difficulties that arise when one tries to do so and how these difficulties may affect the slats necessary for simulation. Furthermore, while slat-based self-assembly is still in its infancy, we are optimistic that these constructions, while presented here purely theoretically, may provide designs that help to physically realize them in the not too distant future.

\begin{table}
    \centering
    \begin{tabular}{|c|c|c|c|c|}
        \hline
         aTAM class & Result& \makecell{Macrotile\\size} & \makecell{Greatest\\num slats} & \makecell{Greatest\\slat length}\\
         \hline\hline
         \makecell{Zig-zag}&Thm.\ref{thm:zig-zag}& $2c \times 2c$  & $4c$ & $3c$ \\
         \hline
         Standard&Thm.\ref{thm:standard}&$3c \times 3c$ & $8c$ & $3c$ \\
         \hline
         \makecell{Standard plus across-the-gap}&Thm.\ref{thm:standard+across-the-gap}&$3c \times 3c$  & $8c$ & $4c$ \\
         \hline
         \makecell{Directed temperature-2}&Thm.\ref{thm:deterministic+mismatches}&$4c \times 4c$ & $10c$ & $4c$ \\
         \hline
         \makecell{Nondeterministic (full aTAM)}&Thm.\ref{thm:full-aTAM}& $5c \times 5c$ & $13c$ & $5c$ \\
         \hline
    \end{tabular}
    \caption{An overview of our results. For each class of aTAM systems, corresponding to each of our theorems, we list: the simulation scale factor (the size of our macrotiles), the greatest number of slats that appear in any macrotile, and the largest slat length used during the simulation. Defined later, macrotiles represent blocks of slats which simulate individual tiles and $c$ is the cooperativity.
    }
    \label{tab:results-table}
    \vspace{-20pt}
\end{table}

\section{Preliminary Definitions and Models}

In this section, we provide definitions and overviews of the models and concepts used throughout the paper.

\subsection{The abstract Tile Assembly Model}\label{sec:aTAM}

Our conversions begin from systems within the abstract Tile-Assembly Model\cite{Winf98} (aTAM). These definitions are borrowed from \cite{DDDIU} and we note that \cite{RotWin00} and \cite{jSSADST} are good introductions to the model for unfamiliar readers. 

Let $\mathbb{N}$ be the set of non-negative integers, and for $n \in \mathbb{N}$, let $[n] = \{0, 1, ..., n-2, n-1\}$.
Let $\Sigma$ to be some alphabet with $\Sigma^*$ its finite strings. A \emph{glue} $g\in\Sigma^*\times\mathbb{N}$ consists of a finite string \emph{label} and non-negative integer \emph{strength}. There is a single glue of strength $0$, referred to as the \emph{null} glue. A \emph{tile type} is a tuple $t\in(\Sigma^*\times\mathbb{N})^{4}$, thought of as a unit square with a glue on each side. A \emph{tile set} is a finite set of tile types. We always assume a finite set of tile types, but allow an infinite number of copies of each tile type to occupy locations in the $\mathbb{Z}^2$ lattice, each called a \emph{tile}.
Given a tile set $T$, a \emph{configuration} is an arrangement (possibly empty) of tiles in the lattice $\mathbb{Z}^2$, i.e.\ a partial function $\alpha:\mathbb{Z}^2\dashrightarrow T$. Two adjacent tiles in a configuration \emph{interact}, or are \emph{bound} or \emph{attached}, if the glues on their abutting sides are equal (in both label and strength) and have positive strength. Each configuration $\alpha$ induces a \emph{binding graph} $B_\alpha$ whose vertices are those points occupied by tiles, with an edge of weight $s$ between two vertices if the corresponding tiles interact with strength $s$. An \emph{assembly} is a configuration whose domain (as a graph) is connected and non-empty. The \emph{shape} $S_\alpha \subseteq \mathbb{Z}^2$ of assembly $\alpha$ is the domain of $\alpha$. For some $\tau\in\mathbb{Z}^+$, an assembly $\alpha$ is \emph{$\tau$-stable} if every cut of $B_\alpha$ has weight at least $\tau$, i.e.\ a $\tau$-stable assembly cannot be split into two pieces without separating bound tiles whose shared glues have cumulative strength $\tau$. 

A \emph{tile-assembly system} (TAS) is a triple $\calT=(T,\sigma,\tau)$, where $T$ is a tile set, $\sigma$ is a finite $\tau$-stable assembly called the \emph{seed assembly}, and $\tau\in\mathbb{Z}^+$ is called the \emph{binding threshold} (a.k.a. \emph{temperature}).
Given a TAS $\calT=(T,\sigma,\tau)$ and two $\tau$-stable assemblies $\alpha$ and $\beta$, we say that $\alpha$ \emph{$\calT$-produces} $\beta$ \emph{in one step} (written $\alpha \to^{\calT}_1 \beta$) if $\alpha \sqsubseteq \beta$ and $|S_\beta \setminus S_\alpha| = 1$.
That is, $\alpha \to^{\calT}_1 \beta$ if $\beta$ differs from $\alpha$ by the addition of a single tile.
The \emph{$\calT$-frontier} is the set $\partial^{\calT}\alpha = \bigcup_{\alpha \to^{\calT}_1 \beta} S_\beta \setminus S_\alpha$ of locations in which a tile could $\tau$-stably attach to $\alpha$.
We use $\mathcal{A}^T$ to denote the set of all assemblies of tiles in tile set $T$. Given a TAS $\calT=(T, \sigma, \tau)$, a sequence of $k\in\mathbb{Z}^+ \cup \{\infty\}$ assemblies $\alpha_0, \alpha_1, \ldots$ over $\mathcal{A}^T$ is called a \emph{$\calT$-assembly sequence} if, for all $1\le i < k$, $\alpha_{i-1} \to^{\calT}_1 \alpha_i$. The \emph{result} of an assembly sequence is the unique limiting assembly of the sequence. For finite assembly sequences, this is the final assembly; whereas for infinite assembly sequences, this is the assembly consisting of all tiles from any assembly in the sequence. We say that \emph{$\alpha$ $\calT$-produces $\beta$} (denoted $\alpha\to^{\calT} \beta$) if there is a $\calT$-assembly sequence starting with $\alpha$ whose result is $\beta$. We say $\alpha$ is \emph{$\calT$-producible} if $\sigma\to^{\calT}\alpha$ and write $\prodasm{\calT}$ to denote the set of $\calT$-producible assemblies. We say $\alpha$ is \emph{$\calT$-terminal} if $\alpha$ is $\tau$-stable and there exists no assembly that is $\calT$-producible from $\alpha$. We denote the set of $\calT$-producible and $\calT$-terminal assemblies by $\termasm{\calT}$. If $|\termasm{\calT}| = 1$, i.e., there is exactly one terminal assembly, we say that $\calT$ is \emph{directed}. 

\subsection{Classes of aTAM systems}\label{sec:aTAM-classes}

In \cite{Winf98}, the aTAM was shown to be computationally universal when $\tau = 2$, but this is not the case when $\tau = 1$ \cite{TempOneNotIU}. Furthermore, any aTAM system with $\tau = 1$ can trivially be transformed into a $\tau = 2$ system by changing all of its strength-1 glues to be strength-2; and in any aTAM system where $\tau = 2$, any glue whose strength is greater than $2$ may trivially be replaced by a glue of strength $= 2$ without changing any behaviors of the system. Additionally, any aTAM system with $\tau > 2$ may be simulated by a system with $\tau = 2$ \cite{IUSA}. Therefore, all results in this paper will only discuss aTAM systems with $\tau = 2$ and it will be assumed that, other than the \emph{null} glue of strength 0, all glues in aTAM systems are of strength 1 or 2. When a tile initially binds to an assembly in a $\tau = 2$ system, it must immediately bind with at least (1) a single strength-2 glue, or (2) two strength-1 glues (we call this a \emph{cooperative} attachment). This is because the sum of the bond strengths must be $\ge 2$. If, when a tile initially binds to an assembly, it does so by immediately forming bonds of strength $> 2$, we call this \emph{overbinding}. When a tile $t$ attaches into a location $(x,y)$ by cooperatively binding to tiles on opposite sides of each other (i.e in locations $(x-1,y)$ and $(x+1,y)$, or $(x,y-1)$ and $(x,y+1)$), we say that $t$ has attached \emph{across-the-gap}. When two tiles in adjacent locations do not share a matching glue on their abutting faces, we say that their glues are \emph{mismatched} and note that this is only possible when their other glues contribute a cumulative attachment strength of at least $2$.

Let the symbols `$\vee$', `<', `$\wedge$', `>'  be called the \emph{input markings} for the directions $N, E, S, W$, respectively and the \emph{output markings} for the directions $S, W, N, E$, respectively. (Visually, if the input markings were placed on the corresponding sides of a tile, they would be ``pointing into'' the tile, and vice-versa for output markings) We say that a tile is \emph{IO-marked} if a subset of its glues whose sum is $\ge 2$ have input markings as prefixes to their labels, and all other non-null glues have output markings as prefixes to their labels. Since each direction has a unique input marking, and the input marking of each direction is the same as the output marking of the opposite direction, it is clear that for glues to match and form a bond, an input-marked glue on any given tile side $d$ can only bind with an output-marked glue on the opposite side of another tile, and vice versa. Note that it is possible to convert any aTAM system to an equivalent IO-marked aTAM system. Furthermore this conversion can be done so that each IO-marked tile has a minimal set of input glues, that is all input glues on a tile are necessary for the attachment (if a tile could attach in multiple ways using different combinations of input glues it is split into multiple tiles representing the same tile, each with a different subset of input glues). This ensures that assemblies made of IO-marked tiles always only have output glues exposed. See Section \ref{sec:IO-marking} of the Technical Appendix for examples conversions from unmarked to IO-marked tile types.
Given an IO-marked tile type $t$, we denote its \emph{signature} 
as the string ``Input='' followed by a pair for every input side $d$ of $t$, consisting of $d$ and the integer strength of the glue on side $d$ of $t$, plus the string ``Output='' followed by a pair for every output side $d$ of $t$, consisting of $d$ and the integer strength of the glue on side $d$ of $t$. See Figure \ref{fig:signatures} for examples.
Additionally, with this notation, multiple strengths may be assigned to each direction, in which case the notation refers to a set of signatures, one for each combination of glue strengths assigned to each side. 

\begin{figure}
    \centering
    \includegraphics[width=1.5in]{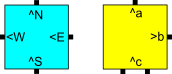}
    \caption{Examples of IO-marked tile type signatures: (Left) The light blue tile's signature is Input=(S,1),(E,1), Output=(N,1),(W,1). (Right) The yellow tile's signature is Input=(S,2), Output=(N,1),(E,1).\label{fig:signatures}}
    \vspace{-10pt}
\end{figure}




\begin{figure}
    \centering
    \includegraphics[width=4.0in]{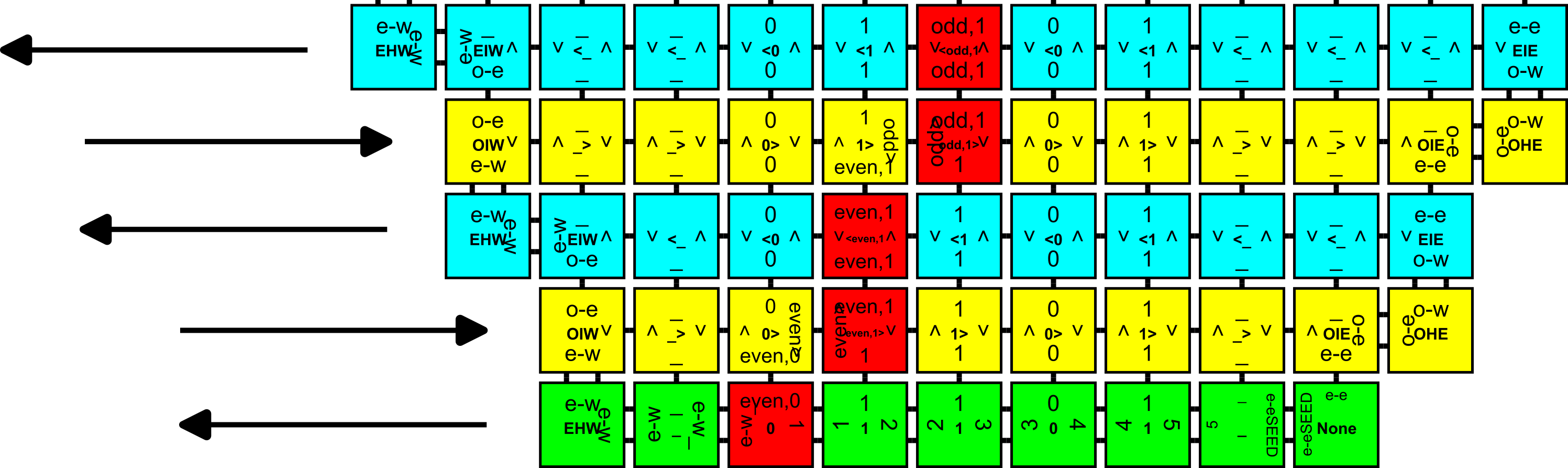}
    \caption{An example zig-zag aTAM system that simulates a Turing machine. The seed tile is the rightmost of the bottom row. The first row (green) grows right to left. After growing upward by one tile, the second row grows left to right and extends one extra tile beyond the row below. Subsequent rows continue to alternate direction and extend in length by 1. Each row represents a configuration of the Turing machine with each tile representing a tape cell, the north glues representing the contents of each cell, and the red tiles showing the location of the simulated tape head and current state of the machine. If a row is growing in the direction in which the tape head needs to move after the last transition, that occurs. If it is growing in the opposite direction, the tape head and state remain the same for that row, and then the next row (which will be of alternating direction) simulates the head movement and state change.\label{fig:zig-zag-TM}}
    \vspace{-10pt}
\end{figure}

Here we provide a quick overview of the different classes of aTAM systems considered in this paper. Formal characterizations may be found in Section~\ref{sec:aTAM-classes-appendix} of the Technical Appendix. It is assumed that all classes are IO-marked. In the first class, called \emph{zig-zag} systems, tiles never present an input glue to the south, instead growth occurs northward in rows that alternate between eastward and westward growth as illustrated in Figure~\ref{fig:zig-zag-TM}. Most tile attachments are cooperative except on the edges of the assembly and when a new row is started. Despite being the most restricted class of models considered in our results, this class of TASs is still capable of simulating the execution of arbitrary Turing machines. The next class consists of what we call \emph{standard} systems. These are directed systems where all tiles attach with exactly enough glues to meet the temperature threshold (which is $2$), no tiles attach across-the-gap, and no mismatches occur. We call such systems ``standard'' because, except for the most convoluted theoretical constructions, most systems defined in aTAM literature tend to satisfy these conditions or can easily be altered to satisfy these conditions. The third class considered consists of standard systems where across-the-gap attachments are allowed. The fourth class additionally allows mismatches but must still remain directed, in other words this class represents all directed temperature-2 systems. And finally, the fifth class consists of all aTAM systems.

    \subsection{The abstract Slat Assembly Model}\label{sec:aSAM}

The abstract Slat Assembly Model (aSAM), originally introduced in \cite{SlatAcceleration}, is a generalization of the aTAM. Since most of its definitions are analogous to those of the aTAM, in this section we provide an informal overview. (Formal definitions can be found in \Cref{sec:aSAM-appendix} of the Technical Appendix.) The primary difference between slats and tiles is that the former are defined as $n\times 1\times 1$ polyominoes of cubes in 3D space.
Therefore, with slats we expand our list of directions and sides to also include ``Up'' ($+z$ direction) and ``Down'' ($-z$ direction), resulting in the set of face directions $D = \{N,E,S,W,U,D\}$.
Similar to tiles, slats can have \emph{glues} (also referred to as \emph{handles}) on each of their $4n + 2$ faces. Each glue is identified by a string \emph{label}, and a non-negative integer \emph{strength}. Each glue has a complementary glue which shares its strength. In this paper we will often denote complementary glues using the same labels but with one appended by an asterisk (e.g.\ ``label'' and  ``label*''). Furthermore, we make a distinction between \emph{slats} and \emph{slat types}, the latter being just a description of the glues and length of a slat with no defined position or orientation. The position and orientation of slats is restricted to the 3D integer lattice and two slats which sit incident to one another are said to be \emph{attached} or \emph{bound} with strength $s$ if they share complementary glues of strength $s$ on their abutting faces. An \emph{assembly} is simply a set of slats such that no two occupy the same coordinates in $\mathbb{Z}^3$.

A \emph{slat assembly system} (SAS) $\mathcal{S} = (S, \sigma, \tau)$ consists of a finite set of slat types $S$, an assembly $\sigma$ called the \emph{seed assembly} that acts as the starting point for growth, and a positive integer $\tau$ called the \emph{binding threshold} (a.k.a. \emph{temperature}). The binding threshold describes the minimum cumulative glue strength needed for a slat to stably attach to a growing assembly. Growth in the aSAM is described by a sequence of slat attachments. Any slat which could sit on the perimeter of an assembly so that it would be attached to other slats with a cumulative strength meeting the binding threshold is a candidate for attachment, and attachments are assumed to happen non-deterministically. Any assembly that could result from a sequence of slat attachments beginning from the seed assembly of a SAS $\mathcal{S}$ and using only those slat types in the slat set of $\mathcal{S}$ is said to be \emph{producible} in $\mathcal{S}$. Any assembly that permits no additional slat attachments is called \emph{terminal}.


For all results of this paper, we work within a restricted class of systems of the aSAM satisfying the following conventions.
Slat types intended to be horizontally aligned always bind in the plane $z=1$ and we only assign glues to their $D$ faces, using only the ``starred'' versions of glue labels (i.e. those with the ``$*$'' symbol). Vertically aligned slat types always attach in the plane $z=0$, and we only assign glues to their $U$ faces, using the ``un-starred'' versions of glue labels. Additionally, we ensure that no two slats share more than one pair of complementary glues. 
Furthermore, all glues on slats are assumed to be strength-1 and each slat can only bind to any other single slat with a single glue. Therefore the temperature parameter $\tau$ effectively becomes the \emph{cooperativity} of a system (a.k.a. the \emph{coordination number}, as used in \cite{Shih-OrigamiSlats,ShihNucleation}). That is, if $\tau = c$, then each slat must cooperatively bind with $c$ other slats in order to attach to an assembly. We impose these restrictions on our designs so that their behavior is similar to the slat systems successfully experimentally demonstrated in \cite{Shih-OrigamiSlats}. Furthermore, systems with these restrictions allow for more efficient computer simulation.\footnote{A Python-based graphical simulator for the aSAM$^-$ called SlatTAS can be downloaded from \url{self-assembly.net} via a link on the page here \cite{SlatTAS}.}


\subsection{Definition of simulation of an aTAM system by an aSAM system}\label{sec:aSAM-simulation-informal}

Here we describe what is meant by an aSAM system \emph{intrinsically simulating} an aTAM system. From here on, the term ``simulation'' will refer to intrinsic simulation. This definition is analogous to the typical definition of intrinsic simulation between aTAM systems which may be found in \cite{DDDxModelSimICALP}. Due to limited space, we leave the full technical details to the Technical Appendix in Section~\ref{sec:aSAM-simulation-formal}. Here we assume that $\calT=(T, \sigma, \tau)$ is a TAS being simulated by the SAS $\calS=(S,\sigma', \tau')$.  For $\calS$ to simulate $\calT$, it must be the case that $\calS$ ``looks like'' $\calT$ at scale. To this end, we also require the definition of a \emph{macrotile representation function} $R$ and a \emph{scale factor} $m$. The function $R$ maps $m\times m$ blocks, called \emph{macrotiles}, of slat locations (which may or may not contain slats) to individual tile types in $T$. To be precise, $R$ is a partial function since it might be the case that a macrotile does not immediately map to a tile type in $T$. Once $R$ does map a macrotile to a tile type however, it must continue to map the macrotile to the same tile type regardless of any additional slats that attach within the macrotile. This property reflects the fact that tiles in the aTAM may not change type or detach after they have attached. When a slat attachment causes a macrotile to map under $R$ to tile type $t$ for the fist time, it is said that the macrotile \emph{resolves} into $t$. Applying the macrotile representation function $R$ to each macrotile of an $\calS$-assembly yields a $\calT$-assembly. This process defines the \emph{assembly representation function} $R^*$ from $\calS$-assemblies to $\calT$-assemblies. While it is allowed for macrotiles to contain slats even if it does not map to a tile type, we only allow slats to attach in macrotiles adjacent (not-diagonally) to ones which have already resolved. This prevents a ``simulator'' from growing slats to perform complex calculations in region that will never map to a tile in the simulated system and ensures that slats only grow in macrotile locations that could feasibly map to tiles. The macrotile blocks that admit slat attachments but have not yet resolved are called \emph{fuzz} regions since at a scale they resemble small hairs growing along the side of a simulated assembly.

For $\calS$ to simulate $\calT$, 3 conditions must be satisfied. First, $\calS$ and $\calT$ must have \emph{equivalent productions} meaning that $R$ surjectively maps all $\calS$-assemblies to $\calT$-assemblies and all terminal $\calS$-assemblies to terminal $\calT$-assemblies. Second $\calT$ must \emph{follow} $\calS$ meaning that all sequences of slat attachments in $\calS$ map to corresponding slat attachments in $\calT$ ($\calT$ can only do what $\calS$ does). And finally, $\calS$ must \emph{model} $\calT$ meaning that all sequences of tile attachments in $\calT$ have at least one corresponding sequence of slat attachments in $\calS$ ($\calS$ can only do what $\calT$ does). The formal definition of \emph{models} also has a provision that ensures all non-deterministic attachments in $\calT$ are truly simulated by non-deterministic attachments in $\calS$ rather than being predetermined in advance.

\section{Results}\label{sec:results}

In this section we present our results showing that classes of aTAM systems, with increasingly complex dynamics, can be simulated by aSAM systems. Each result is proven by construction and associated software for designing, converting, simulating, and visualizing these systems can be found online: \url{http://self-assembly.net/wiki/index.php/Abstract_Slat_Assembly_Model_(aSAM)}. Note that the first four results are for classes of aTAM systems defined to have $\tau=2$, but each construction trivially works for $\tau=1$ as well, simply by treating all glues of the simulated aTAM systems as $\tau$-strength. The final result is presented for $\tau=2$, but explanation of a simple extension to handle arbitrary values of $\tau$ is presented in the Technical Appendix (as are the details of most proofs) due to space constraints.

\subsection{Zig-zag systems}\label{sec:zig-zag}

Zig-zag systems are particularly interesting because, despite their incredibly limited range of dynamics, they are computationally universal \cite{CookFuSch11,DirectedNotIU,jCCSA,jSADS,SingleNegative}. Our first result shows that any zig-zag aTAM system can be simulated by an aSAM system with macrotiles of size only $2c \times 2c$.

\begin{theorem}\label{thm:zig-zag}
    Let $\calT = (T, \sigma, 2)$ be an arbitrary zig-zag aTAM system. For any $c > 2$ such that $c \mod 2 = 0$, there exists an aSAM system $\mathcal{S} = (S,\sigma',c)$ and macrotile representation function $R$ such that $\mathcal{S}$ simulates $\calT$ under $R$ using cooperativity $c$ and macrotiles of size $2c \times 2c$. Furthermore, the longest slat in $S$ is of length $3c$.
\end{theorem}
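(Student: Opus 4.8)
The plan is to prove the theorem by explicit construction: for the given zig-zag system $\calT$ I build a slat set $S$, a seed $\sigma'$, and a macrotile representation function $R$ at scale $2c \times 2c$, and then verify the three simulation conditions. First I would fix the crisscross geometry of a macrotile. Because the conventions force horizontal slats into the plane $z=1$ (carrying only starred ``$D$'' glues) and vertical slats into $z=0$ (carrying only unstarred ``$U$'' glues), every binding site in a macrotile is a crossing of one horizontal and one vertical slat. I would reserve a fixed set of rows for horizontal slats and a fixed set of columns for vertical slats, choosing their parities so that the $c$ intended binding crossings of any reader slat land on the lattice sites I want; the hypothesis that $c$ is even is exactly what makes the two interleaved tracks line up, and $c > 2$ guarantees enough room to place the distinct functional regions (input-reading, type-selection, output-presentation) without collisions.

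Second, I would encode aTAM glues as $c$-fold cooperative signatures. Since $\calT$ has finitely many glues, I assign to each aTAM glue $g$ a block of $c$ distinct slat glue labels $g_1, \ldots, g_c$; a macrotile ``presents'' $g$ on one of its sides by exposing $c$ slats carrying these labels along that edge, and a neighboring macrotile ``reads'' $g$ with a single long reader slat whose $c$ binding domains are the complements $g_1^{*}, \ldots, g_c^{*}$. Distinct glues use disjoint label blocks, so a reader for $g$ obtains its full $c$ bonds from a neighbor presenting $g$ and obtains none from a neighbor presenting any $g' \neq g$; this is what lets a single reader both meet the cooperativity threshold $\tau' = c$ and correctly identify the glue. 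A reader that must reach from the current macrotile into an adjacent one needs length $2c$ to cross its own block plus $c$ to reach the presented slats of the neighbor, which is precisely the claimed maximum length $3c$.

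Third, I would implement the zig-zag dynamics with an ordered attachment cascade inside each macrotile. A tile of $\calT$ attaches either by a single strength-$2$ bond (at a turn-around or row start) or by a cooperative pair of strength-$1$ bonds (an interior tile reading its previously placed in-row neighbor and the tile below). In the single-bond case one reader of $c$ bonds from the appropriate neighbor suffices. To faithfully simulate cooperation I would make the type-selecting slat require bonds from \emph{both} the south reader and the in-row reader: the two readers may each attach as soon as their own neighbor is present (creating at most harmless fuzz), but the slat that causes the macrotile to resolve can attach only once both inputs have been read, mirroring the two-glue cooperative attachment. Once the simulated tile type is fixed, a final group of slats attaches deterministically to present the correct output signatures on the north edge (tape contents) and on the leading horizontal edge (head and state propagation) for the next tiles to read; the seed $\sigma'$ is the macrotile encoding of $\sigma$ together with the output signatures it exposes. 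I would then define $R$ to map a macrotile to tile type $t$ exactly when its type-selecting slat for $t$ is present and leave $R$ undefined on blocks containing only reader or fuzz slats; since type-selecting slats never detach and their presence is monotone, $R$ is well defined and permanent once assigned. Equivalent productions follows by checking that $R^{*}$ carries producible (resp.\ terminal) slat assemblies onto producible (resp.\ terminal) tile assemblies; $\calT$ follows $\calS$ because the only route to resolution is the cooperative cascade, which mirrors a legal tile attachment; and $\calS$ models $\calT$ by exhibiting, for each single-tile step, the corresponding cascade, with the nondeterminism provision met because the choice of type-selecting slat is made only after the inputs are present. The bounds of at most $4c$ slats per macrotile and maximum length $3c$ are then read off the construction.

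The main obstacle will be enforcing genuine cooperation and fuzz control simultaneously inside the tight $2c \times 2c$ footprint: I must guarantee that no combination of reader and output slats accidentally supplies $c$ bonds to a type-selecting or output slat before the corresponding aTAM attachment is legal, that readers cannot cross-react between different presented glues or stray into non-adjacent macrotiles, and that slats never attach in a block whose neighbor has not yet resolved. Keeping the disjoint label blocks, the parities, and the ordered cascade mutually consistent while fitting three functional regions and $3c$-length readers into a $2c$-wide block — and confirming that reader orientation is compatible with the horizontal/vertical crisscross restriction in both the eastward and the westward rows of the alternating pattern — is where the real care is needed.
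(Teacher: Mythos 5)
Your high-level outline matches the paper's: build a $2c\times 2c$ macrotile per tile type, let a distinguished slat's attachment define $R$, seed with the macrotile of $\sigma$, and verify the three simulation conditions by an induction exploiting the fact that a zig-zag system's frontier always has size $1$. However, your central gadget --- the mechanism that enforces cooperative ($1+1$) attachments --- does not work in this model, and that is the heart of the theorem. In the restricted aSAM used here, bonds form only at crossings between a horizontal slat (plane $z=1$) and a vertical slat (plane $z=0$), and no two slats may share more than one pair of complementary glues; hence any single slat gains at most \emph{one} bond from any other single slat. Your plan is to let each input glue be presented by $c$ slats, read by one full-strength ``reader'' slat per glue, and then to have a type-selecting slat that ``requires bonds from both the south reader and the in-row reader.'' That type-selecting slat can bind each reader at most once, so it accumulates at most $2$ bonds --- strictly below the threshold $c$ whenever $c>2$ --- and the macrotile can never resolve. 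Promoting each reader to a group of $c$ slats does not repair this: the group reading the south neighbor's (vertical) presentation must be horizontal, while the group reading the east/west neighbor's (horizontal) presentation must be vertical, so the two reader groups are perpendicular to one another and no single slat can cross both. The orientation mismatch between information arriving from a vertical neighbor and from a horizontal neighbor is exactly the geometric obstruction your architecture never resolves.

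The paper's construction avoids this by changing what is presented, not by adding a reader layer: a strength-1 output glue is represented by only $c/2$ slats, and the output-slat cascade of a resolved macrotile places those $c/2$ slats (as ``fuzz'') \emph{inside} the not-yet-resolved neighboring macrotile location, oriented so that the two cooperating presentations are \emph{parallel} slats crossing the same $c\times c$ cell from opposite directions. The body slats of the new macrotile then bind directly to this union, receiving $c/2$ bonds from each neighbor's presentation and reaching the threshold $c$ only when both inputs are present; strength-2 glues are presented by all $c$ slats so they suffice alone. This is also precisely why the hypothesis that $c$ is even is needed ($c/2$ must be an integer) --- note that in your scheme, where every glue is presented by $c$ slats, the evenness hypothesis plays no real role, which is a symptom that the cooperation mechanism has been lost. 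Your length accounting ($3c$) and slat count ($4c$) happen to agree with the paper's, but they are derived from the broken reader architecture, so they cannot be ``read off'' as claimed until the gadget is replaced by one of the half-presentation type.
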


\begin{proof}
We prove Theorem \ref{thm:zig-zag} by construction, and thus, starting with arbitrary zig-zag aTAM system $\calT = (T, \sigma, 2)$ and given any $c > 2$ such that $c \mod 2 = 0$ we show how to create aSAM system $\mathcal{S} = (S, \sigma', c)$ and macrotile representation function $R$ such that $\mathcal{S}$ simulates $\calT$ under $R$. First, without loss of generality we assume that $\calT$ grows its first row from its seed tile from the right to the left (i.e. ``RtoL''), and then its second row grows immediately above that, from left to right (i.e. ``LtoR''), and then all subsequent rows zig-zag from RtoL then LtoR. (The construction could simply be rotated appropriately to handle any direction of zig-zag growth.)

\begin{figure}
    \centering
    \begin{subfigure}{0.15\textwidth}
        \centering
        \includegraphics[width=1.0\textwidth]{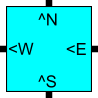}
        \caption{\label{fig:zig-zag-tile}}
    \end{subfigure}
    \hspace{20pt}
    \begin{subfigure}{0.2\textwidth}
        \centering
        \includegraphics[width=1.0\textwidth]{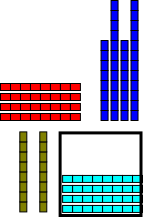}
        \caption{\label{fig:zig-zag-slat-groups}}
    \end{subfigure}
    \hspace{20pt}
    \begin{subfigure}{0.17\textwidth}
        \centering
        \includegraphics[width=1.0\textwidth]{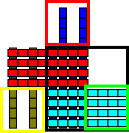}
        \caption{\label{fig:zig-zag-macrotile}}
    \end{subfigure}
    \hspace{20pt}
    \begin{subfigure}{0.15\textwidth}
        \centering
        \includegraphics[width=1.0\textwidth]{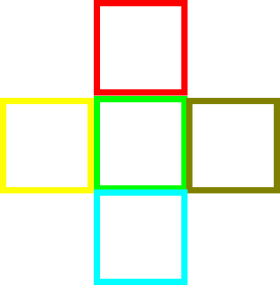}
        \caption{\label{fig:cell-coloring}}
    \end{subfigure}
    
    \caption{(a) An IO-marked tile type $t$ from a zig-zag aTAM tile set, for a row that grows RtoL. It has strength-1 inputs on the south and east, and strength-1 outputs on the north and west. (b) A set of 4 slat groups for the macrotile simulating $t$ at $c = 4$. The light blue group contains body slats that are entirely within the square of the macrotile (depicted by the black square) which they cause to resolve to $t$. The dark blue group contains two body slats and two output slats (i.e. the two extending into the north macrotile to serve as the north output of strength-1). The red and gold slat groups combined are the output slats that serve as the west output and extend into the western neighboring macrotile location. (c) An example of the assembled $2c \times 2c$ macrotile for $t$, with cells marked to show the portions of $t$ that they represent, following the conventions of (d). (d) A cell enclosed in a green square represents the cell in which the initial body slats of a macrotile bind, causing it to resolve to $t$. The cells enclosed in red, gold, light blue, and yellow squares denote the cells in which the slats expose glues representing the output glues of the north, east, south, and west sides of $t$, respectively.\label{fig:zig-zag-slats}}
    \vspace{-10pt}
\end{figure}

Each tile in $\calT$ is simulated by a macrotile of size $2c \times 2c$ in $\mathcal{S}$. For $c = 4$, this means that the $8 \times 8$ square whose southwest coordinate is $(8i,8j)$, for every $i,j \in \mathbb{Z}$, will map under $R$ to either empty space or to a tile in $\calT$. An example is shown in Figures \ref{fig:zig-zag-slats} and \ref{fig:zig-zag-growth}. Each slat in a macrotile is of a unique type.\footnote{Using techniques of \cite{SlatAcceleration}, it is possible to reuse slat types within macrotiles to reduce the slat complexity, but for ease of explanation we present our constructions without that optimization.} We use the term \emph{slat group} to refer to each set of $c$ (or sometimes $c/2$) slats that are oriented in the same direction and grouped together (both logically, and also in that each slat in a slat group can attach to a growing assembly at exactly the same time as the others in that group). (For example, in Figure \ref{fig:zig-zag-slat-groups} there are 4 slat groups.) For convenience, we will characterize all of the slat types of a macrotile in two categories: (1) \emph{body slats}: slats that are completely contained within one of the $2c \times 2c$ macrotile regions and either (a) their binding causes that region to map to a tile in $\calT$ under $R$, or (b) they bind after that macrotile has resolved, and (2) \emph{output slats}: slats that either (a) cause a macrotile to map to a tile of $\calT$ but also extend into a neighboring macrotile location, or (b) bind in a macrotile location this is unresolved both before and immediately after their binding. (In Figure \ref{fig:zig-zag-macrotile}, the 4 light blue, and the shorter 2 of the dark blue slats are body slats. The longer two dark blue, and all of the red and gold slats are output slats. Furthermore, the longer dark blue slats are of length $3c$, which is the greatest length of slats in this construction.)

\begin{figure}
    \centering
    \includegraphics[width=0.4\textwidth]{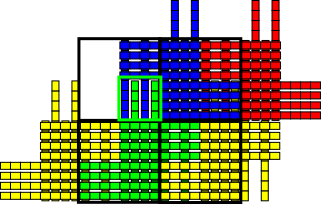}
    \caption{An example of a portion of an assembly composed of $2c \times 2c$ macrotiles (some partial) simulating a zig-zag aTAM system for $c = 4$. Four of the macrotile locations are outlined in black squares. The macrotile simulating $t$ from Figure \ref{fig:zig-zag-slats} would attach into the top left such macrotile location. Its (light blue) body slats would attach to the 2 west output slats from the macrotile to the east (dark blue) and the 2 north output slats from the macrotile to the south (green). These light blue slats can bind in any order, and as soon as one binds, the macrotile resolves to $t$. Due to the fact that $\tau = 4$, only once they have all 4 bound can the north output slats (dark blue) bind. Only once all 4 of those have bound can the red output slats bind, then finally the 2 gold output slats. Thus, the growth of a macrotile is well-ordered, and outputs are only presented after a macrotile resolves, enforcing the restrictions of simulation.\label{fig:zig-zag-growth}}
    \vspace{-10pt}
\end{figure}

Let $t_n \in T$, for $0 \le n < |T|$, be the $n$th tile in tile set $T$. We will refer to the string ``$t_n$'' as the unique name of $t_n$. We now discuss how the slats that form a macrotile simulating $t_n$ are designed.
Given the directions of growth, and the dynamics of a zig-zag aTAM system, the following list contains all possible valid signatures for $t_n$ of any zig-zag system (with the trivial exception that some tile type could have one or more fewer outputs, and the binding of such a tile into an assembly would cause growth to terminate and the assembly to become terminal, as such tiles are trivially handled by macrotiles without corresponding output slats).

\begin{enumerate}
    \item Initial (seed) row tiles (Figure \ref{fig:zig-zag-seed}):
        \begin{enumerate}
            \item Seed tile: Input=$\emptyset$, Output=(W,2),(N,1)
            \item Row interior tiles: Input=(E,2), Output=(W,2),(N,1)
            \item Leftmost tile: Input=(E,2), Output=(N,2)
        \end{enumerate}
    \item LtoR row tiles (Figure \ref{fig:zig-zag-LtoR}):
        \begin{enumerate}
            \item Leftmost tile: Input=(S,2), Output=(E,1),(N,1)
            \item Row interior tiles: Input=(W,1),(S,1), Output=(E,1),(N,1)
            \item Right row pre-extension tile: Input=(W,1),(S,1), Output=(E,2),(N,1)
            \item Right row extension tile: Input=(W,2), Output=(E,2),(N,1)
            \item Rightmost tile: Input=(W,2), Output=(N,2)
        \end{enumerate}
    \item RtoL row tiles (Figure \ref{fig:zig-zag-RtoL}):
        \begin{enumerate}
            \item Rightmost tile: Input=(S,2), Output=(W,1),(N,1)
            \item Row interior tiles: Input=(E,1),(S,1), Output=(W,1),(N,1)
            \item Left row pre-extension tile: Input=(E,1),(S,1), Output=(W,2),(N,1)
            \item Left row extension tile: Input=(E,2), Output=(W,2),(N,1)
            \item Leftmost tile: Input=(E,2), Output=(N,2)
        \end{enumerate}
\end{enumerate}

Figures \ref{fig:zig-zag-seed}-\ref{fig:zig-zag-RtoL} show tiles with those signatures and their corresponding \emph{macrotile templates}, which are sets of slat groups that correspond to the particular set of input and output glue directions and strengths that a simulated tile has.
To build $S$, for each $t_n$, we instantiate the macrotile template associated with $t_n$'s signature.
Instantiating a macrotile consists of first making a unique copy of each slat type in the macrotile template whose name has the prefix ``$t_n$'' prepended to the unique name of that slat type in the macrotile template.
We will refer to the set of slats for the macrotile template instantiated for $t_n$ as $S_n$.
For every location where a vertical slat of $S_n$ is at the same $(x,y)$ coordinates as a horizontal slat of $S_n$ (but under it since it will be at $z=0$ and the horizontal slat at $z=1$), an un-starred glue unique to that location is placed there on the vertical slat, and the starred complement of that glue is placed on the horizontal slat. (For technical details about the conventions used to generate the unique glue labels see Section \ref{sec:glue-naming} of the Technical Appendix.) All such glues are also given the prefix ``$t_n$'' to ensure that they will not match glues of macrotiles instantiated for any other tiles. These glues are called \emph{interior glues}, since they bind slats of the same macrotile to each other.

\begin{figure}
    \centering
    \includegraphics[width=2.5in]{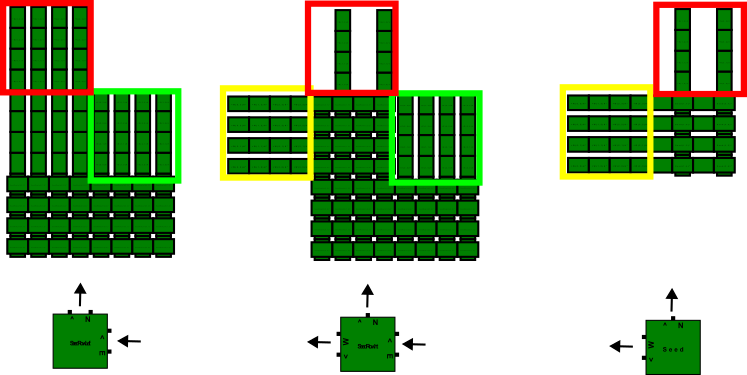}
    \caption{The tiles for the seed and initial row of a zig-zag aTAM system (that grows RtoL from the seed tile), and their corresponding slat-based macrotile templates for $c = 4$. The rightmost corresponds to the seed tile, the middle corresponds to the interior tiles of the initial row, and the leftmost corresponds to the leftmost tile of the initial row. Cells are bounded by squares to show their functionality, following the coloring convention from Figure \ref{fig:cell-coloring}.\label{fig:zig-zag-seed}}
\end{figure}

The final step of building $S_n$ for $t_n$ is to account for the glues of $t_n$ and to place glues on the slats of $S_n$ to cause their behavior to be simulated. To do this, within the $c \times c$ cell representing a glue of $t_n$ (whose location is determined by the particular macrotile template that matches $t_n$'s signature), we label the glue in each location of each slat with the same glue label as the corresponding glue on $t_n$, followed by the cell coordinates of the location (i.e. ``$(i,j)$'' for $0 \le i,j < c$, with $(0,0)$ being the south-westernmost location), followed by a star for glues on horizontal slats. This guarantees that each glue label in each cell is unique. (Note that for some later constructions an additional ``marker'' symbol may be necessary for these glue labels, and details can be found in the Technical Appendix in Section \ref{sec:glue-naming}.)
Output glues of strength-2 on $t_n$ are represented by $c$ slats filling a $c \times c$ cell, allowing slats of the opposite orientation to attach to the assembly by binding solely to them (analogous to the strength-2 glue of $t_n$ being sufficient to allow a tile attachment). Output glues of strength-1 are represented by just $c/2$ slats extending across a $c \times c$ cell. Because of this, $c/2$ additional slats extending across that cell from the opposite direction, representing the strength-1 output glue of an adjacent macrotile, are required before the necessary glues are in place to allow body slats for the next macrotile to attach. In this way, cooperative behavior is enforced. An example can be see in Figure \ref{fig:zig-zag-growth}.

\begin{figure}
    \centering
    \includegraphics[width=5in]{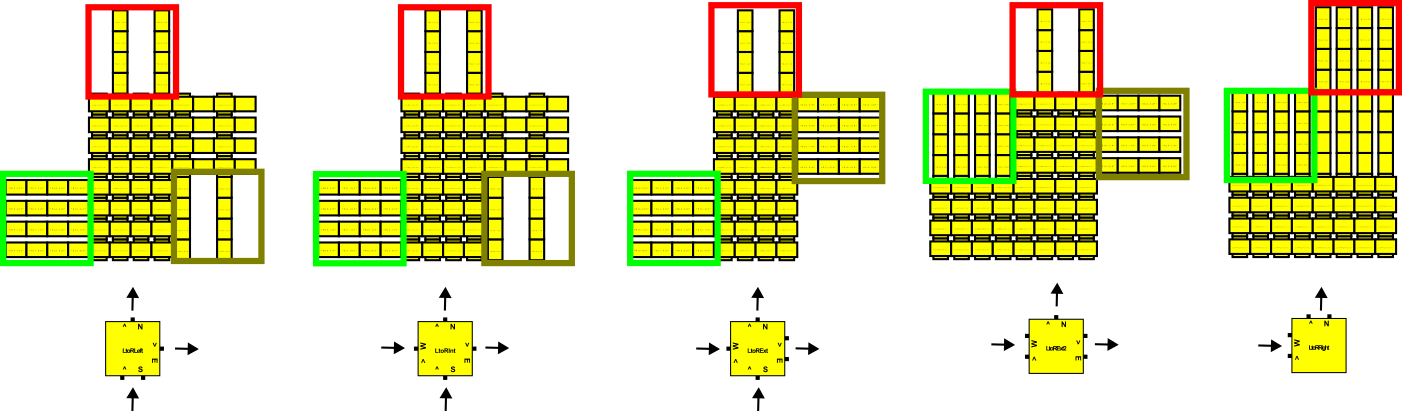}
    \caption{Tile types of all possible valid signatures for tiles of a row that grows LtoR in a zig-zag aTAM system, and their corresponding slat-based macrotile templates for $c = 4$. Cells are bounded by squares to show their functionality, following the coloring convention from Figure \ref{fig:cell-coloring}.\label{fig:zig-zag-LtoR}}
\vspace{-10pt}
\end{figure}

The design of the conventions and macrotile templates guarantee that slats can bind only in desired locations, and that they do so with total binding strength exactly $c$. For a vertical slat to attach, it must initially bind with $c$ distinct horizontal slats, and vice versa. (Recall that only strength-1 glues are used, and also that all vertical slats have all glues on their $U$ sides, which are un-starred, and all horizontal slats have all glues on their $D$ sides, which are starred.) From Figure \ref{fig:zig-zag-slats} it is clear to see how an individual macrotile (representing a tile in a RtoL row) assembles in a well-ordered progression, causing it to first resolve and only then attach slats that provide the outputs. 
It is also clear how any tile from $T$ can be converted to a set of slats that will simulate it in a similar way, simply noting the signature for any such tile and the macrotile template for the matching signature, selected from those shown in Figures \ref{fig:zig-zag-seed}-\ref{fig:zig-zag-RtoL}.

\begin{figure}
    \centering
    \includegraphics[width=5in]{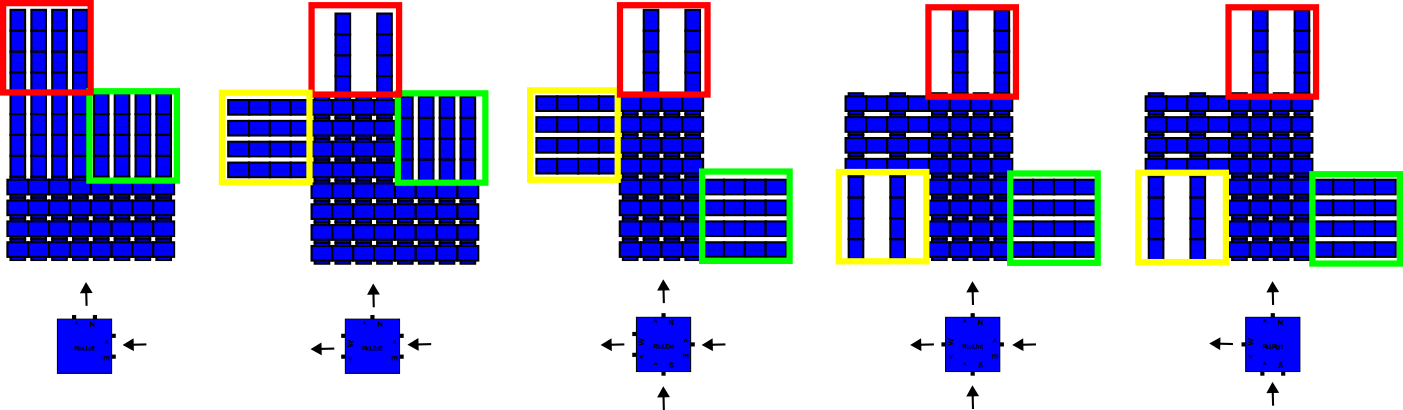}
    \caption{Tile types of all possible valid signatures for tiles of a row that grows RtoL in a zig-zag aTAM system, and their corresponding slat-based macrotile templates for $c = 4$.  Cells are bounded by squares to show their functionality, following the coloring convention from Figure \ref{fig:cell-coloring}.\label{fig:zig-zag-RtoL}}
\vspace{-10pt}
\end{figure}

By inspecting the macrotile templates associated with the valid zig-zag tile signatures, it can be verified that the outputs of any macrotile are always positioned correctly for the binding of body slats of a macrotile that needs to use those as inputs, while keeping that next macrotile in the correct relative position. The careful design of the glues ensures that only the slats designed to attach to any given location can do so.
The macrotile representation function $R$ can simply contain a mapping of body slats to the tile types from which they were derived and use that mapping for any macrotile location containing a body slat, while mapping any macrotile location without a body slat to an empty location. The seed $\sigma'$ simply consists of the set of slats of the macrotile created for the seed tile of $\calT$, which has exactly one $c \times c$ cell where there are $c$ slats representing the strength-2 output glue of $\calT$'s seed and to which a slat can bind. 
Starting from this assembly it is also clear to see that, as the macrotiles of $\mathcal{S}$ assemble, there will always be exactly one $c \times c$ cell in which slats can bind. In $\calT$, the frontier is always of size 1, and any slats that can attach in $\mathcal{S}$ will either be (1) body slats that cause the macrotile location mapping to that frontier location to resolve under $R$ into the next tile that could attach in $\calT$, or (2) body or output slats of the macrotile that was most recently resolved in that frontier location. This provides an inductive argument where the inductive hypothesis is that, given an assembly $\beta$ producible in $\mathcal{S}$ mapping under $R$ to assembly $\alpha$ producible in $\calT$, the exposed glues on $\beta$ allow exactly one macrotile, mapping to the correct next tile of $\alpha$ under $R$, to assemble next. This is true of the seed macrotile (the base case), and also given any assembly producible via the macrotiles generated by the macrotile templates shown in Figures \ref{fig:zig-zag-seed}-\ref{fig:zig-zag-RtoL}, so the induction holds and $\mathcal{S}$ correctly simulates $\calT$ under $R$. Therefore, $\mathcal{S}$ simulates $\calT$, an arbitrary zig-zag aTAM system, under $R$ using cooperativity $c$ and macrotiles of size $2c \times 2c$ with the longest slats being of length $3c$ and Theorem \ref{thm:zig-zag} is proven.
\end{proof}


\subsection{Standard systems}\label{sec:standard} 

Next, we prove that by only slightly increasing the scale factor of the simulation, i.e. the size of macrotiles, from $2c \times 2c$ for zig-zag systems to $3c \times 3c$, that any standard aTAM system can be simulated by an aSAM system. Since the majority of aTAM constructions in the literature are standard systems (e.g. \cite{jCCSA,jSADS,SolWin07,RotWin00}), this shows that a very modest scale factor can be used to simulate a huge diversity of very complex aTAM systems.

\begin{theorem}\label{thm:standard}
    Let $\calT = (T,\sigma,2)$ be an arbitrary standard aTAM system. For any $c > 2$ such that $c \mod 2 = 0$, there exists an aSAM system $\mathcal{S} = (S,\sigma',c)$ such that $\mathcal{S}$ simulates $\calT$ using cooperativity $c$ and scale factor $3c$. Furthermore, the longest slat in $S$ is of length $3c$.
\end{theorem}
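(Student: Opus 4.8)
The plan is to mirror the structure of the zig-zag proof and again proceed by construction: given $\calT=(T,\sigma,2)$ and an even $c>2$, I would (i) enumerate every signature a tile of a standard system can have, (ii) design a macrotile template for each such signature, now laid out as a $3\times 3$ arrangement of $c\times c$ cells rather than the $2\times 2$ arrangement used for zig-zag, and (iii) instantiate these templates per tile $t_n$ with $t_n$-prefixed glue labels (interior glues binding slats of the same macrotile, and cell-coordinate-suffixed glues representing the actual glues of $t_n$) exactly as before, so that slats of distinct macrotiles cannot interfere. The reason the scale must grow from $2c$ to $3c$ is that a zig-zag tile always receives input from the south (and possibly one of east/west) and always outputs northward, so a given side is never both an input and an output across the tile set; in a standard system any of the four sides may serve as either input or output depending on the tile, so the macrotile needs a dedicated directional cell on each of its four sides together with a central resolution cell, i.e.\ a $3\times 3$ block. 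Since standard systems forbid across-the-gap attachment, the two strength-1 inputs of a cooperative tile always lie on adjacent (perpendicular) sides, which is precisely the kind of cooperation the zig-zag construction already handles; the only genuinely new requirement is that the same body region must work for every rotation of the input/output pattern.

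Concretely, I would use the central $c\times c$ cell as the resolution cell and the four edge-center cells as the directional glue cells, reusing the zig-zag conventions: a strength-2 output fills its cell with $c$ slats (enough to let body slats of the neighbor bind on their own), while a strength-1 output places only $c/2$ slats, so that a neighboring macrotile's reciprocal $c/2$ slats must also arrive before the full $c$ cooperating slats are present and body slats can attach. This is exactly what enforces cooperativity, and it is independent of which adjacent pair of sides supplies the two inputs. Routing a body or output slat from the central cell out through an edge cell and into the adjacent macrotile is what produces the length-$3c$ slats matching the claimed bound, and the corner cells are used only for whatever routing is needed to bring the two perpendicular inputs into register with the resolution cell. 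As in the zig-zag case, the well-ordered progression (body slats resolve the macrotile first, and only after all $c$ of them bind may the output slats begin attaching) guarantees that outputs are never presented before a macrotile resolves.

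For the simulation proof itself I would verify the three conditions of intrinsic simulation. The representation function $R$ maps any macrotile location containing a body slat to the tile from which that body slat was derived, and an empty macrotile to empty space, inducing $R^*$; equivalent productions together with the ``follows'' and ``models'' conditions then follow from an induction on assembly sequences, with inductive hypothesis that every producible $\mathcal{S}$-assembly $\beta$ satisfies $R^*(\beta)=\alpha$ for some producible $\alpha$ of $\calT$, and that the macrotile growth sites available in $\beta$ are exactly the frontier locations of $\alpha$, each admitting precisely the body slats of the unique tile that $\calT$ places there (well-defined by directedness). The difference from the zig-zag argument is that the frontier of a standard system need not be a single location, so the induction must track a set of concurrently-growable macrotiles rather than one; here I would lean on the defining properties of standard systems --- directedness, minimal-strength (exactly-$2$) attachment, and the absence of mismatches --- to argue that each frontier macrotile resolves to one and only one tile and that distinct frontier macrotiles grow independently, so that every attachment order of $\calT$ is reproducible in $\mathcal{S}$ and conversely.

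The main obstacle I anticipate is the macrotile geometry for cooperative attachment: I must exhibit, for every adjacent input pair and every assignment of output strengths to the remaining sides, an explicit placement of body, interior, and output slats inside a single $3c\times 3c$ block, with glue labels chosen so that (a) no body slat can bind until both strength-1 inputs (or the single strength-2 input) are fully present, (b) no output slat is exposed until the macrotile has resolved, and (c) no slat can bind in any unintended location or orientation, all while keeping every slat within length $3c$. Verifying this no-spurious-binding property uniformly across all rotations and output combinations, rather than for a single representative figure, is where the real case analysis lies; the inductive simulation argument is then largely bookkeeping once the templates are shown to behave correctly.
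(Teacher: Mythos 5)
Your proposal takes essentially the same approach as the paper's proof: per-signature macrotile templates at scale $3c$ instantiated with the zig-zag glue-labeling conventions, cooperativity enforced by presenting $c$ slats for strength-2 outputs versus $c/2$ slats for strength-1 outputs (so that a cooperating partner's reciprocal $c/2$ slats are required), the same body-slat-based representation function $R$, and the same inductive simulation argument carried over from the zig-zag case. The only substantive difference is in layout rather than in method --- the paper uses input-dependent macrotile geometries (distinct templates for strength-2 versus adjacent strength-1 inputs, together with per-direction, per-strength output slat templates) instead of your uniform central-resolution-cell scheme, and, just as you anticipate, it discharges the cooperative-geometry case analysis by explicitly exhibiting those templates; your added care about frontiers of size greater than one is a point the paper itself passes over quickly.
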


The simulation construction for standard aTAM systems is very similar to the construction for zig-zag systems, except that a larger set of input and output direction combinations need to be considered. To accommodate this change, the macrotiles during standard aTAM simulations are $3c \times 3c$ instead of $2c \times 2c$, but the argument is essentially unchanged. Strength-$2$ glues are still simulated by leaving all $c$ slats available for binding in a corresponding macrotile cell while strength-$1$ glues are simulated by using half this many from each of two outputs. However, the geometries of the macrotiles are a bit different. To handle the more diverse sets of signatures, our construction makes use of macrotile templates of different geometries for tiles with strength-2 input glues (see Figures~\ref{fig:standard_3x3_S2} and \ref{fig:standard_3x3_S2_example_example}) versus those with two strength-1 input glues (see Figure~\ref{fig:standard_3x3_SW_example}), as well as \emph{output slat templates} that differ for strength-2 versus strength-1 output glues as well as for those that are on vertical sides (north, south) versus horizontal sides (east, west) (see Figure~\ref{fig:standard_3x3_S2_example_example}). Otherwise, all of the same techniques from the proof of Theorem~\ref{thm:zig-zag} apply. Figure~\ref{fig:standard_3x3_examples} shows a few example macortiles, and the full proof is in Section~\ref{sec:standard-append} of the Technical Appendix.



\begin{figure}
\centering
\begin{subfigure}{0.15\textwidth}
    \centering
    \includegraphics[width=1.0\textwidth]{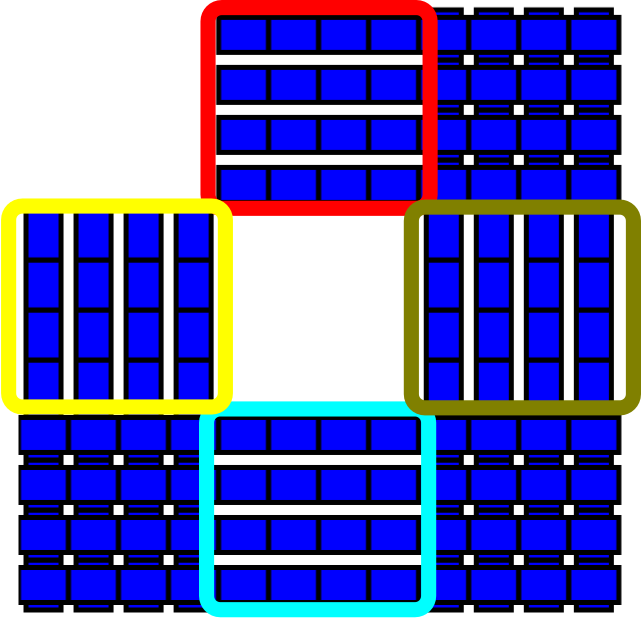}
    \caption{\label{fig:standard_3x3_S2}}
\end{subfigure}
\hfill
\begin{subfigure}{0.225\textwidth}
    \centering
    \includegraphics[width=1.0\textwidth]{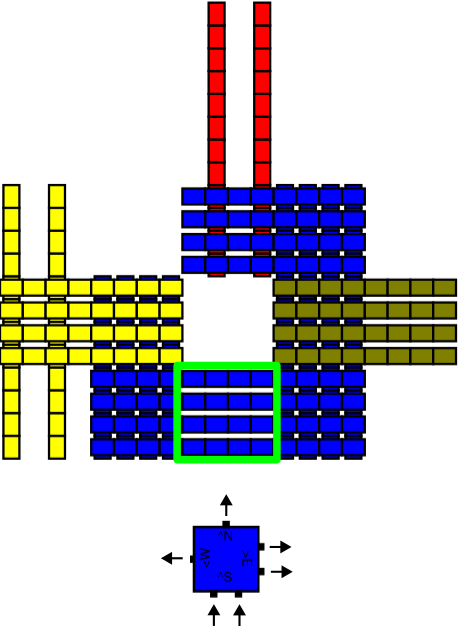}
    \caption{\label{fig:standard_3x3_S2_example_example}}
\end{subfigure}
\hfill
\begin{subfigure}{0.1875\textwidth}
    \centering
    \includegraphics[width=1.0\textwidth]{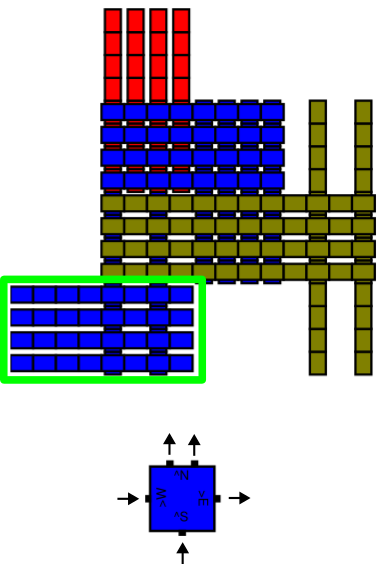}
    \caption{\label{fig:standard_3x3_SW_example}}
\end{subfigure}
\caption{(a) Strength-2 macrotile template for a standard aTAM system. Cells are bounded by squares to show their functionality, and mark cell locations where output slat templates may be added to the macrotile. One of the marked cells may be designated as an input, and have its domains assigned such that they connect with those provided by the output slats of a neighboring macrotile whose output is of the same glue type. Cells are signified using the same color conventions as Figure~\ref{fig:cell-coloring}. (b) Macrotile experiencing south strength-2 input. (b) Macrotile exhibiting south and west strength-1 inputs. Output slat templates are colored in accordance to Figure~\ref{fig:cell-coloring}, and input domain locations are marked with a green box. \label{fig:standard_3x3_examples}}
\vspace{-10pt}
\end{figure}



\subsection{Standard with across-the-gap simulation}\label{sec:standard-across-the-gap}
Next, we prove that by only slightly increasing the maximum slat length of the simulation, from $3c$ to $4c$, any standard aTAM system with across-the-gap cooperation can be simulated by an aSAM system supporting both types of cooperative binding (adjacent and across-the-gap). 

\begin{theorem}\label{thm:standard+across-the-gap}
    Let $\calT = (T,\sigma,2)$ be an arbitrary standard with across-the-gap aTAM system. For any $c > 2$ such that $c \mod 2 = 0$, there exists an aSAM system $\mathcal{S} = (S,\sigma',c)$ such that $\mathcal{S}$ simulates $\calT$ using cooperativity $c$ and macrotiles of size $3c \times 3c$. Furthermore, the longest slat in $S$ is of length $4c$.
\end{theorem}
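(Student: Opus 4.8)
The plan is to build directly on the standard-system construction of Theorem~\ref{thm:standard}, reusing its conventions verbatim: horizontal slats in $z=1$ carrying starred $D$-glues, vertical slats in $z=0$ carrying unstarred $U$-glues, strength-$2$ glues simulated by a full $c\times c$ cell of slats while strength-$1$ glues are simulated by $c/2$ slats drawn from each of two cooperating sources, and interior glues wiring each macrotile together. The macrotiles stay $3c\times 3c$, and the representation function $R$, the seed $\sigma'$, and the cell-by-cell glue-naming scheme are inherited unchanged. The only genuinely new ingredient is a family of macrotile templates for the signatures that cannot occur in standard systems: those whose two strength-$1$ input glues lie on \emph{opposite} sides, i.e. the input pairs $\{N,S\}$ and $\{E,W\}$, rather than on two adjacent sides.

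First I would enumerate the additional valid signatures produced by across-the-gap cooperation and, for each, fix where in the $3\times 3$ grid of cells the two opposite input domains are received and where the outputs are presented. In the adjacent case the two strength-$1$ inputs meet near a shared corner, so a single initial body slat can cross the $c/2$ input slats arriving from each side and collect its full $c$ bonds locally. The obstacle for across-the-gap inputs is geometric: the two input regions sit at opposite ends of the macrotile, separated by the full $3c$ span, so no slat positioned as in the standard construction can simultaneously cross glue-bearing slats arriving from both sides. To fix this I would route the two opposite input bundles into a common central band of the macrotile, so that a single \emph{bridging} body slat, oriented perpendicular to both bundles, crosses $c/2$ slats derived from one side and $c/2$ from the other and attaches only when both inputs are present. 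Accounting for the span this bridging slat must cover to reach both input regions together with the overhang needed to collect the cooperating glues, the longest slats grow from $3c$ to $4c$, which is exactly the claimed bound; every other slat is unchanged and remains of length at most $3c$.

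With the new templates in hand, the correctness argument is identical in structure to that of Theorem~\ref{thm:zig-zag} and Theorem~\ref{thm:standard}. I would verify that each new template is \emph{well-ordered}: since the bridging body slat requires $c/2$ bonds from each of the two opposite inputs and $c/2 < c$, neither input alone meets the threshold $\tau = c$, so the macrotile cannot resolve until both neighbors have presented their outputs, and (as before) a macrotile exposes its output glues only after it has resolved. Plugging this into the same inductive hypothesis used previously—that every producible $\calS$-assembly maps under $R$ to a producible $\calT$-assembly and that the glues exposed on $\beta$ permit exactly the macrotiles corresponding to the tiles attachable in the represented $\calT$-assembly—establishes equivalent productions, that $\calT$ follows $\calS$, and that $\calS$ models $\calT$, including the provision handling any remaining nondeterminism.

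The step I expect to be the main obstacle is the geometric design of the bridging template. I must guarantee that the two opposite input bundles can be pulled into a shared region crossed by one body slat while (i) preserving genuine cooperation, so each side contributes exactly $c/2$ necessary bonds and a macrotile never resolves from a single input; (ii) ensuring the unique interior-, input-, and output-glue labels prevent any unintended slat from binding in the widened central band or from matching across macrotiles; and (iii) confirming that the lengthened bridging slats top out at exactly $4c$ and that supporting adjacent and across-the-gap cooperation simultaneously introduces no new binding sites beyond the permitted fuzz regions. Once the template geometry is pinned down, the remaining verification parallels the standard-system proof and is routine.
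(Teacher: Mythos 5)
Your proposal takes essentially the same approach as the paper: it keeps the $3c \times 3c$ standard-system framework and glue conventions, adds new macrotile templates for the opposite-side (N--S and E--W) input signatures, and resolves across-the-gap cooperation by routing both opposite inputs into the central cell of the macrotile, where body slats bind $c/2$ glues from each side so that neither neighbor alone can trigger resolution. The only discrepancy is bookkeeping: in the paper the length-$4c$ slats are the neighbors' \emph{output} slats (extended by $c$ so they reach the center cell), not the bridging body slats themselves, but this does not affect the claimed $4c$ bound or the correctness argument.
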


The construction for Theorem~\ref{thm:standard+across-the-gap} is similar in form to the previous two, so here we just describe a few important features. As with the proof of Theorem~\ref{thm:standard}, there are distinct macrotile templates for tiles with strength-2 inputs, but here there are also distinct macrotile templates for tiles with two strength-1 input glues that are adjacent and those that are across-the-gap. Again, there are output slat templates specific to strengths and orientations. Specifically, across-the-gap cooperation is handled in the center cell of each macrotile. Growth in each macrotile is otherwise very similar. All of the same techniques from the proof of Theorem~\ref{thm:zig-zag} apply. Full details may be found in Section \ref{sec:standard-across-the-gap-append} of the Technical Appendix.

\subsection{Directed temperature-2 simulation}\label{sec:determ-with-mismatches}

\begin{theorem}\label{thm:deterministic+mismatches}
    Let $\calT = (T,\sigma,2)$ be an arbitrary directed temperature-2 aTAM system. For any $c > 2$ such that $c \mod 2 = 0$, there exists an aSAM system $\mathcal{S} = (S,\sigma',c)$ such that $\mathcal{S}$ simulates $\calT$ using cooperativity $c$ and macrotiles of size $4c \times 4c$. Furthermore, the longest slat in $S$ is of length $4c$.
\end{theorem}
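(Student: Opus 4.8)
The plan is to proceed by construction, in close analogy with the proofs of Theorems~\ref{thm:standard} and~\ref{thm:standard+across-the-gap}, reusing essentially all of their machinery but enlarging the macrotiles from $3c \times 3c$ to $4c \times 4c$ in order to accommodate the single new phenomenon of this class: \emph{mismatches}. The first step is an observation that sharply limits where mismatches can occur. Because $\calT$ is IO-marked with minimal input sets, every input-marked side of a tile is necessarily bound at the moment that tile attaches (an unbound input-marked side would contradict minimality), and at $\tau = 2$ this same minimality forces the input of every tile to be either a single strength-$2$ glue or exactly two strength-$1$ glues, so no overbinding survives the IO-marking preprocessing. Consequently a mismatch can arise only on a pair of abutting sides that are both \emph{output}-marked (or null). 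The input-handling portion of the construction is therefore identical to that of Theorem~\ref{thm:standard+across-the-gap}, namely the same family of templates indexed by the input signature (single strength-$2$, an adjacent strength-$1$ pair, or an across-the-gap strength-$1$ pair, the last resolved through the macrotile's interior with paired slats so that the longest slats stay of length $4c$), and the only genuinely new design problem is that two adjacent macrotiles may now each wish to push output slats across their shared boundary.

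To handle this, I would route output slats so that reciprocal outputs never collide. On each of the four sides of a $4c \times 4c$ macrotile I would reserve two disjoint cells: an \emph{incoming} cell, into which a neighbor's output slats protrude and to which this macrotile's body slats bind whenever that side is an input, and an \emph{outgoing} cell, out of which this macrotile's own output slats protrude into the neighbor. By arranging the outgoing cell on a side to land in the neighbor's incoming cell on the opposite side, while the neighbor's outgoing cell on that opposite side lands in this macrotile's incoming cell, the two reciprocal sets of output slats of a mismatched boundary occupy genuinely distinct $\mathbb{Z}^3$ locations and hence cannot overlap. The extra ring of cells provided by the $4c$ scale is exactly what makes room for a separate incoming and outgoing cell on every side at once while still leaving the interior free for across-the-gap resolution. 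As in all previous constructions, every glue label carries the simulated tile's name as a prefix together with its cell coordinates, so the output slats of two distinct mismatched macrotiles are provably non-complementary and cannot bind spuriously; unused output slats simply remain inert.

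I expect the main obstacle to be verifying that a \emph{single} global assignment of incoming and outgoing cells can be fixed once and for all, consistently across every signature template, so that for every ordered pair of tile types that may ever sit adjacent in a producible assembly of $\calT$ the two resulting macrotiles are mutually compatible: their slats neither overlap nor bind undesirably, a mismatched neighbor's protruding outputs do not obstruct the later macrotile's resolution or block it from presenting its remaining outputs, and an output pushed into an empty (fuzz) neighbor never suffices to resolve a spurious macrotile. Since the sides, strengths, and input types are finite, I would discharge this by the same finite case analysis over edge configurations used to justify the figures of the earlier constructions.

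Finally, I would prove the three simulation conditions by the inductive argument of the proof of Theorem~\ref{thm:zig-zag}, with the invariant that every producible $\calS$-assembly $\beta$ maps under $R$ to a producible $\calT$-assembly $\alpha$ and that the glues exposed on $\beta$ permit exactly the macrotile resolutions corresponding to the tiles addable to $\alpha$. For \emph{equivalent productions} and for the \emph{follows} direction, the key point is that a mismatched output-against-output boundary hosts only inert, uniquely labeled output slats that cannot seed any spurious body-slat binding, so no slat attachment in $\calS$ resolves a macrotile into a tile $\calT$ could not add. Here directedness does real work: since $\calT$ has a unique terminal assembly, at most one tile type is ever placed at any given location across all producible assemblies, so there is no true nondeterministic competition at a frontier. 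The nondeterminism provision of the \emph{models} condition is therefore vacuous, and \emph{models} reduces to exhibiting, for each tile added in $\calT$, the corresponding well-ordered macrotile growth in $\calS$, which the templates supply exactly as before. This sidesteps the machinery for representing and committing to several competing tile types that forces the $5c$ scale of Theorem~\ref{thm:full-aTAM}, and completes the argument that $\calS$ simulates $\calT$ at scale $4c$ with longest slat $4c$.
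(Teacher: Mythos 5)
Your reduction of mismatches to output-versus-output (or output-versus-null) abutments, your use of directedness to conclude that only one tile type can ever occupy a given location (which indeed trivializes the nondeterminism provision of \emph{models}), and your overall architecture of signature-indexed templates at scale $4c$ all match the paper's proof in Section~\ref{sec:determ-with-mismatches-append}. The genuine gap is in the one place you depart from the paper: the mechanism that is supposed to make a mismatched neighbor's protruding output slats harmless. You justify inertness by claiming that ``every glue label carries the simulated tile's name as a prefix,'' so that mismatched outputs are provably non-complementary to everything around them. That premise is false, and it \emph{must} be false for any of these constructions to work: only \emph{interior} glues carry the tile name. The glues representing simulated input/output glues are deliberately tile-agnostic (aTAM glue label plus cell coordinates plus marker; see Section~\ref{sec:glue-naming}), because an output representation emitted by a macrotile for tile type $t$ has to be able to bind the body slats of a macrotile for \emph{any} tile type $t'$ whose input matches --- if those labels carried $t$'s name, macrotiles could not feed one another at all. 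Moreover, even the corrected statement (distinct aTAM glue labels yield non-complementary slat glues) only rules out the two mismatched outputs binding \emph{to each other}, which was never the threat; the threat is a third-party tile type $t'\in T$ whose input signature matches the mismatched glue.

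Concretely, under your disjoint-lane scheme the glue $g_A$ output by macrotile $A$ lands in macrotile $B$'s \emph{incoming} lane, which your design guarantees is clean and unoccupied, and there it presents exactly the same glue pattern as a legitimate input presentation of $g_A$. If some $t'$ has a matching input signature, its body slats can bind there unobstructed. (Directedness does ensure $B$ resolved first in this case, so the representation function is not corrupted, but the spurious cascade of $t'$ inside $B$'s macrotile can then race $B$'s own cascade, potentially blocking $B$'s legitimate outputs --- breaking equivalent productions on terminal assemblies --- or pushing $t'$'s outputs into adjacent macrotile locations, breaking \emph{follows}.) The paper avoids this precisely by \emph{not} de-conflicting the lanes: incoming mismatched outputs occupy the same cells as the receiving macrotile's own output machinery for that side, and the templates are checked (Figures~\ref{fig:deterministic_4x4_S2}--\ref{fig:deterministic_4x4_ATG}) so that an intrusion has no binding interactions with the resident macrotile and can contest only the one resource that is provably unnecessary, namely the reciprocal output toward the already-resolved neighbor; this overlap is exactly the geometric blocking that keeps foreign cascades from ever getting started. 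Your proposed finite case analysis would not surface the problem, because it is premised on the incorrect inertness claim; to repair the proof you would need either to reinstate the paper's overlapping geometry or to add an explicit blocking/ordering argument showing that no third-party body-slat group can ever initiate a cascade inside a resolved macrotile.
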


This construction follows the same general format of the previous 3, though we defer it to Section~\ref{sec:determ-with-mismatches-append} of the Technical Appendix due to space constraints. Just as with the previous simulations, strength-2 aTAM glues are simulated using $c$ slats in neighboring macrotiles, while strength-1 glues are simulated using half that many. The main difference between this simulation and the previous ones comes from the fact that mismatches are allowed to occur in the simulated aTAM system. In order to simulate this behavior, it must be guaranteed that the presence of any additional output slats in a macrotile for any tile type $t$ both do not prevent the macrotile from resolving to $t$, and do not block any outputs from $t$ apart from those which are already occupied. To accommodate this, the scale factor is increased to provide specific macrotile cells wherein body slats may attach for every combination of potential input glues. To this end, output slats are also generally longer and reach into multiple cells of the adjacent macrotiles in order to ensure that there are cells corresponding to each combination of glues that may contribute to simulate a tile attachment.



\subsection{Full aTAM simulation}\label{sec:full-sim}

In this section, we present a theorem stating that all temperature-2 aTAM systems can be simulated by aSAM systems and give a brief overview of the proof's construction. Due to space constraints, we sketch our construction and just mention that arbitrary temperatures can be handled, and further details can be found in Section \ref{sec:full-sim-appendix} of the Technical Appendix.

\begin{theorem}\label{thm:full-aTAM}
    Let $\calT = (T, \sigma, 2)$ be an arbitrary aTAM system. For any $c > 2$ such that $c \mod 2 = 0$, there exists an aSAM system $\mathcal{S} = (S,\sigma',c)$ such that $\mathcal{S}$ simulates $\calT$ using cooperativity $c$. Furthermore this simulation uses a scale factor of $5c$ and a maximum slat length of $5c$.
\end{theorem}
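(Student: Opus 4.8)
The plan is to build the full-aTAM simulation as an extension of the construction behind Theorem~\ref{thm:deterministic+mismatches} (directed temperature-$2$ systems), adding the single feature that distinguishes the full aTAM from every earlier class: genuine location-level nondeterminism, in which several distinct tile types can attach at the same frontier location by reading identical input glues. First I would convert $\calT$ to IO-marked form with minimal input sets; as in Section~\ref{sec:aTAM-classes} this forces every input signature to be either a single strength-$2$ glue or a pair of strength-$1$ glues (possibly across-the-gap), so no input overbinding must be handled directly, while mismatched output neighbors and multiple terminal assemblies are now permitted. I would then enumerate the finitely many signatures and attach to each a $5c \times 5c$ macrotile template built from slat groups exactly in the style of the previous proofs: strength-$2$ glues simulated by a full $c$-slat cell, strength-$1$ glues by two cooperating half-cells of $c/2$ slats, output-slat templates specialized by strength and by vertical-versus-horizontal orientation, and the ``reaching-into-adjacent-cells'' device of Theorem~\ref{thm:deterministic+mismatches} that keeps spurious output slats produced by mismatches from blocking legitimate outputs.

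The heart of the argument, and the one genuinely new ingredient, is the nondeterministic choice mechanism. For each set of tile types that read identical input glues (and hence compete wherever that input is exposed), I would design their macrotile templates so that each one's assembly is triggered by a single \emph{committing} body slat, and I would place all of these committing slats at the same lattice coordinates with the same orientation. Each committing slat carries the glues needed to bind cooperatively with total strength $c$ to the shared input: a single strength-$2$ input supplies all $c$ partners from one cell, whereas two strength-$1$ inputs each supply $c/2$, forcing the committing slat to span from one input region to the other. Since two slats of the same orientation cannot occupy the same coordinates, at most one committing slat can bind, and whichever binds first irrevocably selects its tile type; the $t$-prefixed interior glues it exposes then admit only the remaining slats of that tile's macrotile, which resolve the cell and subsequently present its outputs in the well-ordered fashion established in Theorem~\ref{thm:zig-zag}. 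This simultaneously yields mutual exclusion (one tile per location) and genuine nondeterminism (every competitor's committing slat is individually bindable from the same configuration), which is precisely what the nondeterminism provision of the \emph{models} condition demands. The increase over the directed case to a scale factor of $5c$ and a maximum slat length of $5c$ is exactly what is needed to provide a dedicated central decision region and to let a single committing slat simultaneously reach an across-the-gap pair of strength-$1$ inputs on opposite sides of the macrotile and commit the cell.

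Correctness would then follow the inductive template of the earlier proofs, but with the inductive hypothesis strengthened to a set-valued statement: given a producible $\calS$-assembly $\beta$ with $R^*(\beta)=\alpha$ producible in $\calT$, the glues exposed on $\beta$ permit exactly the committing slats of the tile types in $\partial^{\calT}\alpha$ (one competing family per frontier location), no two members of a family can both complete a single cell, and each resolves its cell to the corresponding tile of $\calT$. Verifying \emph{equivalent productions}, \emph{follows}, and \emph{models} reduces, template by template, to checking that every macrotile resolves before exposing any output and that the committing-slat races reproduce exactly the nondeterministic branching of $\calT$; the seed $\sigma'$ is again the macrotile of $\calT$'s seed, and arbitrary thresholds $\tau$ reduce to the $\tau=2$ case by the reductions of Section~\ref{sec:aTAM-classes} as sketched in Section~\ref{sec:full-sim-appendix}. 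I expect the main obstacle to be the interaction between nondeterminism and mismatches: I must ensure that the extra output slats permitted by mismatched neighbors, which may reach into a cell also used by a competing family's committing slats, can neither preempt a committing slat nor be preempted in a way that forecloses a legitimate nondeterministic branch. Reconciling the reaching output slats of Theorem~\ref{thm:deterministic+mismatches} with the shared decision cells of the nondeterministic families is where the geometry is tightest, and is the reason the scale factor must grow to $5c$.
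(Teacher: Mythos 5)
Your core mechanism is, in outline, exactly the paper's: the paper defines per-tile-type \emph{decision slats} $s^0_t,\ldots,s^{c-1}_t$ that race for the central decision rows of the macrotile, and the single slat $s^0_t$ landing in the northmost row plays precisely the role of your ``committing'' slat --- all competitors' versions of it share coordinates and orientation, geometric exclusion makes the first attachment irrevocable, and the tile-specific glues it exposes (via layout slats $c^k_t$, $d^k_t$) gate which output slats may subsequently grow. The macrotile representation function likewise keys on that one slat. So the decomposition you propose is not a different route; it is the paper's route, framed as an extension of Theorem~\ref{thm:deterministic+mismatches}.

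The genuine gap is in how your committing slats read their inputs. You correctly insist that all competitors' committing slats have the same coordinates \emph{and the same orientation} --- in the restricted aSAM, vertical slats live in the plane $z=0$ and horizontal slats in $z=1$, so perpendicular slats never collide and would not mutually exclude. But you also have each committing slat ``bind cooperatively with total strength $c$ to the shared input,'' spanning ``from one input region to the other.'' These two demands are incompatible as stated: a tile whose inputs are an across-the-gap north--south pair would need a \emph{vertical} slat to span its two input regions, while a competitor at the same location reading east--west strength-1 glues (or a strength-2 north glue) would need a \emph{horizontal} one --- and then neither blocks the other, so one macrotile can commit to two different tile types, breaking both mutual exclusion and the \emph{follows} condition. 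What is missing is the paper's third slat family: per-glue \emph{input slats}, which route every neighbor's output --- from whichever side, arriving in whichever orientation --- into the central decision rows and re-present it there on vertical slats. Only with that routing can every competitor's committing slat be horizontal, of length $5c$ with one length-$c$ section per direction plus an output section (this is what actually dictates the $5c$ scale factor and slat length), so that the race is well defined for every pair of competing signatures. A second, smaller omission: a lone committing slat exposes only one glue per crossing location, so the slats attaching after it cannot reach cooperativity $c$ from it alone; the paper solves this by also placing $c-1$ companion decision slats whose output sections carry generic (tile-agnostic) glues --- any mixture of competitors' companions is harmless --- so that downstream vertical slats bind with one tile-specific bond plus $c-1$ generic ones. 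Finally, the mismatch-versus-nondeterminism obstacle you flag at the end is resolved in the paper by confining mismatch blocking to the direction-dedicated input/output cells, which can never touch the decision rows of an unresolved macrotile; it does not require new geometry beyond the routing just described.
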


\begin{figure}
    \centering
    \begin{subfigure}{\textwidth}
        \centering
        \includegraphics[width=0.6\textwidth]{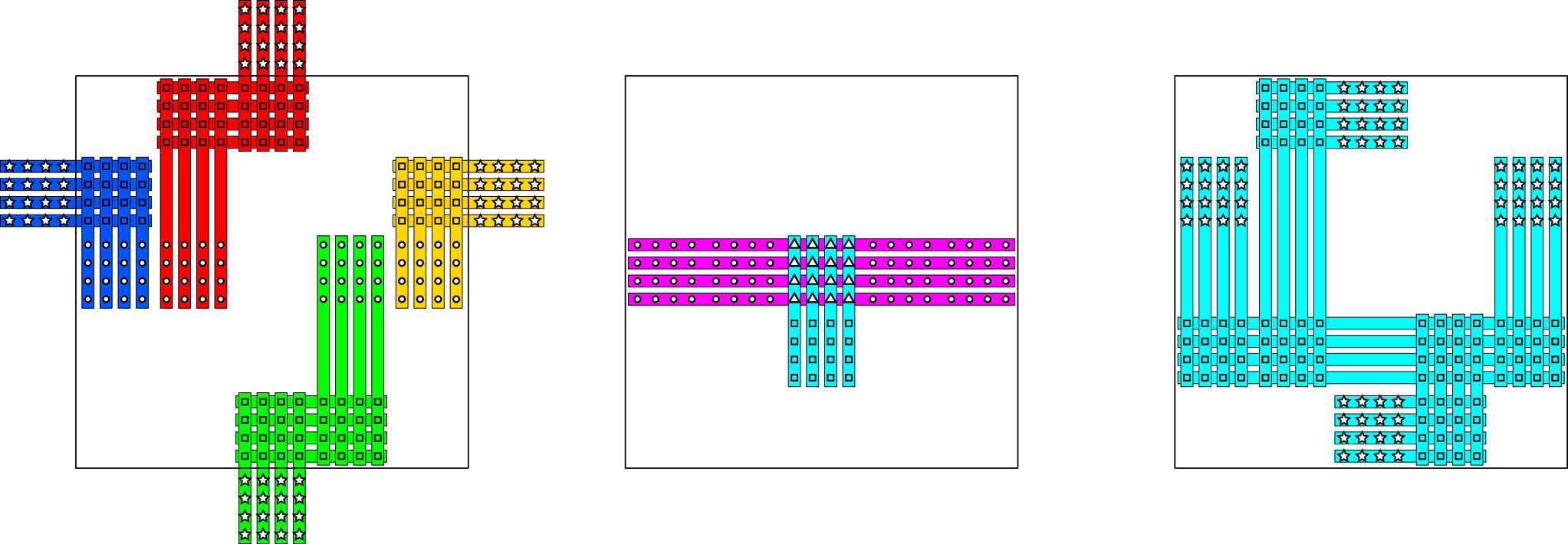}
        \caption{\label{fig:undirected-in-dec-out}}
    \end{subfigure}
    
    \begin{subfigure}{\textwidth}
        \centering
        \includegraphics[width=0.95\textwidth]{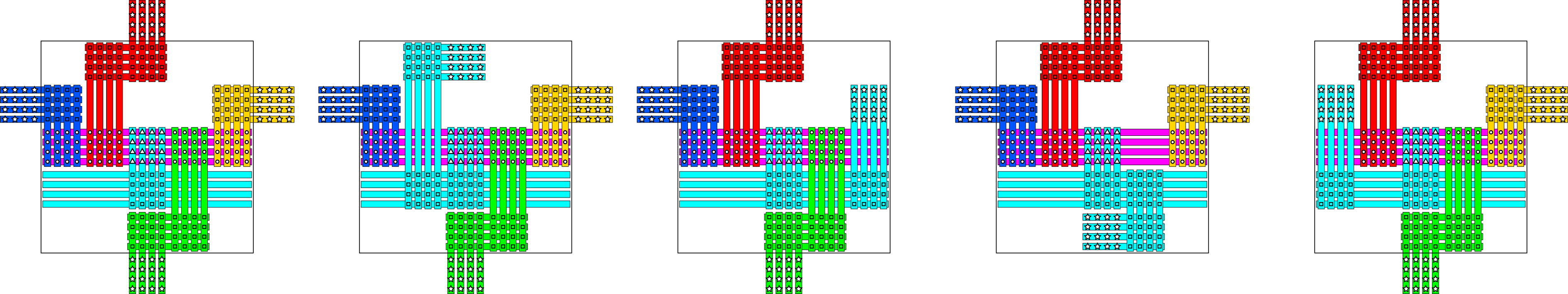}
        \caption{\label{fig:undirected-macrotiles}}
    \end{subfigure}

    \caption{(a)Left: Input slats for all directions. Center: Decision slats in the decision rows of a macrotile. Right: Output slats growing in all directions. (b) Left: a macrotile receiving inputs from all 4 of its neighbors. Red slats encode an incoming glue from the north, yellow from the east, green from the south, and blue from the west. Magenta slats attach non-deterministically to the glues presented by these slats and each encode a possible tile from $T$ to which this macrotile may resolve. Cyan slats decide a winner among the magenta slats. The remaining illustrations are example macrotiles which only receive input from 3 sides so the remaining side may act as an output.}
    \vspace{-20pt}
\end{figure}

In the construction of this proof, it is assumed that the aTAM system consists of IO-marked tiles (otherwise the method discussed in Section \ref{sec:IO-marking} of the Technical Appendix can be used to make it so), simulation takes place using $5c \times 5c$ macrotiles, and slats are always defined in groups of $c$. The general layout of the macrotiles does not change significantly with the type of tile being simulated, though some slats may or may not appear depending on whether the simulated tile has glues on all sides. 

Slats in $S$ may be divided logically into 3 families, \emph{input}, \emph{decision}, and \emph{output} slats, each of which is responsible for a different function. The general form of these slats is illustrated in Figure~\ref{fig:undirected-in-dec-out}. When a neighboring macrotile has resolved, it will eventually present glues along its sides indicating which output glues are present on the simulated tile. Input slats attach to these glues and act to move the information about simulated glues to the center of the macrotile forming $c$ horizontal rows. The left of Figure~\ref{fig:undirected-in-dec-out} illustrates input slats from the 4 cardinal directions using different colors. Note that the glues holding input slats together are unique to their specific location and the aTAM glue being represented. Consequently, input slats always attach as a group. Once enough input slats have attached, the central horizontal rows encode all the information about present input glues of the adjacent simulated tiles. In these horizontal rows, decision slats may attach (illustrated as magenta in Figure~\ref{fig:undirected-in-dec-out}. Decision slats are defined per tile type in the simulated system and the glues present on the decision slats ensure that each may only attach when the corresponding input glues are present, as encoded by the input slats. For instance, when simulating a tile attachment using a strength-2 north glue, the corresponding decision slats will be able to bind solely to the north input slats encoding the respective aTAM glue. On the other hand, when simulating a cooperative attachment, say from the north and west, the corresponding decision slats will only have half the necessary glues to attach to both the north and west input slats. In this way the decision slats for tile type $t\in T$ may only attach when the input slats encoding the input glues of $t$ are present. Note that in the case of overbinding or mismatches, there might be multiple decision slats that may attach in the center rows, allowing for the simulation of an undirected attachment. The specific tile type to which the macrotile resolves is the one encoded by the decision slat that attaches in the northmost decision row.
Vertical slats attaching to these decision tiles propagate the information from this northmost decision row into the row of $c$ slats below, which will be where the output slats attach. Once the macrotile has resolved and these vertical slats attach, the corresponding output slats will grow to each side that represents an output glue (and isn't already occupied). Output slats going to all directions are illustrated on the right of Figure~\ref{fig:undirected-in-dec-out}, but in actuality, only those corresponding to output glues on the simulated tile will be present. Non-determinism is thus only present in two locations of a macrotile during this simulation. First, multiple decision slats may be able to attach in the decision rows, and second when simulating tiles with mismatched glues, it may be possible for two adjacent macrotiles to present output slats to abutting sides. This is however not a problem since for this to occur, both macrotiles must have already resolved and while it may lead to input glues growing where output glues should have, this only happens in locations dedicated to the corresponding direction and thus cannot affect other parts of the macrotile. It should also be noted that this construction can be generalized to accommodate simulating arbitrary temperature thresholds, instead of just $\tau=2$, with minimal modification. Details to this end are described in Section~\ref{sec:full-sim-appendix} of the Technical Appendix.  


\bibliography{tam,experimental_refs,slats,ca}

\newpage

\section{Technical Appendix}
This Technical Appendix contains formal definitions and proof details omitted from the main body due to space constraints.

\subsection{IO marking example}\label{sec:IO-marking}

\begin{figure}[h!]
    \centering
    \includegraphics[width=0.7\textwidth]{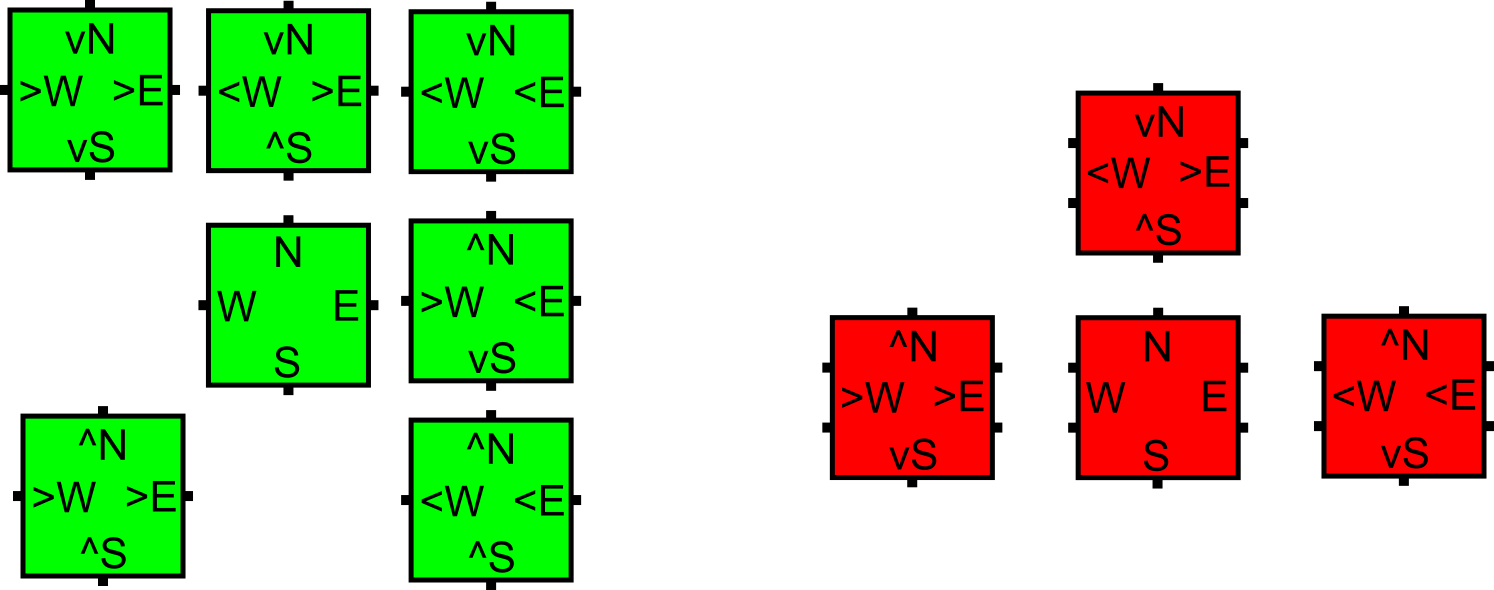}
    \caption{To make a set of IO-marked tile types whose collective behavior will be the same as an un-marked tile type, one IO-marked tile type is created for every minimal set of glues whose combined strength is $\ge \tau$ (minimal in that, if any individual glue was removed from the set, the combined strength would be $< \tau$), with those marked as input glues and the others as output glues. Two examples are shown. (Left) In the center, the un-marked tile type has a strength-1 glue on each side. Surrounding it are the six IO-marked tile types created from it, one for each possible pair of strength-1 glues marked as inputs. Note that this is the worst-case increase in tile complexity. (Right) In the center, the un-marked tile type has two strength-2 glues and two strength-1 glues. Surrounding it are the three IO-marked tile types created from it, one each with one of the strength-2 glues as the sole input, and the other with the pair of strength-1 glues as the inputs.}
    \label{fig:IO-marking}
\end{figure}

Here we give a simple demonstration of how a regular, unmarked aTAM tile set $T$ can be used to generate an equivalent an IO-marked aTAM tile set $T_{IO}$. First we note that this example demonstrates how an arbitrary, already existing tile set $T$ can be used to generate an equivalent $T_{IO}$, resulting in a constant-sized increase in tile complexity. Namely, in the worst case, each tile type of $T$ requires the creation of $6$ unique tile types in $T_{IO}$. Depending on the number and strengths of the glues on a tile type in $T$, the number of tile types generated for $T_{IO}$ could range from $1$ to $6$. However, an increase in tile complexity is not necessarily required for the creation of an IO-marked tile set, since for systems such as zig-zag systems, it is known in advance which glues of any given tile type may ever serve as its input glues (and that set is fixed for every given tile type). The IO-marked tile set in that case doesn't require any more tile types than the unmarked set. See Figure~\ref{fig:IO-marking} for two examples and explanation.

\subsection{Formal characterization of classes of aTAM systems}\label{sec:aTAM-classes-appendix}

\begin{definition}[IO TAS]\label{def:IO-system}
    An aTAM TAS $\calT = (T, \sigma, 2)$ is an \emph{IO TAS} iff all tile types in $T$ are IO-marked and all non-null glues on the perimeter of $\sigma$ have output markings.
\end{definition}

\begin{observation}\label{obs:IO}
    Given an IO TAS $\calT$, all non-null glues exposed on the perimeter of any producible assembly of $\calT$ have output markings. 
\end{observation}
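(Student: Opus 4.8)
The plan is to prove the invariant by induction on the length of a $\calT$-assembly sequence producing the assembly (equivalently, on the number of tiles). The inductive hypothesis is that for a producible assembly $\alpha$, every non-null glue on a face of $\alpha$ that abuts an empty location carries an output marking. The base case is immediate: the shortest assembly sequence yields the seed $\sigma$, and by the definition of an IO TAS (Definition~\ref{def:IO-system}) all non-null glues on the perimeter of $\sigma$ already have output markings.

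For the inductive step, suppose $\alpha \to^{\calT}_1 \beta$, so that $\beta$ is obtained from $\alpha$ by attaching a single tile $t$ at some location $q$ that was empty in $\alpha$. I would partition the exposed faces of $\beta$ into two kinds: (i) faces of tiles already present in $\alpha$, and (ii) faces of the newly added tile $t$. For kind (i), observe that since $\alpha \sqsubseteq \beta$ and $\beta$ contains strictly more tiles, any such face that is exposed in $\beta$ was already exposed in $\alpha$ (its neighboring location is empty in $\beta$, hence empty in $\alpha$); the inductive hypothesis then gives that its non-null glue is output-marked. So the only remaining work is to handle the exposed faces of $t$.

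The crux will be showing that the faces of $t$ that bind into $\alpha$ are exactly the input-marked faces of $t$, so that every remaining (exposed) face of $t$ carries a null or output glue. Here I would invoke two facts established earlier. First, by the marking convention the output marking for a direction $d$ coincides with the input marking for the opposite direction; hence an exposed glue of an $\alpha$-tile facing $q$ (which is output-marked by the inductive hypothesis) can only bind a glue of $t$ that is input-marked on the opposite face. In particular no output-marked face of $t$ can form a bond with $\alpha$. Second, because the IO-marking was performed with a minimal input-glue set (every input glue of an IO-marked tile type is necessary to reach the threshold $\tau = 2$), the attachment of $t$ with cumulative strength at least $2$ forces every input glue of $t$ to be matched. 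Combining these, the set of faces at which $t$ binds to $\alpha$ is precisely the set of input-marked faces of $t$; consequently every face of $t$ that abuts an empty location in $\beta$ carries either a null glue or an output-marked glue.

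Putting the two cases together yields that every non-null glue exposed on the perimeter of $\beta$ is output-marked, completing the inductive step and the proof. The main obstacle is the crux step: one must simultaneously rule out that $t$ attaches using an output-marked face and ensure that no input-marked face of $t$ is left exposed rather than bound, both of which hinge on combining the marking convention with the minimality of the input-glue set. Mismatched adjacencies, should any occur, are harmless to the argument, since a mismatched face is shared between two tiles and is therefore never part of the perimeter.
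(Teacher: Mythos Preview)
Your proof is correct and follows essentially the same inductive argument as the paper: the base case is the seed, and in the inductive step one uses that input-marked glues can only bind to output-marked glues together with the minimality of the input-glue set (equivalently, that the input glues sum to exactly $\tau=2$) to conclude that every input face of the newly attached tile must be bound, leaving only output-marked glues exposed. Your explicit partition of the perimeter into faces of old tiles versus faces of the new tile, and your remark on mismatched adjacencies, add welcome detail but do not change the underlying approach.
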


Observation \ref{obs:IO} follows immediately from the a simple inductive argument. As the base case, the smallest producible assembly, the seed assembly, has only output-marked glues on its perimeter. The induction hypothesis is that, starting with a producible assembly with only output-marked glues on its perimeter, the addition of any tile to form a new producible assembly also results in an assembly with only output-marked glues on its perimeter. The induction hypothesis is proven by noting that all tiles are IO-marked and since input-marked glues can only bind to output-marked glues and no tile has input-marked glues whose strengths sum to $> 2$, any tile that $\tau$-stably binds to an assembly must do so by binding all of its input-marked glues, leaving only output-marked glues to potentially be unbound and exposed on the perimeter.

So called \emph{zig-zag} aTAM systems were originally defined in \cite{CookFuSch11} and are widely used in the literature (e.g. \cite{DirectedNotIU,jCCSA,jSADS,SingleNegative}) due to their very simple dynamics and the fact that they are computationally universal (i.e. for every Turing machine there exists a zig-zag aTAM system that simulates it). Here we provide a definition that utilizes IO-marked systems, but note that regular aTAM systems can easily be converted to IO systems with only a constant increase in the number of tile types (as done in \cite{Versus} and depicted in Section~\ref{sec:IO-marking}), and that zig-zag system can be designed to be IO-marked without any additional tile types being required.

Intuitively, a zig-zag tile assembly system is a system that grows to the left or right, grows upward by one tile, and then grows in the opposite direction.
(Note that our definition restricts zig-zag systems to add new rows only to the north, but we can trivially rotate such systems so that growth is in any one of the cardinal directions.  Therefore, for ease of presentation and w.l.o.g. we simply consider northward growing zig-zag systems, and our results still hold for the more general definition.)  For an example of a zig-zag system, see Figure~\ref{fig:zig-zag-TM}.

\begin{definition}[Zig-zag TAS]\label{def:zig-zag}
    An aTAM system $\calT = (T,\sigma,2)$ is a \emph{zig-zag} system if the following conditions hold:
    \begin{enumerate}
        \item $\calT$ is an IO TAS.

        \item $|\sigma| = 1$ (We define zig-zag systems to have seeds consisting of a single tile, but any system which has a seed consisting of a longer single row of tiles but is otherwise a zig-zag system can be trivially converted to one with a single seed tile and additional tiles that ``hard-code'' the initial row.)
        
        \item $\calT$ is directed.
        
        \item The frontier of any producible assembly in $\calT$ is never larger than 1, and thus $\calT$ has a single valid assembly sequence.

        \item There is never an exposed glue on the south of any tile of any producible assembly.

        \item Rows grow in alternating directions, i.e. if the first row grows from right to left (RtoL), then, after growing upward by one tile, the assembly grows left to right (LtoR) by 1 or more tiles. Once it stops growing LtoR and grows upward by one tile, it then grows RtoL. This growth pattern continues until the assembly becomes terminal (either finitely or in the limit).
    \end{enumerate}
\end{definition}

\begin{definition}[Standard TAS]\label{def:standard}
Let $\calT = (T, \sigma, 2)$ be a TAS in the aTAM. We say that $\calT$ is \emph{standard} if and only if:
    \begin{enumerate}
        \item $\calT$ is an IO TAS.
        
        \item $\calT$ is directed.

        \item For every $t \in T$, the sides that have input markings are either exactly (1) a single glue of strength-2, or (2) two diagonally adjacent strength-1 glues (i.e. not on opposite sides).

        \item There are no mismatches in the terminal assembly (i.e. all adjacent pairs of tile sides in $\alpha \in \termasm{\calT}$ have the same glue label and strength on both sides)
    \end{enumerate} 
\end{definition}

\begin{definition}[Standard TAS with across-the-gap]\label{def:standard-with-across-the-gap}
Let $\calT = (T, \sigma, 2)$ be a TAS in the aTAM. We say that $\calT$ is \emph{standard with across-the-gap} if and only if
    \begin{enumerate}
        \item $\calT$ is an IO TAS.

        \item $\calT$ is directed.

        \item For every $t \in T$, the sides that have input markings are either exactly (1) a single glue of strength-2, or (2) two strength-1 glues (i.e. on any combination of two sides).
        
        \item There are no mismatches in the terminal assembly (i.e. all adjacent pairs of tile sides in $\alpha \in \termasm{\calT}$ have the same glue label and strength on both sides.
    \end{enumerate}
\end{definition}




\subsection{Formal definitions for the abstract Slat Assembly Model}\label{sec:aSAM-appendix}

The abstract Slat Assembly Model (aSAM) is essentially a restricted version of the Polyomino Tile Assembly Model (polyTAM) \cite{Polyominoes}. The polyTAM itself is a generalization of the aTAM \cite{Winf98} in which, rather than square tiles, the basic components are polyominoes (which are shapes composed of unit squares attached along their edges).
(Note that we take much of our notation, slightly adapted, from aTAM definitions such as those in \cite{jSSADST}.)
The polyTAM was defined for two-dimensional polyominoes, but the aSAM utilizes three-dimensional polyominoes whose shapes are restricted to be linear arrangements of unit cubes.
The basic components of the aSAM are called \emph{slats} and each is a $1 \times 1 \times n$ polyomino composed of $n$ unit cubes, for some $n \in \mathbb{N}$. Each unit cube of a slat has $4$ or $5$ exposed faces: $5$ if it is on one of the two ends of the slat (in the dimension of length $n$), and $4$ otherwise (i.e. it is an interior cube). On each exposed face of a unit cube may be a \emph{glue}, and each glue is an ordered pair $(l,s)$ where $l$ is a string and serves as the glue's \emph{label} and $s \in \mathbb{Z}^+$ is its \emph{strength}. An exposed face may also have no glue (which may also be referred to as the \emph{null glue}). The character `$*$' is considered a special character in glue labels and any label may have at most a single `$*$' character, which must appear as its rightmost. Given a glue $g = (l,s)$, if the label $l$ does not end with the character `$*$', then we say the label $l' = l^*$ (i.e. the string $l$ concatenated with `$*$') represents the \emph{complement} of $l$. If $l$ does end with `$*$', then its complement is represented by the string $l$ truncated by one character to remove the `$*$'.
Thus a pair of labels are complementary to each other if they consist of the same string up to exactly one of them terminating in `$*$', e.g. `$foo$' and `$foo^*$' are complementary labels.
Any two glues that have complementary labels must have the same strength value.
If two slats are placed so that faces containing complementary glues are adjacent to each other, those glues \emph{bind} with strength equal to the common strength value of those two glues.

A \emph{slat type} is defined by its length $n$ and the set of glues on its constituent cubes. For convenience, each slat type is assigned a canonical placement and orientation in $\mathbb{Z}^3$, with the default being that it has one cube at $(0,0,0)$ and it extends along the $x$-axis to $(n-1,0,0)$, and the cubes have designated $N,E,S,W,U,D$ (i.e. north, east, south, west, up, down) sides which face in the $+y,+x,-y,-x,+z,-z$ directions, respectively (in the canonical placement). Additionally for convenience, the cubes are numbered from $0$ to $n-1$ starting from the cube at $(0,0,0)$ and proceeding in order along the $x$-axis (of the slat type in its canonical placement), and for a slat type $t$ the $i$th cube is denoted by $t[i]$. A slat is an instance of a slat type, and may be in any rotation or orientation in the $\mathbb{Z}^3$ lattice.
A slat is defined by (1) its type $t$, (2) its translation, which is identified by the coordinates of its $t[0]$, (3) its direction, taken from the set $\{-x,+x,-y,+y,-z,+z\}$ where the letter denotes which axis the length-$n$ dimension is parallel to and $+$ or $-$ denotes whether the coordinates of block $t[n-1]$ are more positive or more negative in that dimension than $t[0]$, respectively, and (4) its `up' direction which is the side of the cubes pointing in the $+z$ direction in the slat's current orientation (unless the slat is oriented along the $z$ axis, in which the `up' direction is the side of the cubes pointing in the $+x$ direction). See \Cref{fig:3D_slat_examples} for an example of a slat type in canonical placement and a rotated version.

\begin{figure}[!ht]
    \begin{subfigure}[t]{.47\linewidth}
        \includegraphics[width=2.5in]{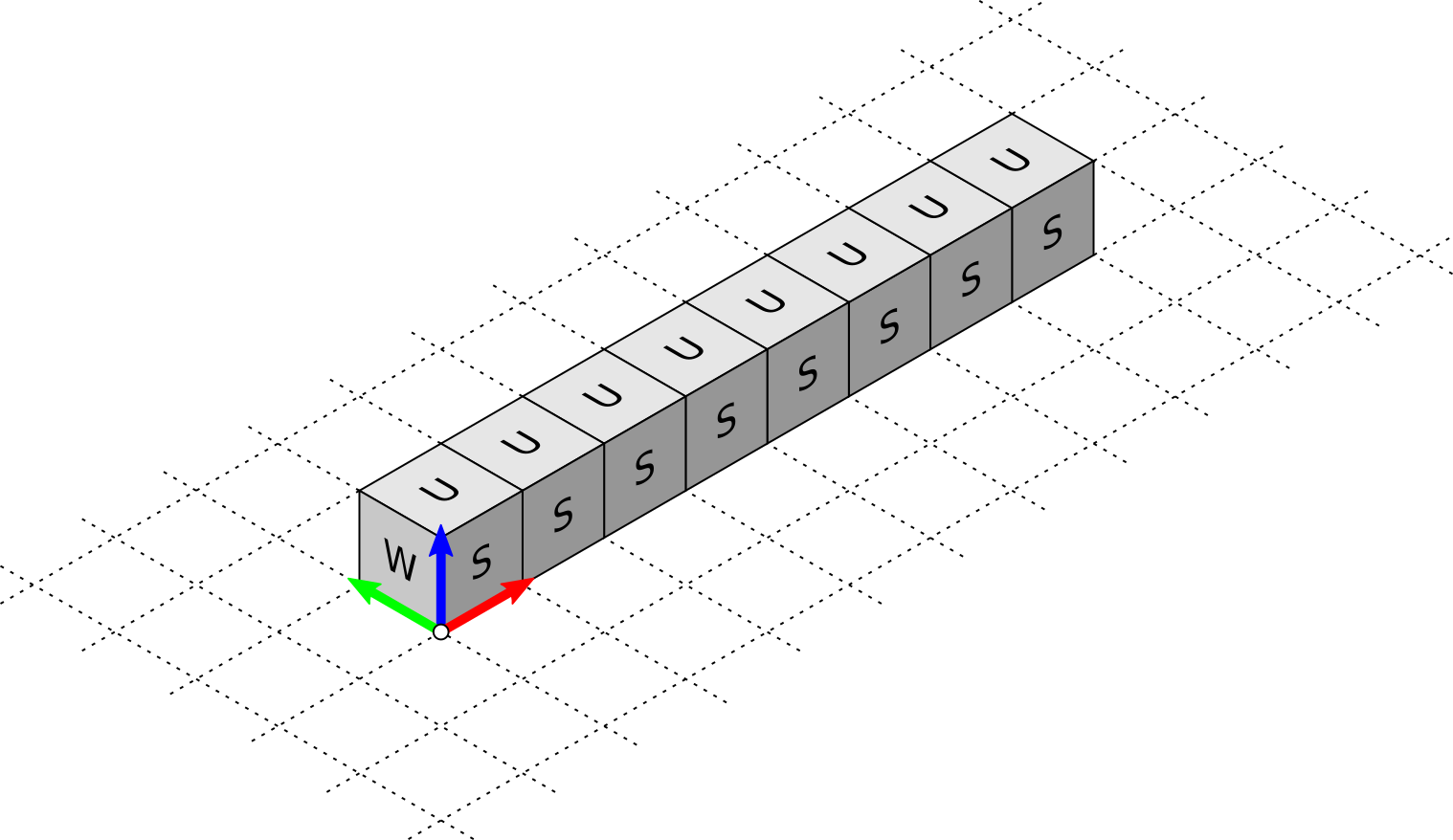}
        \caption{An example slat type $t$ of length 8 in its canonical placement.}
        \label{fig:3D_slat}
    \end{subfigure}
    \hspace{0.05\textwidth}
    \begin{subfigure}[t]{.47\linewidth}
        \includegraphics[width=2.5in]{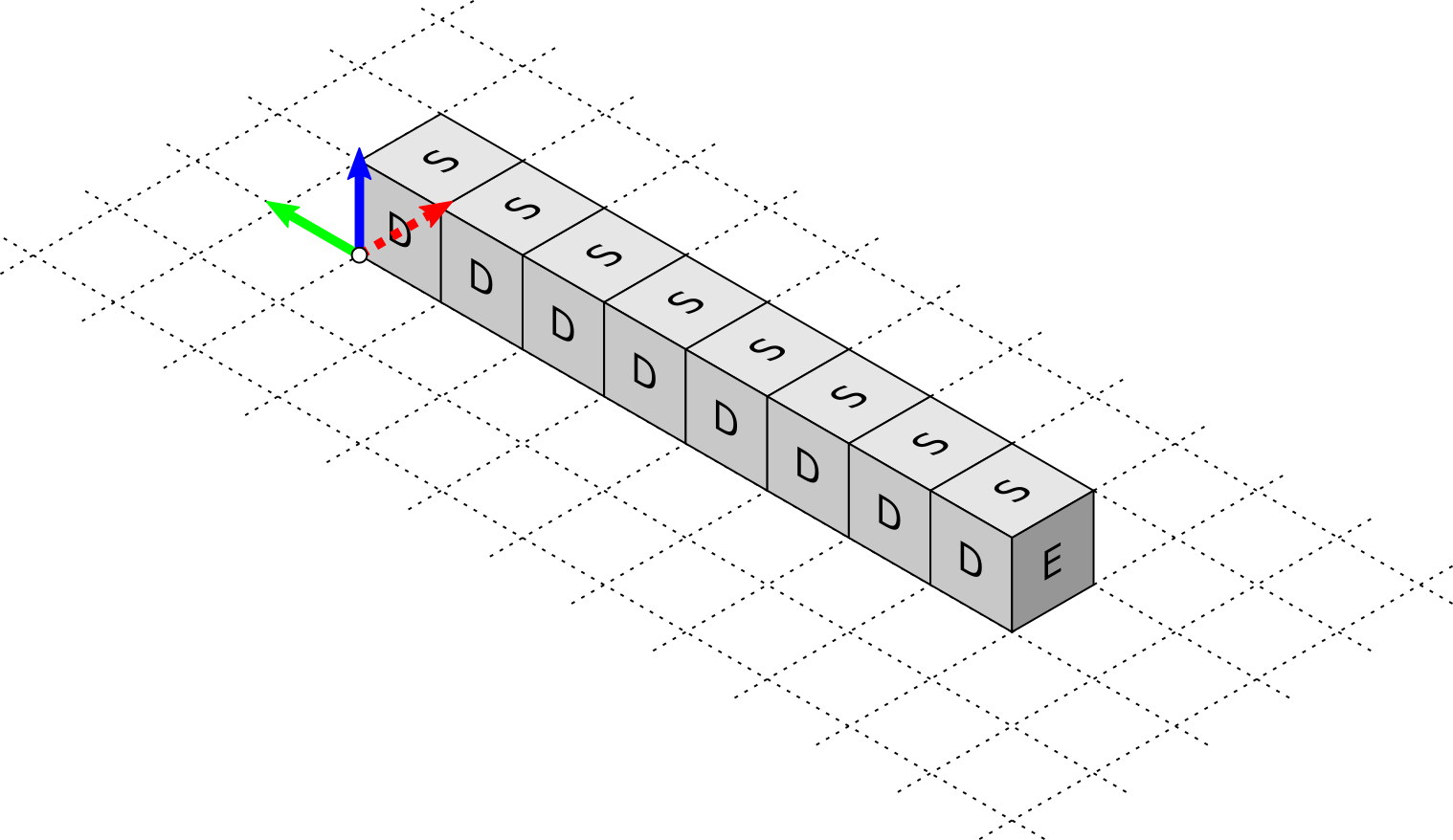}
        \caption{Slat $t$ rotated so that $t[0]$ is still at $(0,0,0)$ but its direction is $+y$ and its up direction is $-y$ (i.e. the `south' label appears on the top faces, in the $+z$ direction)}
        \label{fig:3D_slat_rotated}
    \end{subfigure}
    \caption{Depiction of slats in the aSAM.}
    \label{fig:3D_slat_examples}
\end{figure}

An \emph{assembly} over a set of slat types, $S$, consists of placements of slats of types from $S$ in $\mathbb{Z}^3$ such that no blocks of any two slats share the same space. Given an assembly $\alpha$, if two slats $t_1$ and $t_2$ are placed in $\alpha$ such that for some block of $t_1$, say $t_1[i]$, and some block of $t_2$, say $t_2[j]$, a face of $t_1[i]$ is adjacent to a face of $t_2[j]$ (irrespective of the directions of $t_i$ and $t_j$) and the glues of those faces are complementary, then they form a bond with the common strength value of those glues. If there is no cut in $\mathbb{Z}^3$ separating the slats of an assembly into two separate components without cutting bonds whose strengths sum to at least some value $\tau$, then we say the assembly is $\tau$-\emph{stable}. Given an assembly $\alpha$, a value $\tau \in \mathbb{Z}^+$, and a set of slat types $S$, any set $F$ of $i$ surfaces on blocks of the slats composing $\alpha$, where $0 < i \le \tau$, such that a slat of some type $t \in S$ can be positioned in $\mathbb{Z}^3$ (1) without any of its blocks overlapping any blocks of slats in $\alpha$ and (2) its glues form bonds of strength summing to $\ge \tau$ with the glues on the surfaces of $F$, we call $F$ a $S_\tau$-\emph{frontier} location of $\alpha$. Essentially, a $S_\tau$-frontier location of assembly $\alpha$ is a location to which a slat of some type in the set $S$ can validly attach with at least strength $\tau$. When $S$ and $\tau$ are clear from context we will simply refer to such locations as \emph{frontier} locations.

A \emph{slat assembly system}, or SAS, is an ordered triple $(S,\sigma,\tau)$ where $S$ is a set of slat types, $\sigma$ is an assembly of slats from $S$ referred to as the \emph{seed assembly}, and $\tau$ is the minimum binding threshold which is often referred to as the temperature in the aTAM and we call the \emph{cooperativity} in the aSAM. (Note that in \cite{ShihNucleation} they used the term \emph{coordination} for the same concept.) Given a SAS $\mathcal{S} = (S,\sigma,\tau)$, it is assumed that there are an infinite number of slats of each type from $S$ available for attachment, and assembly begins from the seed assembly $\sigma$. Assembly proceeds in discrete steps, with each step consisting of the nondeterministic selection of a frontier location $F$ for the current assembly $\alpha$, then the nondeterministic selection of a slat type $t \in S$ that can bind in $F$ (in case there are more than one), and then the attachment of a slat of type $t$ to $\alpha$, translated and rotated appropriately to bind to $\alpha$ using the glues of $F$. Note that the aSAM does not require that there be a path through which the slat of type $t$ must be able to move in $\mathbb{Z}^3$ from arbitrarily far from $\alpha$ into that location without encountering overlaps along the way (i.e. it can be considered to just ``appear'' in the correct location).\footnote{This aspect of the model is similar to that of other abstract models such as the aTAM, and is meant to make the model computationally tractable at the possible expense of reduced physical realism.} Given an assembly $\alpha$ in $\mathcal{S}$, if $\beta$ can result from $\alpha$ in a single such step, we say that $\alpha$ \emph{produces} $\beta$ \emph{in one step} and denote it as $\alpha \rightarrow^\mathcal{S}_1 \beta$. 

We use $\mathcal{A}^S$ to denote the set of all assemblies of slats over slat type set $S$. Given a SAS $\mathcal{S}=(S, \sigma, \tau)$, a sequence of $k\in\mathbb{Z}^+ \cup \{\infty\}$ assemblies $\alpha_0, \alpha_1, \ldots, \alpha_k$ over $\mathcal{A}^S$ is called a \emph{$\mathcal{S}$-assembly sequence} if $\alpha_0 = \sigma$ and, for all $1\le i < k$, $\alpha_{i-1} \to^{\mathcal{S}}_1 \alpha_i$. The \emph{result} of an assembly sequence is the unique limiting assembly of the sequence. For finite assembly sequences, this is the final assembly; whereas for infinite assembly sequences, this is the assembly consisting of all slats from any assembly in the sequence. We say that \emph{$\alpha$ $\mathcal{S}$-produces $\beta$} (denoted $\alpha\to^{\mathcal{S}} \beta$) if there is a $\mathcal{S}$-assembly sequence starting with $\alpha$ whose result is $\beta$. We say $\alpha$ is \emph{$\mathcal{S}$-producible} if $\sigma\to^{\mathcal{S}}\alpha$ and write $\prodasm{\mathcal{S}}$ to denote the set of $\mathcal{S}$-producible assemblies. We say $\alpha$ is \emph{$\mathcal{S}$-terminal} if $\alpha$ is $\tau$-stable and there exists no assembly which is $\mathcal{S}$-producible from $\alpha$. We denote the set of $\mathcal{S}$-producible and $\mathcal{S}$-terminal assemblies by $\termasm{\mathcal{S}}$. When $\mathcal{S}$ is clear from context, we may omit $\mathcal{S}$ from this notation. If there is only a single terminal assembly of $\mathcal{S}$, i.e. $|\termasm{\mathcal{S}}| = 1$, then we say that $\mathcal{S}$ is \emph{directed}. Otherwise, it is \emph{undirected}. Note that a SAS $\mathcal{S}$ may have multiple (even infinitely many) distinct valid assembly sequences but yet $\mathcal{S}$ may be directed.

\subsection{A restricted version of the aSAM}\label{sec:aSAM-minus-appendix}

The aSAM is defined to be a relatively general model for the assembly of slat-based structures. However, the previous experimental work of \cite{ShihNucleation,Shih-OrigamiSlats} and the abstract designs and simulation results of this paper all pertain to a restricted subset of the aSAM that we'll refer to as the \emph{Restricted aSAM} and denote as the aSAM$^-$.

The aSAM$^-$ utilizes the following restrictions:

\begin{enumerate}
    \item All blocks of all slats contained in an assembly are restricted to the two planes $z=0$ and $z=1$.
    
    \item All slats which attach in the vertical orientation (i.e. their longest dimension is parallel to the $y$-axis) do so in the plane $z=0$.
            
    \item All slats which attach in the horizontal orientation (i.e. their longest dimension is parallel to the $x$-axis) do so in the plane $z=1$.

    \item No slats attach in an orientation such that their longest dimension is parallel to the $z$-axis.

    \item For a given slat type $t$, all slats of type $t$ that attach to an assembly will do so in the same orientation. Slat types whose slats always bind in a vertical orientation are referred to as \emph{vertical slats}, and those in horizontal orientation as \emph{horizontal slats}.

    \item All glues of a polyomino are on a single side, $U$ for vertical slats and $D$ for horizontal slats.

    \item All glues have strength $= 1$.
        
\end{enumerate}

Essentially, the aSAM$^-$ requires that all slats remain in one of two planes, with the horizontal slats all being on the top plane and the vertical slats all being on the bottom, with the only glues being strength-1 glues between vertical and horizontal slats (i.e. no horizontal slat can bind to another horizontal slat, and no vertical to another vertical), and therefore for any slat to attach to an assembly, it must bind to $\tau$ different slats already in the assembly.

\subsection{Definition of simulation of an aTAM system by an aSAM system}\label{sec:aSAM-simulation-formal}

This section defines intrinsic simulation of an aTAM system by an aSAM system formally, expanding upon Section~\ref{sec:aSAM-simulation-informal}.

We now define what it means for an aSAM system to simulate an aTAM system. This definition is analogous to the definition of intrinsic simulation between aTAM systems (e.g. \cite{IUSA,DirectedNotIU,DDDIU}).
To that end, for the purposes of this section, it will be assumed that $\calS=(S,\sigma_\calS, \tau_\calS)$ is some SAS that simulates the TAS $\calT=(T, \sigma_\calT, \tau_\calT)$.

For each such simulation we fix a positive integer $m$ called the \emph{scale factor} with the intention that $m \times m$ blocks of locations from $\calS$ map to individual locations in $\calT$. In other words, simulation happens at scale so that by ``zooming out'' by a factor of $m$, the assemblies in $\calS$ resemble corresponding assemblies in $\calT$.

In the context of $\calS$, we call each $m\times m$ block of locations in $\mathbb{Z}^2$ a \emph{macrotile location} under the convention that the origin $(0,0)\in\mathbb{Z}^2$ occupies the south-westernmost location in one of the $m\times m$ blocks. That is, macrotiles locations are squares with southwest corners of the form $(cx, cy)$ and northeast corners of the form $(m(x+1)-1, m(y+1)-1)$ where $x,y\in\mathbb{Z}$. Given an assembly $\alpha' \in \prodasm{\calS}$, we use the notation $\mathcal{M}^c_{x,y}[\alpha']$ to refer to the set of slats in $\alpha'$ which occupy locations in the macrotile location whose southwest corner is $(cx, cy)$. This set of slats is referred to the \emph{macrotile} located at $(x,y)$. Note that in this context, we keep track of the (portions of) slats that occupy a macrotile with coordinates relative to $(x,y)$ so that two macrotiles which differ only by a translation are identical. We denote the set of all possible macrotiles of scale factor $m$ made from slats in $S$ as $\mathcal{M}^m[S]$.

A partial function $R:\mathcal{M}^m[S] \to T$ is called a \emph{macrotile representation function} from $S$ to $T$ if, for any macrotiles $\alpha, \beta \in \mathcal{M}^m[S]$ where $\alpha \sqsubseteq \beta$ and $\alpha \in \text{dom} R$, then $R(\alpha) = R(\beta)$. In other words, a macrotile representation function may not map a macrotile to an individual tile type in $T$, but if it does, then any additional slat attachments do not change how the macrotile is mapped under $R$. In the context of some $\calS$-assembly sequence, a macrotile is said to \emph{resolve} to a tile type $t\in T$ when an assembly maps under $R$ to $t$, but the prior assembly is not in the domain of $R$.

From a macrotile representation function $R$, a function $R^*:\mathcal{A}^S \to \mathcal{A}^T$, called the \emph{assembly representation function}, is induced which maps entire assemblies in $\calS$ to assemblies in $\calT$. This function is defined by applying the function $R$ to each macrotile location containing slats. We also use the notation ${R^*}^{-1}(\alpha)$ to refer to the \emph{producible pre-image} of an assembly $\alpha \in \mathcal{A}^T$. That is, ${R^*}^{-1}(\alpha) = \{\alpha' \in \prodasm{\calS} | R^*(\alpha') = \alpha \}$ so that ${R^*}^{-1}(\alpha)$ includes every $\calS$-producible assembly mapping to $\alpha$ under $R^*$. We say that an assembly $\alpha'\in \mathcal{A}^S$ maps cleanly to an assembly $\alpha\in \mathcal{A}^T$ under $R$ if $R(\alpha') = \alpha$ and no slats in $\alpha'$ occupy any macrotile whose location is not adjacent (not including diagonally) to a resolved macrotile. Formally, if $\alpha'$ maps cleanly to $\alpha$, then for each non-empty macrotile $\mathcal{M}^m_{x,y}[\alpha']$, there exists some vector $(u,v)\in\mathbb{Z}^2$ such that $\lVert (u,v) \rVert \le 1$ so that $\mathcal{M}^m_{x+u,y+v}[\alpha'] \in \text{dom} R$.\footnote{Note that this definition requires that the seed assembly $\sigma_\mathcal{S}$ resolves to at least one tile in $\calT$. This will suffice for the constructions of this paper, but if needed this definition can be relaxed to specifically allow for the seed assembly to grow before resolving.}
The definition of \emph{maps cleanly} requires that any non-empty but unresolved macrotile has a resolved macrotile to its N, E, S. or W, thus ensuring that the growth of slats is only occurring in macrotile locations that map to locations in $\calT$ adjacent to tiles, and thus with some potential to receive a tile.

\begin{definition}[Equivalent Productions] \label{def:sim-equiv-prod}
    We say that $\calS$ and $\calT$ have \emph{equivalent productions} (under $R$), written $\calS \Leftrightarrow_R \calT$, if the following conditions hold:
    \begin{enumerate}
        \item $\{R^*(\alpha') | \alpha' \in \prodasm{\calS}\} = \prodasm{\calT}$,
        \item $\{R^*(\alpha') | \alpha' \in \termasm{\calS}\} = \termasm{\calT}$, and
        \item For all $\alpha'\in\prodasm{\calS}$, $\alpha'$ maps cleanly to $R^*(\alpha')$.
    \end{enumerate}
\end{definition}

\begin{definition}[Follows] \label{def:sim-follows}
    We say that $\calT$ \emph{follows} $\calS$ (under $R$), written $\calT \dashv_R \calS$, if for all $\alpha', \beta' \in \prodasm{\calS}$, $\alpha' \to^\calS \beta'$ implies $R^*(\alpha') \to^\calT R^*(\beta')$.
\end{definition}

\begin{definition}[Models] \label{def:sim-models}
    We say that $\calS$ \emph{models} $\calT$ (under $R$), written $\calS \vDash_R \calT$, if for every $\alpha \in \prodasm{\calT}$, there exists a non-empty subset $\Pi_\alpha \subseteq {R^*}^{-1}(\alpha)$, such that for all $\beta \in \prodasm{\calT}$ where $\alpha \to^\calT \beta$, the following conditions are satisfied:
    \begin{enumerate}
        \item for every $\alpha' \in \Pi_\alpha$, there exists $\beta' \in {R^*}^{-1}(\beta)$ such that $\alpha' \to^\calS \beta'$
        \item for every $\alpha''\in {R^*}^{-1}(\alpha)$ and $\beta''\in {R^*}^{-1}(\beta)$ where $\alpha'' \to^\calS \beta''$, there exists $\alpha'\in \Pi_\alpha$ such that $\alpha' \to^\calS \alpha''$.
    \end{enumerate}
\end{definition}

In Definition~\ref{def:sim-models} above, the set $\Pi_\alpha$ is defined to be a set of assemblies representing $\alpha$ from which it is still possible to produce assemblies representing all possible $\beta$ producible from $\alpha$. Informally, the first condition specifies that all assemblies in $\Pi_\alpha$ can produce some assembly representing any $\beta$ producible from $\alpha$, while the second condition specifies that any assembly $\alpha''$ representing $\alpha$ that may produce an assembly representing $\beta$ is producible from an assembly in $\Pi_\alpha$. In this way, $\Pi_\alpha$ represents a set of the earliest possible representations of $\alpha$ where no commitment has yet been made regarding the next simulated assembly. Requiring the existence of such a set $\Pi_\alpha$ for every producible $\alpha$ ensures that non-determinism is faithfully simulated. That is, the simulation cannot simply ``decide in advance'' which tile attachments will occur.

\begin{definition}[Simulates]\label{def:simulate}
    Given an aTAM system $\calT = (T, \sigma, 2)$, an aSAM system $\mathcal{S} = (S, \sigma_\mathcal{S}, c)$, and a macrotile representation function $R$ from $\mathcal{S}$ to $\calT$, we say that $\mathcal{S}$ \emph{intrinsically simulates} $\calT$ (under $R$)  if $\mathcal{S} \Leftrightarrow_R \mathcal{T}$ (they have equivalent productions), $\mathcal{T} \dashv_R \mathcal{S}$ and $\mathcal{S} \models_R \mathcal{T}$ (they have equivalent dynamics).
\end{definition}



\subsection{Macrotile glue naming conventions}\label{sec:glue-naming}

In this section, we present details of the conventions used to generate glue labels for the slats of macrotiles. In order to ensure that a slat of a given type can only bind in the desired location and also with the desired orientation, translation, and rotation, during the instantiation of a macrotile template to create actual slat types, we carefully design unique glue labels.

A main component of the label created for a glue is based on its cell's location within a macrotile (called its \emph{macrotile coordinates}), as well as its location within a cell (called its \emph{cell coordinates}). To understand these aspects of the naming conventions, refer to Figure \ref{fig:glue-naming}. In locations where a horizontal slat of a macrotile crosses over a vertical slat of the same macrotile, and the intention is for there to be a glue binding them together, a glue (called an \emph{interior glue}) is designed for that location. The interior glue label will be a string consisting of a concatenation of (1) the name of the tile used to instantiate the macrotile template creating the slat, (2) the macrotile coordinates of the location, (3) the cell coordinates of the location, and (4) the ``*'' (star) symbol if the glue is to be placed on a horizontal slat, otherwise that is omitted. Such a pair of glue labels is thus guaranteed to be complementary and unique in the system (among all other macrotiles due to the tile name, and among the same macrotile due to the coordinates).

When a macrotile template is instantiated for a tile of type $t$, other than the interior glues needed to bind the horizontal and vertical slats of that macrotile together, glues must also be created to simulate the behaviors of the input and output glues of $t$, allowing assembled macrotiles to initiate the growth of adjacent macrotiles. Let $g$ be the label of an input glue of $t$ (e.g. '$\wedge$a' for an input glue on the S side of $t$ with label `a'). The glue labels created for the domains of these glues are strings consisting of a concatenation of (1) $g$, (2) the cell coordinates, (3) possibly a \emph{marker} symbol used to disambiguate between different versions of $g$ (discussed more below), and (4) the ``*'' (star) symbol if the glue is to be placed on a horizontal slat, otherwise that is omitted. The same naming scheme is used for the output glues of $t$ (which will cause them to be the same as the input glues to which they should bind except for the ``*'' marking that will cause them to be complementary and thus able to bind). In this way, the representation of the input and output glues are agnostic to the macrotile of which they are a part (since their labels don't contain the name of $t$ or the macrotile coordinates, as interior glues do), just as the glues of tiles are agnostic to the particular tile on which they reside and serve as generic binding interfaces. However, they are specific to their locations within a cell, allowing for correct alignment of attaching slats.

In some of our more complex constructions, the designs of macrotile templates can vary in the placements (i.e. macrotile coordinates) and/or orientations of slats representing input and output glues. Whenever it is possible for multiple macrotiles, say $m_1$ and $m_2$, to be created such that both must represent an input (respectively, output) glue that has the same label in $T$ (which we'll call $g$), but $m_1$ and $m_2$ place the domains representing $g$ in cells with different relative macrotile coordinates and/or on slats with opposite orientations (vertical versus horizontal), then it is necessary for $m_1$ and $m_2$ to have different marker symbols included in their glue labels and, of course, for the output (respectively, input) glues of another macrotile that are meant to bind to each to have the same markers. In this way, it is possible to represent the same input or output glue in multiple ways, which is sometimes necessary due to geometric constraints imposed by growth ordering and compact scale factors, but to ensure that the slats binding to them are specifically designed to account for the particular geometric offset or orientation of each representation. 

\begin{figure}[!ht]
    \centering
    \begin{subfigure}[t]{.4\linewidth}
        \includegraphics[width=1.0\textwidth]{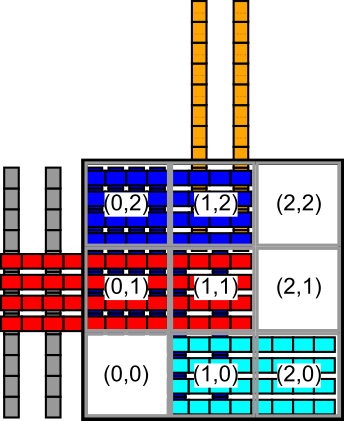}
        \caption{ \label{fig:macrotile-coords}}
    \end{subfigure}
    \hspace{0.1\textwidth}
    \begin{subfigure}[t]{.25\linewidth}
        \includegraphics[width=1.0\textwidth]{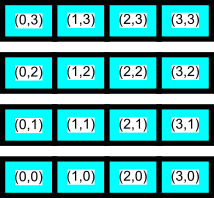}
        \caption{\label{fig:cell-coords}}
    \end{subfigure}
    \caption{This figure shows the slats of an example macrotile for $c = 4$ before glue assignments. (a) The larger, black square encloses the $3c \times 3c$ region defined for the body of the macrotile, with output slats extending outside of that. Each $c \times c$ cell of the macrotile is enclosed by a grey square, and the macrotile coordinates (i.e. the locations of the cells with respect to the southwest corner of the macrotile) are shown for each. (b) A zoomed in display of cell (2,0), showing the cell coordinates (i.e. the coordinates of each potential glue domain location with respect to the southwest corner of the cell).
    \label{fig:glue-naming}}
\end{figure}

\subsection{Technical details for the simulation of the class of standard aTAM systems}\label{sec:standard-append}

In this section we provide the full technical details for the proof of Theorem \ref{thm:standard}. This section expands on the overview provided in Section~\ref{sec:standard}.

\begin{proof}
    We prove Theorem \ref{thm:standard} by construction, and thus, starting with an arbitrary standard aTAM system $\calT = (T, \sigma, 2)$ and given any $c > 2$ such that $c \mod 2 = 0$ we show how to create aSAM system $\mathcal{S} = (S, \sigma', c)$ and macrotile representation function $R$ such that $\mathcal{S}$ simulates $\calT$ under $R$.

    \begin{figure}
    \centering
    \begin{subfigure}{0.15\textwidth}
        \centering
        \includegraphics[width=1.0\textwidth]{images/zig-zag-tile.png}
        \caption{\label{fig:standard-3x3-tile}}
    \end{subfigure}
    \hspace{20pt}
    \begin{subfigure}{0.2\textwidth}
        \centering
        \includegraphics[width=1.0\textwidth]{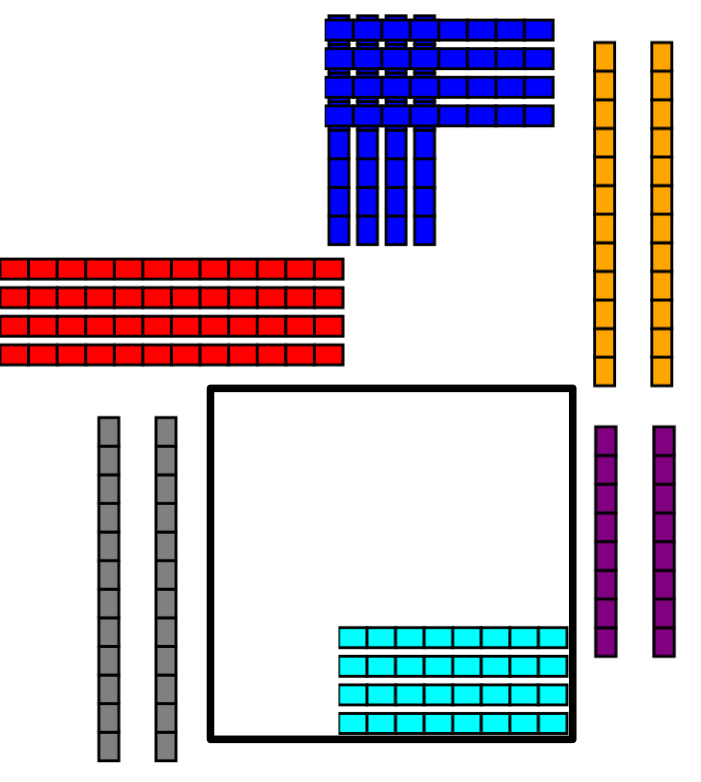}
        \caption{\label{fig:standard_3x3_marked_macrotile}}
    \end{subfigure}
    \hspace{20pt}
    \begin{subfigure}{0.2\textwidth}
        \centering
        \includegraphics[width=1.0\textwidth]{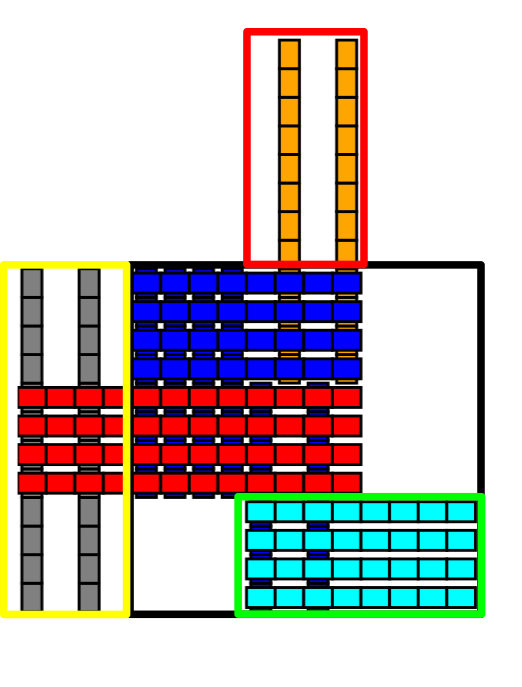}
        \caption{\label{fig:standard_3x3_macrotile_parts}}
    \end{subfigure}
    \hspace{20pt}
    \begin{subfigure}{0.2\textwidth}
        \centering
        \includegraphics[width=1.0\textwidth]{images/cell-color-conventions.png}
        \caption{\label{fig:standard_3x3_color_conventions}}
    \end{subfigure}
    \caption{(a) An IO-marked tile type \textit{t} from a standard aTAM tile set. It has strength-1 inputs on the south and east, and strength-1 outputs on the north and west. (b) A set of 7 slat groups for the macrotile simulating $t$ at $c = 4$. The light blue group contains body slats that are entirely within the square of the macrotile which they cause to resolve to $t$. The purple group contains two body slats that are placed within the same $c \times c $ region as two output slats from a neighboring macrotile. The red group contains four output slats which bind to both the purple body slats, and those output slats received from a neighboring macrotile. The red and grey groups combined are the output slats that serve as a west output and extend into the western neighboring macrotile location. The dark blue group contains body slats which serve to construct a region in which the orange output slats may bind, constructing a strength-1 output to the north. (c) An example of the assembled 3c × 3c macrotile for $t$, with cells marked to show the portions of $t$ that they represent, following the conventions of (d). (d) A cell enclosed in a green square represents the cell in which the initial body slats of a macrotile bind, causing it to resolve to $t$. The cells enclosed in red, gold, light blue, and yellow squares denote the cells in which the slats expose glues representing the output glues in the north, east, south, and west directions, respectively. \label{fig:standard_3x3_cell_coloring}}
    \end{figure}

    First, we are given that $\calT$ is a standard aTAM system, meaning it possesses the set of characteristics that are defined in Definition \ref{def:standard}. That is to say, $\calT$ is directed, an IO TAS, and for every $t \in T$, the sides that have input markings are either a single side with a strength-2 glue, or two diagonally adjacent sides each with a strength-1 glue. Additionally, there are no mismatches in the terminal assembly (i.e. all adjacent pairs of sides of pairs of tiles in $\alpha \in \termasm{\calT}$ have the same glue label and strength).

    \begin{figure}
    \centering
    \includegraphics[width=0.9\textwidth]{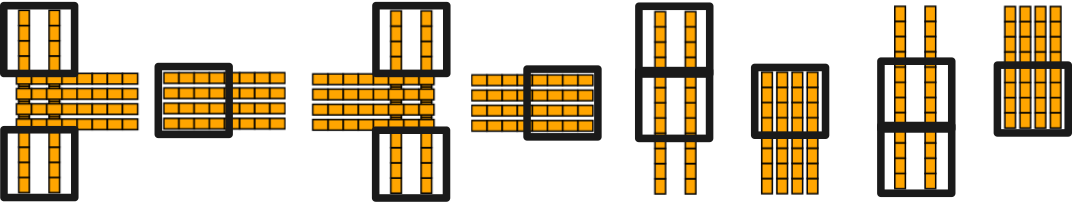}
    \caption{Output slat templates for standard systems, with output glue locations marked with a black box. Ordered west strength-1, strength-2, east strength-1, strength-2, north strength-1, strength-2, south strength-1, strength-2 respectively.}
    \label{fig:standard_3x3_output_templates}
    \end{figure}

    \begin{figure}
    \centering
    \includegraphics[width=0.9\textwidth]{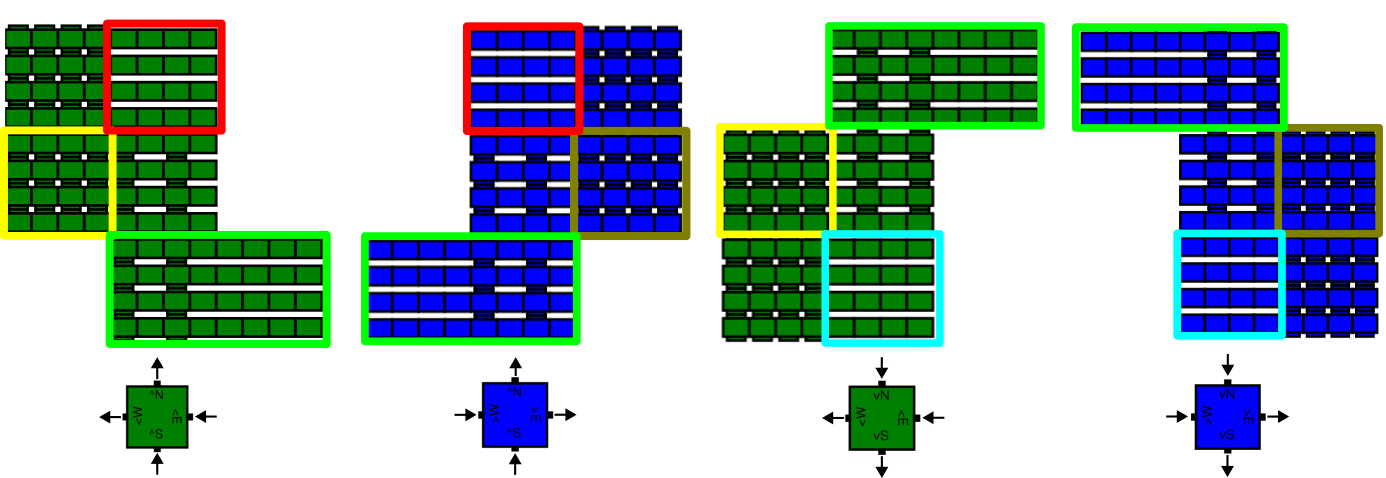}
    \caption{The slat-based macrotile templates for tiles within a standard system which experience adjacent inputs. Cells are bounded by squares to show their functionality, and correspond to an output slat template which may connect to the associated cell. Cells are signified using the same color conventions as Figure \ref{fig:standard_3x3_color_conventions}.}
    \label{fig:standard_3x3_adjacent}
    \end{figure}

    Each tile in $\calT$ is simulated by a macrotile of size $3c \times 3c$ in $\mathcal{S}$. For $c = 4$, this means that the $12 \times 12$ square whose southwest coordinate is $(12i, 12j)$, for every $i, j \in \mathbb{Z}$, will map under $R$ to either empty space or to a tile in $\calT$. An example is shown in Figure \ref{fig:standard_3x3_cell_coloring}. Each slat in a macrotile is of a unique type.\footnote{Using techniques of \cite{SlatAcceleration}, it is possible to reuse slat types within macrotiles to reduce the slat complexity, but for ease of explanation we present our constructions without that optimization.}  We use the same definitions for \emph{slat group}, \emph{body slats}, and \emph{output slats} as are defined in Section 3.1.
    
    \begin{figure}
    \centering
    \includegraphics[width=0.4\textwidth]{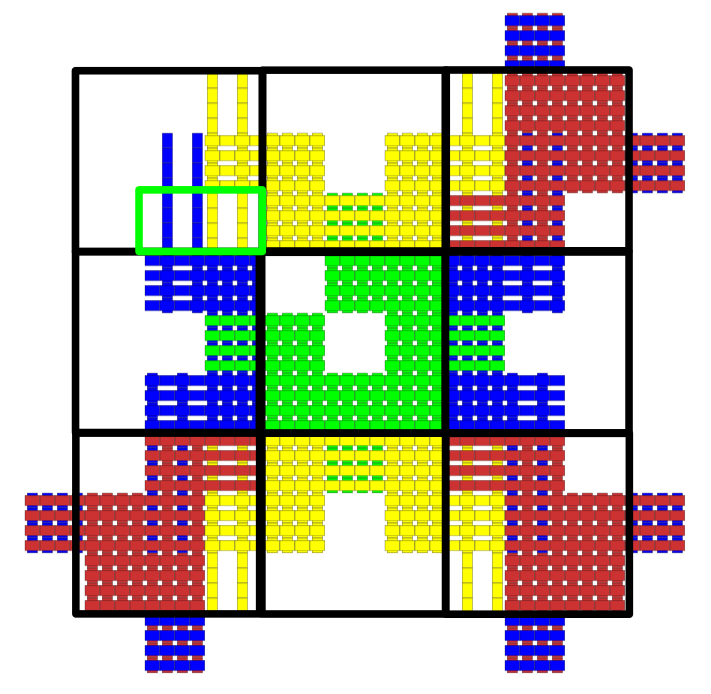}
    \caption{An example of a portion of an assembly composed of (partial) $3c \times 3c$ macrotiles simulating a standard aTAM system for $c = 4$. Nine of the macrotile locations are outlined in black squares. The macrotile simulating $t$ from Figure \ref{fig:standard_3x3_cell_coloring} would attach into the top left such macrotile location. Its (light blue) body slats would attach to the 2 west output slats from the macrotile to the east (yellow) and the 2 north output slats from the macrotile to the south (dark blue). These light blue slats can bind in any order, and as soon as one binds, the macrotile resolves to $t$. Only once they have all 4 bound can the east output slats (red) bind. Only once all 4 of those have bound can the dark blue body slats bind, then finally the 2 orange output slats. Thus, the growth of a macrotile is well-ordered, and outputs are only presented after a macrotile resolves, enforcing the restrictions of simulation.}
    \label{fig:standard_3x3_construction}
    \end{figure}

    Let $t_n \in T$, for $0 \le n < |T|$, be the $n$th tile in tile set $T$. We will refer to the string ``$t_n$'' as the unique name of $t_n$. Given the directions of growth, and the dynamics of a standard aTAM system, the following list contains all possible valid signatures for any $t_n$ of a standard system. 

    \begin{enumerate}
    \item Strength 2 Input Tiles (Figure \ref{fig:standard_3x3_S2}):
        \begin{enumerate}
            \item Seed tile: Input=$\emptyset$, Output=(N,$\{0,1,2\}$),(E,$\{0,1,2\}$),(S,$\{0,1,2\}$),(W,$\{0,1,2\}$) 
            \item North tile: Input=(N,2), Output=(E,$\{0,1,2\}$),(S,$\{0,1,2\}$),(W,$\{0,1,2\}$) 
            \item East tile: Input=(E,2), Output=(N,$\{0,1,2\}$),(S,$\{0,1,2\}$),(W,$\{0,1,2\}$) 
            \item South tile: Input=(S,2), Output=(N,$\{0,1,2\}$),(E,$\{0,1,2\}$),(W,$\{0,1,2\}$)
            \item West tile: Input=(W,2), Output=(N,$\{0,1,2\}$),(E,$\{0,1,2\}$),(S,$\{0,1,2\}$) 
        \end{enumerate}
    \item Adjacent Input Tiles (Figure \ref{fig:standard_3x3_adjacent}):
        \begin{enumerate}
            \item North-East tile: Input=(N,1),(E,1) Output=(S,$\{0,1,2\}$),(W,$\{0,1,2\}$)
            \item North-West tile: Input=(N,1),(W,1), Output=(E,$\{0,1,2\}$),(S,$\{0,1,2\}$)
            \item South-East tile: Input=(S,1),(E,1), Output=(N,$\{0,1,2\}$),(W,$\{0,1,2\}$)
            \item South-West tile: Input=(S,1),(W,1), Output=(N,$\{0,1,2\}$),(E,$\{0,1,2\}$)
        \end{enumerate}
    \end{enumerate}

    When $t_n$ is simulated using a standard macrotile, its output slats may only assemble into a small, fixed set of configurations to allow for cooperation with macrotiles which will simulate tiles in the $\calT$-frontier. Each item in this set of configurations will be referred to as an \emph{output slat template}, and each valid output slat template for standard systems is shown in Figure \ref{fig:standard_3x3_output_templates}.
    
    Figure \ref{fig:standard_3x3_S2} shows a macrotile template for any strength-2 input tile signatures, and Figure \ref{fig:standard_3x3_adjacent} shows tiles with adjacent input tile signatures and their corresponding macrotile templates, both of these with output locations marked to signify the ability to include an output slat template. The longest slat-length to be exhibited by a macrotile is of length $3c$, which is shown in Figure \ref{fig:standard_3x3_examples}. To build $\mathcal{S}$ for each $t_n$, we use the same methods as are described in section {\ref{sec:zig-zag}}. As with the macrotile representation function $R$ in the proof of Theorem \ref{thm:zig-zag}, $R$ can simply inspect each macrotile location (each of size $3c \times 3c$) and resolve to the correct corresponding tile as soon as a body slat appears, determining the tile of $\calT$ to which it should map by the prefix of its name.  Furthermore, following the same arguments as for the proof of Theorem \ref{thm:zig-zag}, the construction succeeds in building $\mathcal{S}$ so that it simulates $\calT$ under $R$ by proper and consistent alignment of input and output regions and correct assignment of glues. (An example of a partial assembly can be seen in Figure \ref{fig:standard_3x3_construction}.)
    Therefore, $\mathcal{S}$ simulates $\calT$, an arbitrary standard aTAM system, under $R$ using cooperativity $c$ and macrotiles of size $3c \times 3c$ and maximum slat length $3c$, and Theorem \ref{thm:standard} is proven.
\end{proof}


\subsection{Technical details for the simulation of the class of standard with across-the-gap aTAM systems}\label{sec:standard-across-the-gap-append}

In this section we provide the full technical details for the proof of Theorem~\ref{thm:standard+across-the-gap} from Section~\ref{sec:standard-across-the-gap}.

\begin{proof}
    We prove Theorem \ref{thm:standard+across-the-gap} by construction. Thus, starting with an arbitrary standard aTAM system with across-the-gap $\calT = (T, \sigma, 2)$ and given any $c > 2$ such that $c \mod 2 = 0$ we show how to create aSAM system $\mathcal{S} = (S, \sigma', c)$ and macrotile representation function $R$ such that $\mathcal{S}$ simulates $\calT$ under $R$.
    
    First, we are given that $\calT$ is a standard aTAM system with across-the-gap, meaning that it possesses the set of characteristics that are defined in Definition \ref{def:standard-with-across-the-gap}. That is to say, $\calT$ is directed, an IO TAS, and for every $t \in T$, the sides that have input markings are either a single side with a strength-2 glue, or two sides each with a strength-1 glue. Additionally, there are no mismatches in the terminal assembly (i.e. all adjacent pairs of sides of pairs of tiles in $\alpha \in \termasm{\calT}$ have the same glue label and strength).

    \begin{figure}
    \centering
    \begin{subfigure}{0.1\textwidth}
        \centering
        \includegraphics[width=1.0\textwidth]{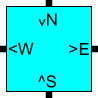}
        \caption{\label{fig:standard+ATG_tile}}
    \end{subfigure}
    \hspace{20pt}
    \begin{subfigure}{0.3\textwidth}
        \centering
        \includegraphics[width=1.0\textwidth]{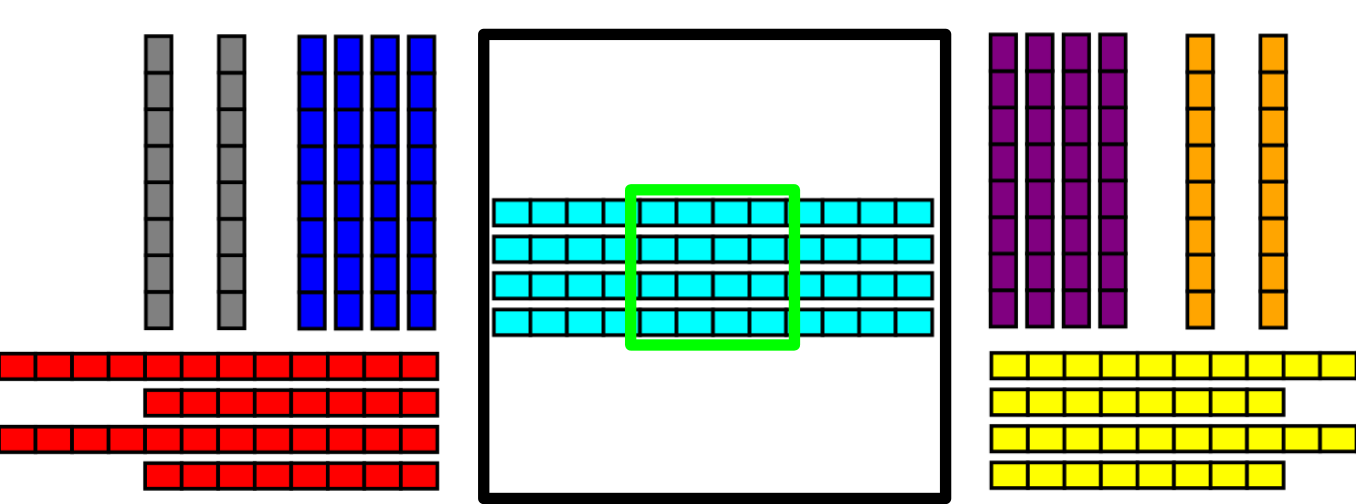}
        \caption{\label{fig:standard_3x3_with_across-the-gap_marked_macrotile}}
    \end{subfigure}
    \hspace{20pt}
    \begin{subfigure}{0.25\textwidth}
        \centering
        \includegraphics[width=1.0\textwidth]{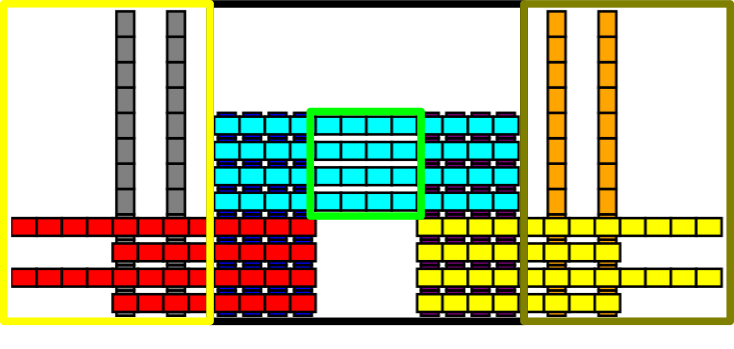}
        \caption{\label{fig:standard_3x3_with_across-the-gap_macrotile_parts}}
    \end{subfigure}
    \hspace{20pt}
    \begin{subfigure}{0.15\textwidth}
        \centering
        \includegraphics[width=1.0\textwidth]{images/cell-color-conventions.png}
        \caption{\label{fig:standard_3x3+ATG_color_conventions}}
    \end{subfigure}
    \caption{(a) An IO-marked tile type \textit{t} from a standard aTAM tile set with across-the-gap. It has strength-1 inputs on the north and south, and strength-1 outputs on the east and west. (b) A set of 7 slat groups for the macrotile simulating $t$ at $c = 4$. The light blue group contains body slats that are entirely within the square of the macrotile which they cause to resolve to $t$. The purple group contains body slats which serve to construct a region in which the yellow output slats may bind. The yellow group contains four output slats, two of which have their length extended by $c$ in order to allow across-the-gap cooperation. The yellow group and the orange group combined are the output slats that serve as an east output and extend into the eastern neighboring macrotile location. The dark blue group contains body slats which serve to construct a region in which the red group of output slats may bind. Similar to the yellow group, the red group contains four output slats, two of which have their length extended by $c$ in order to exhibit across-the-gap cooperation. The red and grey groups combined are the output slats that serve as a west output and extend into the western neighboring macrotile location.
    (c) An example of the assembled 3c × 3c macrotile for $t$, with cells marked to show the portions of $t$ that they represent, following the conventions of (d). (d) A cell enclosed in a green square represents the cell in which the initial body slats of a macrotile bind, causing it to resolve to $t$. The cells enclosed in red, gold, light blue, and yellow squares denote the cells in which the slats expose glues representing the output glues in the N, E, S, and W directions, respectively. \label{fig:standard_3x3+ATG_cell_coloring}}
    \end{figure}
    

    \begin{figure}
    \centering
    \includegraphics[width=0.9\textwidth]{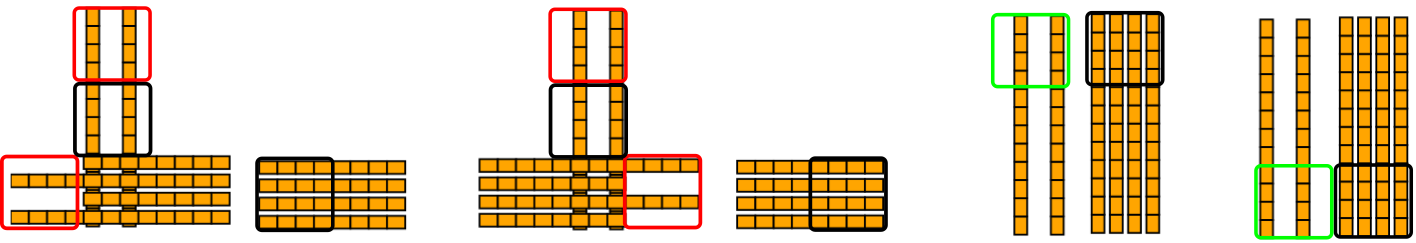}
    \caption{Output slat templates for standard aTAM systems with across-the-gap. Output glue locations which are used for adjacent, or strength-2 output are marked with a black box, ones which are used for only across-the-gap output are marked with a red box, and ones which are used for both adjacent and across-the-gap output are marked with a green box. Ordered west strength-1, strength-2, rast strength-1, strength-2, north strength-1, strength-2, south strength-1, strength-2 respectively.}
    \label{fig:standard_3x3+ATG_output_templates}
    \end{figure}
    
    \begin{figure}
    \centering
    \begin{subfigure}{0.2\textwidth}
        \centering
        \includegraphics[width=1.0\textwidth]{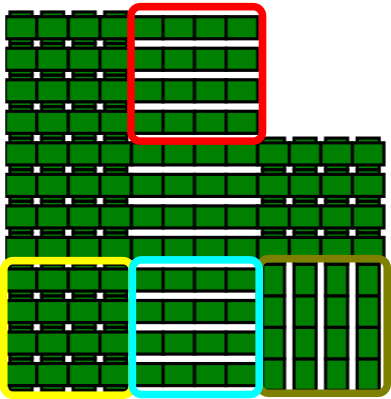}
        \caption{\label{fig:standard_3x3+ATG_S2_macrotile}}
    \end{subfigure}
    \hspace{20pt}
    \begin{subfigure}{0.2\textwidth}
        \centering
        \includegraphics[width=1.0\textwidth]{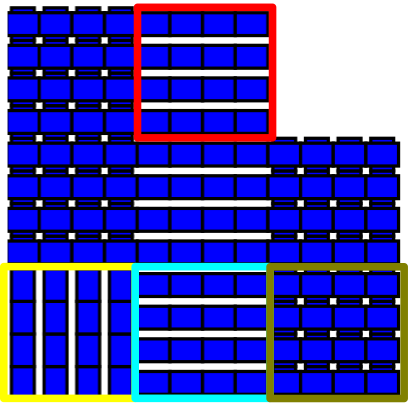}
        \caption{\label{fig:standard_3x3+ATG_S2_macrotile_2}}
    \end{subfigure}
    \caption{Strength-2 macrotile templates for standard aTAM systems with across-the-gap. Cells are bounded by squares to show their functionality, and mark cell locations where output slat templates may be added to the macrotile. One of the marked cells may instead be designated as an input, and have its domains assigned such that they connect with those provided by the output slats of a neighboring macrotile whose output is of the same glue type. Cells are signified using the same color conventions as Figure \ref{fig:standard_3x3+ATG_color_conventions}. If a macrotile accepts west input, the macrotile shown in (b) is chosen as to not block output to the south, otherwise, the macrotile shown in (a) is chosen.}.\label{fig:standard_3x3+ATG_S2}
    \end{figure}

    \begin{figure}
    \centering
    \includegraphics[width=0.9\textwidth]{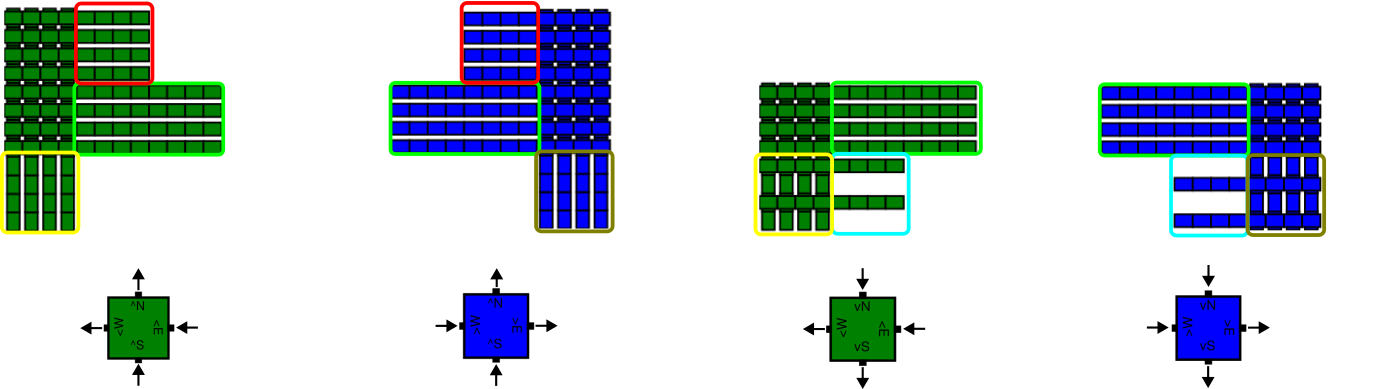}
    \caption{The slat-based macrotile templates for tiles within a standard aTAM system with across-the-gap which experience adjacent input. Cells are bounded by squares to show their functionality, and correspond to an output slat template which may connect to the associated cell. Cells are signified using the same color conventions as Figure \ref{fig:standard_3x3+ATG_color_conventions}. Macrotiles which exhibit north input construct output slats to the south through the cooperation with the abutting across-the-gap output domains from the neighboring macrotile which provided input to the east, or west respectively.}
    \label{fig:standard_3x3+ATG_adjacent}
    \end{figure}

    \begin{figure}
    \centering
    \includegraphics[width=0.4\textwidth]{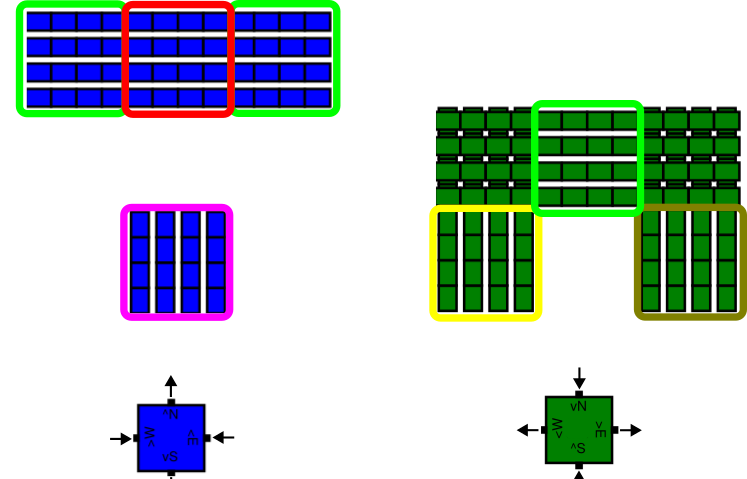}
    \caption{The slat-based macrotile templates for tiles within a standard aTAM system with across-the-gap which experience across-the-gap inputs. Cells are bounded by squares to show their functionality, and correspond to an output slat template which may connect to the associated cell. Cells are signified using the same color conventions as Figure \ref{fig:standard_3x3+ATG_color_conventions}. The macrotile template which accepts strength-1 input from the east and west has their southern slat group marked with a pink square in order to signify that it accepts input from both the east and west neighboring macrotiles, as well as be able to connect to a corresponding output template to the south.}
    \label{fig:standard_3x3+ATG_ATG}
    \end{figure}
    
    \begin{figure}
    \centering
    \begin{subfigure}{0.24\textwidth}
        \centering
        \includegraphics[width=1.0\textwidth]{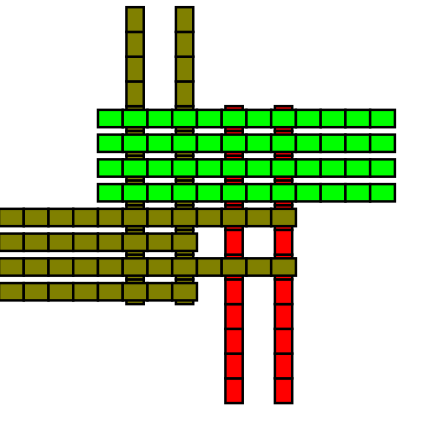}
        \caption{\label{fig:standard+ATG_tile_ES_cooperation}}
    \end{subfigure}
    \hspace{20pt}
    \begin{subfigure}{0.3\textwidth}
        \centering
        \includegraphics[width=1.0\textwidth]{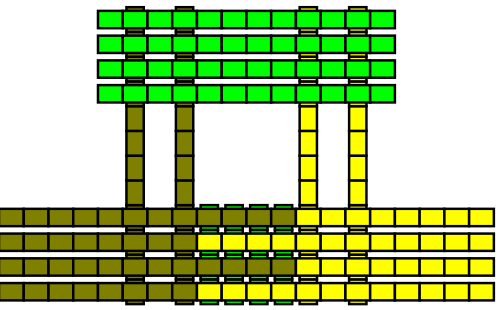}
        \caption{\label{fig:standard_3x3+ATG_EW_cooperation}}
    \end{subfigure}
    \hspace{20pt}
    \begin{subfigure}{0.0712\textwidth}
        \centering
        \includegraphics[width=1.0\textwidth]{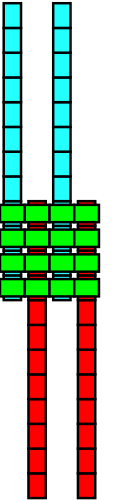}
        \caption{\label{fig:standard_3x3+ATG_NS_cooperation}}
    \end{subfigure}
    \caption{Examples of cooperation in the construction for standard aTAM systems with across-the-gap, showing only the output slats from neighbors and body slats of the resolving macrotile. (a) The east and north output slats allow body slats to attach and resolve macrotile $t_n$ through adjacent cooperation. (b) The east and west output slats allow body slats to attach and resolve macrotile $t_n$ through across-the-gap cooperation. (c) The north and south output slats allow body slats (truncated in the figure) to attach and resolve macrotile $t_n$ through across-the-gap cooperation. Slat groups are colored in accordance with the color conventions in Figure \ref{fig:standard_3x3+ATG_color_conventions}.\label{fig:standard_3x3+ATG_output_cooperation}}
    \end{figure}

    \begin{figure}
    \centering
    \begin{subfigure}{0.30\textwidth}
        \centering
        \includegraphics[width=1.0\textwidth]{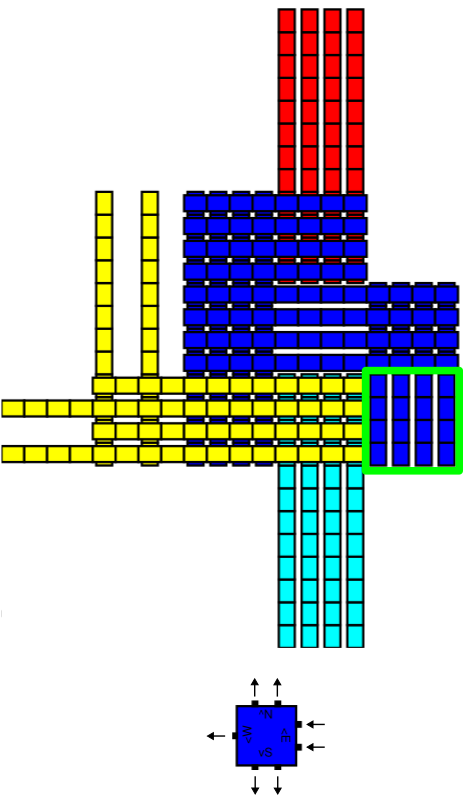}
        \caption{\label{fig:standard_3x3+ATG_S2_example}}
    \end{subfigure}
    \hspace{20pt}
    \begin{subfigure}{0.19\textwidth}
        \centering
        \includegraphics[width=1.0\textwidth]{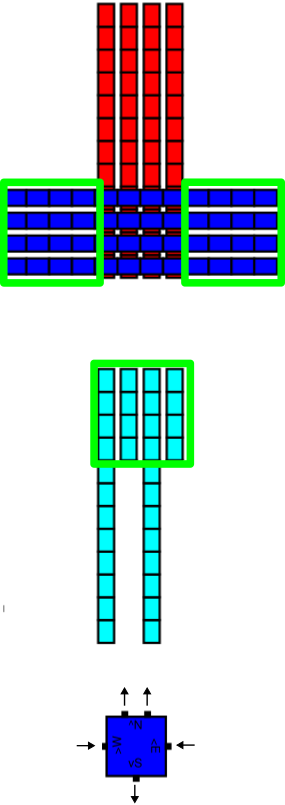}
        \caption{\label{fig:standard_3x3+ATG_ATG_example}}
    \end{subfigure}
    \hspace{20pt}
    \caption{Example macrotiles for a standard aTAM system with across the gap. (a) A macrotile which takes an east strength-2 input, and provides a north strength-2 output, south strength-2 output, and east strength-1 output. (b) A macrotile which takes east and west strength-1 inputs, and provides a north strength-2 output, and south strength-1 output. Despite the disconnection between each slat groups, both groups represent the same tile $t$, and the placement of any slat from either group resolves the macrotile to $t$. (The output slats allowing for this binding are shown in Figure \ref{fig:standard_3x3+ATG_EW_cooperation}.) Output slat templates are colored in accordance with Figure \ref{fig:standard_3x3+ATG_color_conventions}, and cells which may accept input are marked with a green box. \label{fig:standard_3x3_cell_coloring2}}
    \end{figure}
    
    Each tile in $\calT$ is simulated by a macrotile of size $3c \times 3c$ in $\mathcal{S}$. For $c = 4$, this means that the $12 \times 12$ square whose southwest coordinate is $(12i, 12j)$, for every $i, j \in \mathbb{Z}$, will map under $R$ to either empty space or a tile in $\calT$. An example is shown in Figure \ref{fig:standard_3x3+ATG_cell_coloring}. Each slat in a macrotile is of a unique type.\footnote{Using techniques of \cite{SlatAcceleration}, it is possible to reuse slat types within macrotiles to reduce the slat complexity, but for ease of explanation we present our constructions without that optimization.} We use the same definitions for \emph{slat group}, \emph{output slat group}, \emph{body slats}, and \emph{output slats} as are defined in Sections \ref{sec:zig-zag} and \ref{sec:standard}.
    
    \begin{figure}
    \centering
    \includegraphics[width=0.7\textwidth]{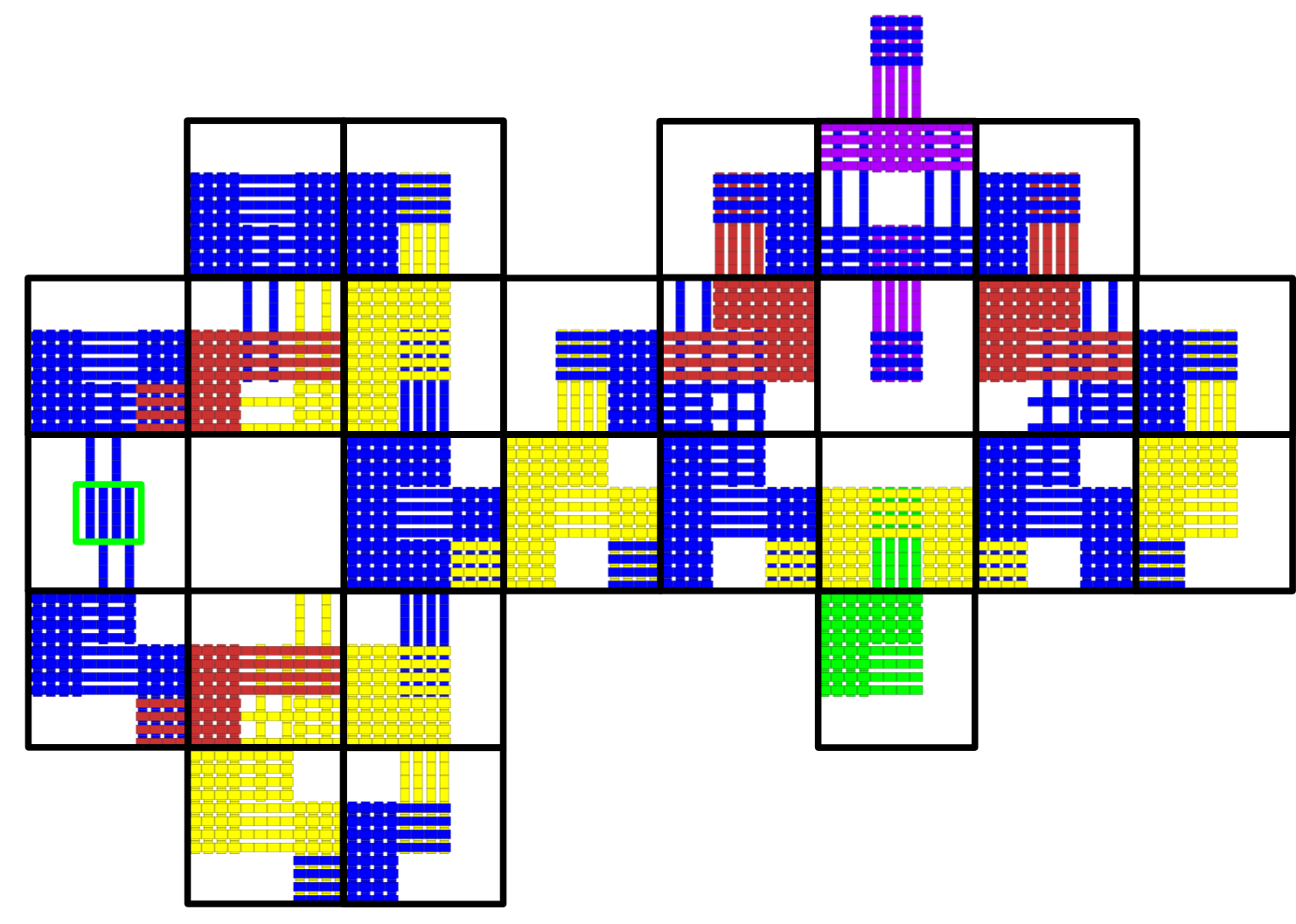}
    \caption{An example of an assembly composed of (partial) $3c \times 3c$ macrotiles simulating a standard aTAM system for $c = 4$. Twenty-six of the macrotile locations are outlined in black squares. The macrotile simulating $t$ from Figure \ref{fig:standard_3x3+ATG_cell_coloring} would attach into the middle of the leftmost of such macrotile locations, containing a green square. Its (light blue) body slats would attach to the 2 south output slats from the macrotile to the north (dark blue) and the 2 north output slats from the macrotile to the south (dark blue). These light blue slats can bind in any order, and as soon as one binds, the macrotile resolves to $t$. Only once they have all 4 bound can the purple or blue body slats bind. Only once all 4 of either of those have bound can the red or yellow output slats bind, then finally the grey or orange output slats. Thus, the growth of a macrotile is well-ordered, and outputs are only presented after a macrotile resolves, enforcing the restrictions of simulation.}
    \label{fig:standard_3x3+ATG_construction}
    \end{figure}
    
    Let $t_n \in T$, for $0 \le n < |T|$, be the $n$th tile in tile set $T$. We will refer to the string ``$t_n$'' as the unique name of $t_n$. Given the directions of growth, and the dynamics of a standard with across-the-gap aTAM system, the following list contains all possible valid signatures for any $t_n$ of a standard with across-the-gap system.

    \begin{enumerate}
    \item Strength 2 Input Tiles (Figure \ref{fig:standard_3x3+ATG_S2}):
        \begin{enumerate}
            \item Seed tile: Input=$\emptyset$, Output=(N,$\{0,1,2\}$),(E,$\{0,1,2\}$),(S,$\{0,1,2\}$),(W,$\{0,1,2\}$) 
            \item North tile: Input=(N,2), Output=(E,$\{0,1,2\}$),(S,$\{0,1,2\}$),(W,$\{0,1,2\}$) 
            \item East tile: Input=(E,2), Output=(N,$\{0,1,2\}$),(S,$\{0,1,2\}$),(W,$\{0,1,2\}$) 
            \item South tile: Input=(S,2), Output=(N,$\{0,1,2\}$),(E,$\{0,1,2\}$),(W,$\{0,1,2\}$)
            \item West tile: Input=(W,2), Output=(N,$\{0,1,2\}$),(E,$\{0,1,2\}$),(S,$\{0,1,2\}$) 
        \end{enumerate}
    \item Adjacent Input Tiles (Figure \ref{fig:standard_3x3+ATG_adjacent}):
        \begin{enumerate}
            \item North-East tile: Input=(N,1),(E,1) Output=(S,$\{0,1,2\}$),(W,$\{0,1,2\}$)
            \item North-West tile: Input=(N,1),(W,1), Output=(E,$\{0,1,2\}$),(S,$\{0,1,2\}$)
            \item South-East tile: Input=(S,1),(E,1), Output=(N,$\{0,1,2\}$),(W,$\{0,1,2\}$)
            \item South-West tile: Input=(S,1),(W,1), Output=(N,$\{0,1,2\}$),(E,$\{0,1,2\}$)
        \end{enumerate}
    \item Across-the-gap Input Tiles (Figure \ref{fig:standard_3x3+ATG_ATG}):
        \begin{enumerate}
            \item North-South tile: Input=(N,1),(S,1) Output=(E,$\{0,1,2\}$),(W,$\{0,1,2\}$)
            \item East-West tile: Input=(E,1),(W,1) Output=(N,$\{0,1,2\}$),(S,$\{0,1,2\}$)
        \end{enumerate}
    \end{enumerate}

    When $t_n$ is simulated using a standard with across-the-gap macrotile, its output slats may only assemble into a small, fixed set of configurations to allow for cooperation with macrotiles which will simulate in the $\calT$-frontier. Each valid output slat template is shown in Figure \ref{fig:standard_3x3+ATG_output_templates}. Figure \ref{fig:standard_3x3+ATG_S2} shows a macrotile template for any strength-2 input tile signatures, and Figures \ref{fig:standard_3x3+ATG_adjacent} and \ref{fig:standard_3x3+ATG_ATG} show tiles with adjacent, and across-the-gap input tile signatures and their corresponding macrotile templates respectively.
    Examples showing how cooperation is handled can be seen in Figure \ref{fig:standard_3x3+ATG_output_cooperation} and examples of instantiated macrotiles can be seen in Figure \ref{fig:standard_3x3_cell_coloring2}.
    Figure \ref{fig:standard_3x3+ATG_construction} shows a portion of an example assembly of a system $\mathcal{S}$.
    The longest slat-length to be exhibited by a macrotile is of length $4c$, which is shown in Figure \ref{fig:standard_3x3+ATG_S2_example}.
    All of said figures have output locations marked to signify the ability to include an output slat template. To build $\mathcal{S}$ for each $t_n$, we use the same methods as are described in Section \ref{sec:zig-zag}. As with the macrotile representation $R$ in the proof of Theorem \ref{thm:zig-zag}, $R$ can simply inspect each macrotile location (each of size $3c \times 3c$) and resolve to the correct corresponding tile as soon as a body slat appears, determining the tile of $\calT$ to which it should map by the prefix of its name. Furthermore, following the same arguments as in the proof of Theorem \ref{thm:zig-zag}, the construction succeeds in building $\mathcal{S}$ so that it simulates $\calT$ under $R$ by proper and consistent alignment of input and output regions and correct assignment of glues. Therefore, $\mathcal{S}$ simulates $\calT$, an arbitrary standard with across-the-gap aTAM system, under $R$ using cooperativity $c$ and macrotiles of size $3c \times 3c$ and maximum slat length $4c$, and Theorem \ref{thm:standard+across-the-gap} is proven.

\end{proof}


\subsection{Technical details for the simulation of the class of temperature-2 directed systems}\label{sec:determ-with-mismatches-append}

In this section we provide the full technical details for the proof of Theorem~\ref{thm:deterministic+mismatches}. This section expands on the overview provided in Section~\ref{sec:determ-with-mismatches}.

\begin{proof}

    We prove Theorem~\ref{thm:deterministic+mismatches} by construction, and thus, start with an arbitrary directed temperature-2 aTAM system $\calT = (T, \sigma, 2)$. Given a cooperativity $c > 2$ such that $c$ mod $2 = 0$ we show how to create an aSAM system $\mathcal{S} = (S, \sigma', c)$ and macrotile representation function $R$ such that $\mathcal{S}$ simulates $\calT$ under $R$. First, we are given that $\calT$ is directed, an IO-TAS (or we apply the simple transformation discussed in Section \ref{sec:IO-marking} to make it so), and for every $t \in \calT$, the sides that have input markings are either a single side with a strength-2 glue, or two sides each with a strength-1 glue. Additionally, it may be the case that adjacent tiles have mismatching glues, but being a directed system, only 1 terminal assembly is possible and thus tiles of only one type can bind into any location.

    Each tile in $\calT$ is simulated by a macrotile of size $4c \times 4c$ in $\mathcal{S}$. For $c = 4$, this means that the $16 \times 16$ square whose southwest coordinate is $(16i, 16j)$, for every $i, j \in \mathbb{Z}$, will map under $R$ to either empty space or a tile in $\calT$. An example is shown in Figure \ref{fig:deterministic_4x4_cell_coloring}. Each slat in a macrotile is of a unique type.\footnote{Using techniques of \cite{SlatAcceleration}, it is possible to reuse slat types within macrotiles to reduce the slat complexity, but for ease of explanation we present our constructions without that optimization.} We use the same definitions for \emph{slat group}, \emph{output slat group}, \emph{body slats}, and \emph{output slats} as are defined in Sections \ref{sec:zig-zag} and \ref{sec:standard}.

    \begin{figure}
    \centering
    \begin{subfigure}{0.10\textwidth}
        \centering
        \includegraphics[width=1.0\textwidth]{images/zig-zag-tile.png}
        \caption{\label{fig:deterministic_tile}}
    \end{subfigure}
    \hspace{12pt}
    \begin{subfigure}{0.4\textwidth}
        \centering
        \includegraphics[width=1.0\textwidth]{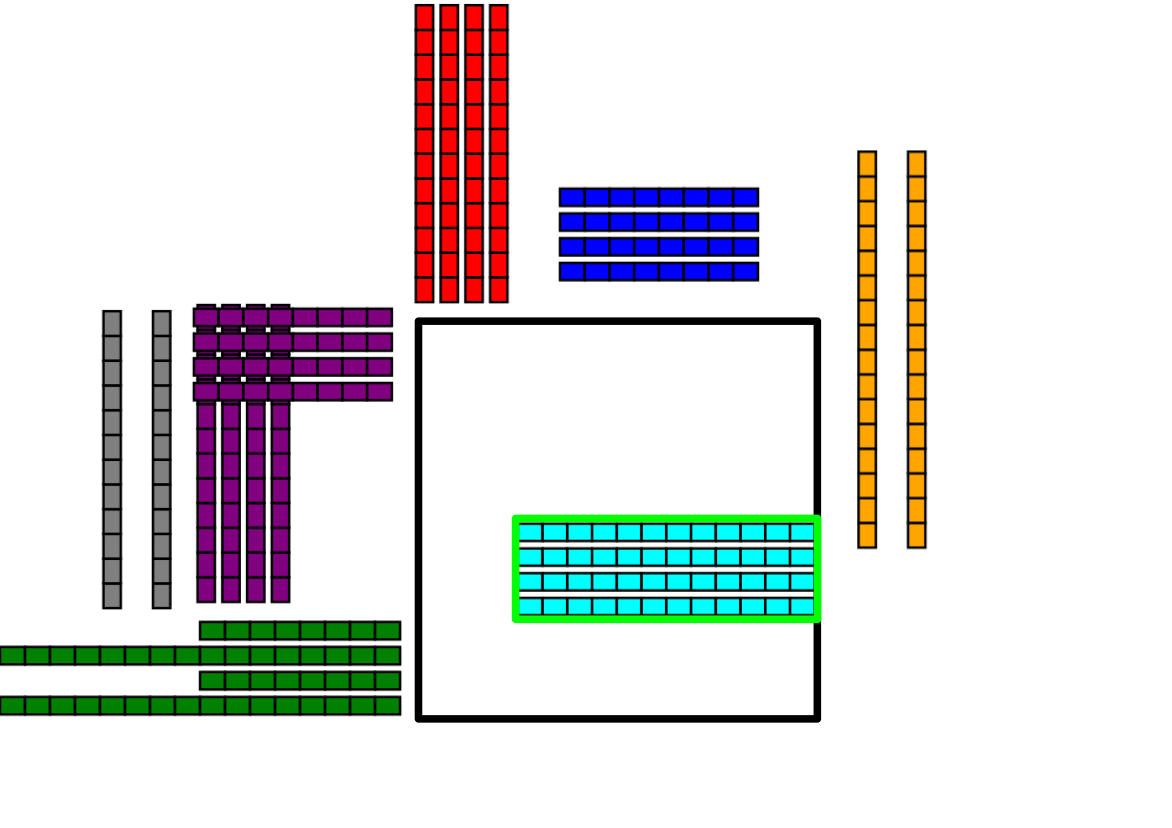}
        \caption{\label{fig:deterministic_4x4_macrotile_parts}}
    \end{subfigure}
    \hspace{-12pt}
    \begin{subfigure}{0.25\textwidth}
        \centering
        \includegraphics[width=1.0\textwidth]{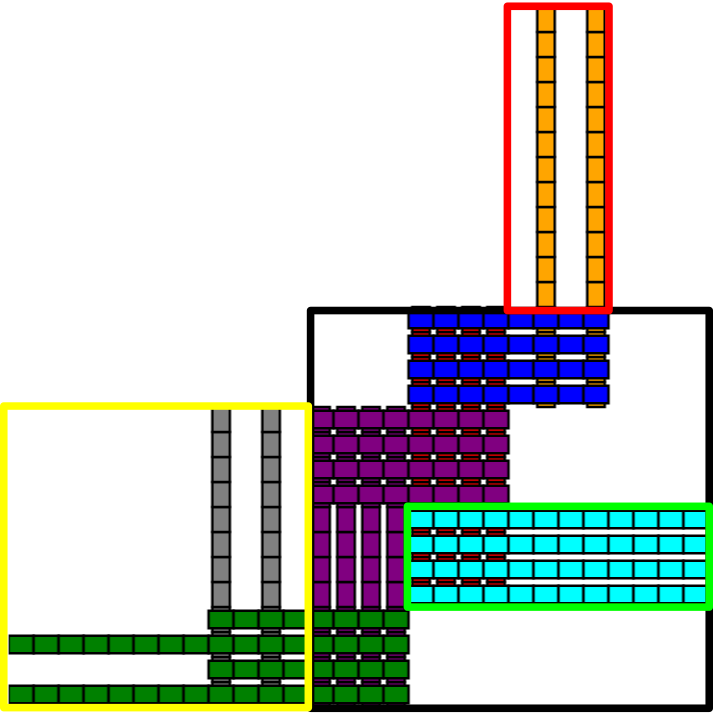}
        \caption{\label{fig:marked_deterministic_4x4_macrotile}}
    \end{subfigure}
    \hspace{12pt}
    \begin{subfigure}{0.15\textwidth}
        \centering
        \includegraphics[width=1.0\textwidth]{images/cell-color-conventions.png}
        \caption{\label{fig:deterministic_4x4_color_conventions}}
    \end{subfigure}
    \caption{(a) An IO-marked tile type \textit{t} from a directed temperature-2 aTAM system. It has strength-1 inputs on the east and south, and strength-1 outputs on the north and west. (b) A set of 8 slat groups for the macrotile simulating $t$ at $c = 4$. The light blue group contains body slats that are entirely within the square of the macrotile which they cause to resolve to $t$. The red group contains body slats which serve to construct a region in which the purple and blue body slat groups may bind. The blue group contains body slats that serve to construct a region where the orange output slats may bind, which extend into the northern neighboring macrotile to exhibit north output. Similarly, the 2 purple groups serve to construct a region where the green output slats may bind.
    The green and the grey groups combined are the output slats that serve as an west output and extend into the eastern neighboring macrotile location. 
    (c) An example of the assembled 4c × 4c macrotile for $t$, with cells marked to show the portions of $t$ that they represent, following the conventions of (d). (d) A cell enclosed in a green square represents the cell in which the initial body slats of a macrotile bind, causing it to resolve to $t$. The cells enclosed in red, gold, light blue, and yellow squares denote the cells in which the slats expose glues representing the output glues in the north, east, south, and west directions, respectively. \label{fig:deterministic_4x4_cell_coloring}}
    \end{figure}
    
    \begin{figure}
        \centering
        \includegraphics[width=0.9\textwidth]{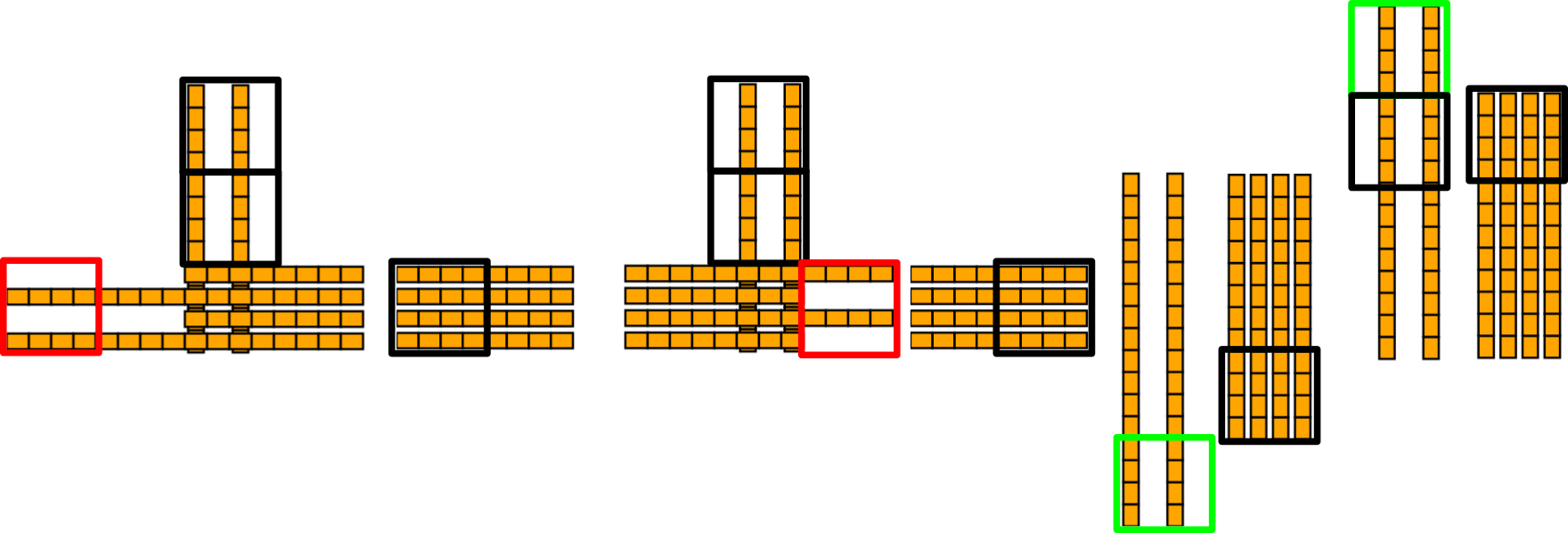}
        \caption{Output slat templates for directed temperature-2 aTAM systems. Output glue locations which are used for adjacent, or strength-2 output are enclosed in a black box, ones which are used for only across-the-gap output are enclosed in a red box, and ones which are used for both adjacent and across-the-gap output are enclosed in a green box. From left to right: west strength-1, west strength-2, east strength-1, east strength-2, south strength-1, south strength-2, north strength-1, north strength-2.}
        \label{fig:deterministic_4x4_output_templates}
    \end{figure}
    
    Let $t_n \in T$, for $0 \le n < |T|$ be the $nth$ in tile set $T$. We will refer to the string ``$t_n$'' as the unique name of $t_n$. Given the directions of growth, and the dynamics of a directed temperature-2 aTAM system, the following list contains all possible valid signatures for any $t_n$.

    \begin{enumerate}
    \item Strength 2 Input Tiles (Figure \ref{fig:deterministic_4x4_S2}):
        \begin{enumerate}
            \item Seed tile: Input=$\emptyset$, Output=(N,$\{0,1,2\}$),(E,$\{0,1,2\}$),(S,$\{0,1,2\}$),(W,$\{0,1,2\}$) 
            \item North tile: Input=(N,2), Output=(E,$\{0,1,2\}$),(S,$\{0,1,2\}$),(W,$\{0,1,2\}$) 
            \item East tile: Input=(E,2), Output=(N,$\{0,1,2\}$),(S,$\{0,1,2\}$),(W,$\{0,1,2\}$) 
            \item South tile: Input=(S,2), Output=(N,$\{0,1,2\}$),(E,$\{0,1,2\}$),(W,$\{0,1,2\}$)
            \item West tile: Input=(W,2), Output=(N,$\{0,1,2\}$),(E,$\{0,1,2\}$),(S,$\{0,1,2\}$) 
        \end{enumerate}
    \item Adjacent Input Tiles (Figure \ref{fig:deterministic_4x4_adjacent}):
        \begin{enumerate}
            \item North-East tile: Input=(N,1),(E,1) Output=(S,$\{0,1,2\}$),(W,$\{0,1,2\}$)
            \item North-West tile: Input=(N,1),(W,1), Output=(E,$\{0,1,2\}$),(S,$\{0,1,2\}$)
            \item South-East tile: Input=(S,1),(E,1), Output=(N,$\{0,1,2\}$),(W,$\{0,1,2\}$)
            \item South-West tile: Input=(S,1),(W,1), Output=(N,$\{0,1,2\}$),(E,$\{0,1,2\}$)
        \end{enumerate}
    \item Across-the-gap Input Tiles (Figure \ref{fig:deterministic_4x4_ATG}):
        \begin{enumerate}
            \item North-South tile: Input=(N,1),(S,1) Output=(E,$\{0,1,2\}$),(W,$\{0,1,2\}$)
            \item East-West tile: Input=(E,1),(W,1) Output=(N,$\{0,1,2\}$),(S,$\{0,1,2\}$)
        \end{enumerate}
    \end{enumerate}

    \begin{figure}
    \centering
    \includegraphics[width=0.4\textwidth]{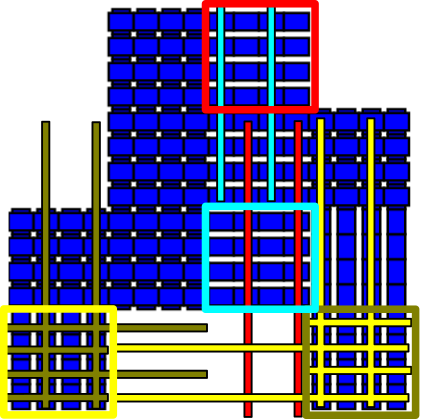}
    \caption{Strength-2 macrotile template for a directed temperature 2 aTAM system. Cells are bounded by squares to show their functionality, and signify cell locations where output slat templates may be added to the macrotile. One of the cells marked by a square may instead be designated as an input, and have its domains assigned such that they connect with those provided by the output slats of a neighboring macrotile whose output is of the same glue type.Overlapping output slats which may enter the macrotile through a mismatch with a neighboring macrotile are shown. Cells, and overlapping output slats are signified using the same color conventions as Figure \ref{fig:deterministic_4x4_color_conventions}. Strength-1 overlapping outputs are displayed, as they are within more $c \times c$ cells than their strength-2 counterparts. No binding interactions occur between a macrotile and any overlapping outputs. Such overlaps can only possibly prevent the resolved macrotile from adding output slats for that direction (if the resolved tile has corresponding output glues), which is clearly unnecessary if a resolved macrotile already exists in that direction and has output the overlapping slats.}
    \label{fig:deterministic_4x4_S2}
    \end{figure}
    
    \begin{figure}
    \centering
    \includegraphics[width=0.9\textwidth]{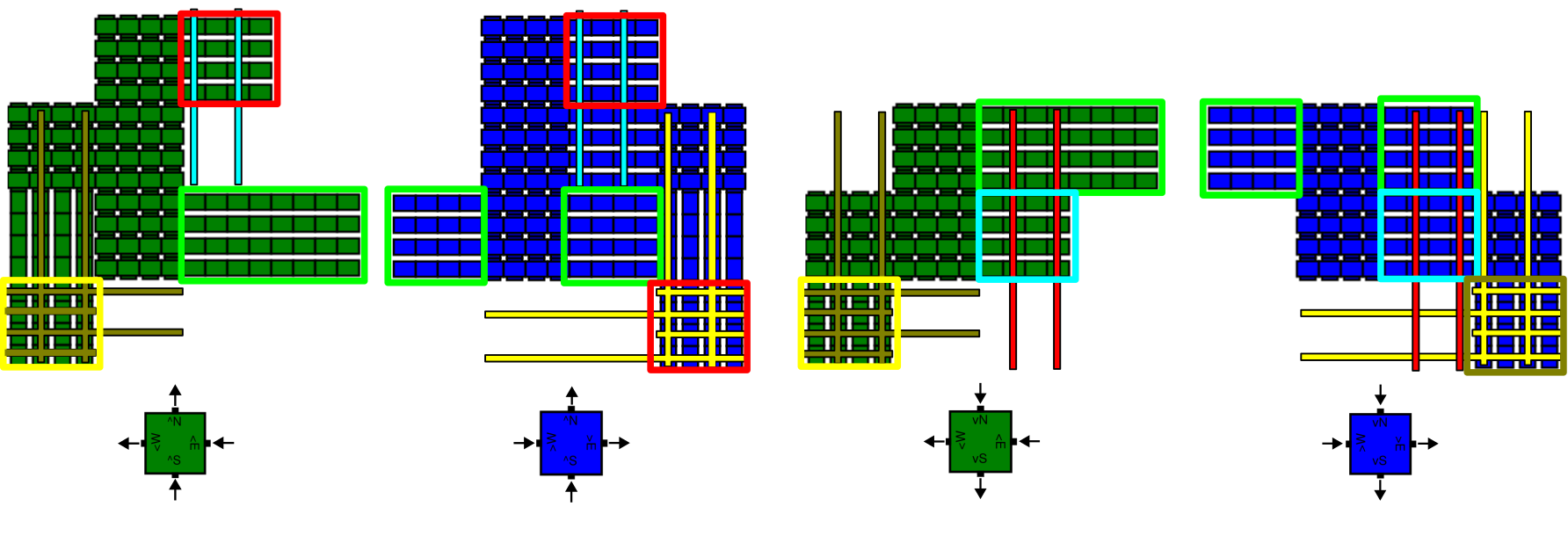}
    \caption{The slat-based macrotile templates for tiles which experience adjacent inputs. Cells are bounded by squares to show their functionality, and correspond to an output slat template which may connect to the associated cell. Overlapping output slats which may enter the macrotile through a mismatch with a neighboring macrotile are shown. Cells, and overlapping output slats are signified using the same color conventions as Figure \ref{fig:deterministic_4x4_color_conventions}. Strength-1 overlapping outputs are displayed, as they are within more $c \times c$ cells than their strength-2 counterparts. No binding interactions occur between a macrotile, and any overlapping outputs. Such overlaps can only possibly prevent the resolved macrotile from adding output slats for that direction (if the resolved tile has corresponding output glues), which is clearly unnecessary if a resolved macrotile already exists in that direction and has output the overlapping slats.}
    \label{fig:deterministic_4x4_adjacent}
    \end{figure}

    \begin{figure}
    \centering
    \includegraphics[width=0.4\textwidth]{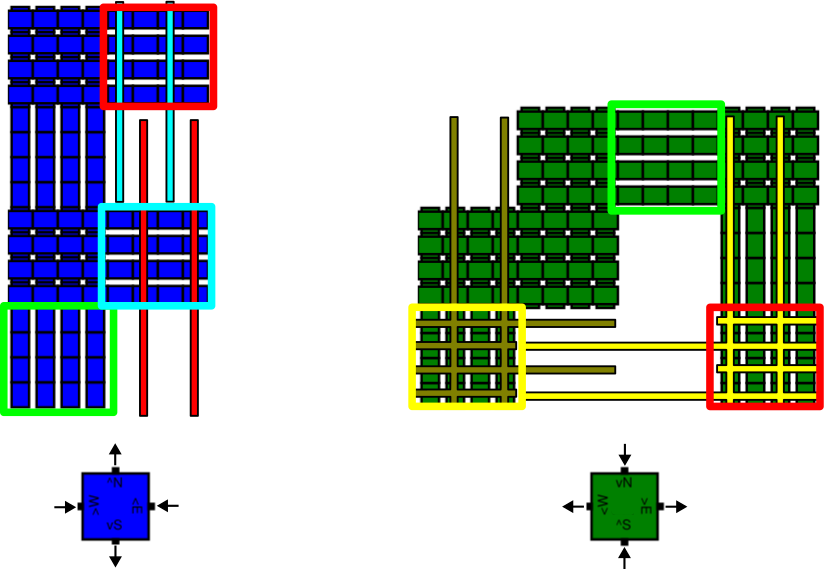}
    \caption{The slat-based macrotile templates for tiles which experience across-the-gap inputs. Cells are bounded by squares to show their functionality, and correspond to an output slat template which may connect to the associated cell. Cells are signified using the same color conventions as Figure \ref{fig:deterministic_4x4_color_conventions}. Overlapping output slats which may enter the macrotile through a mismatch with a neighboring macrotile are shown, signified using the same color conventions as Figure \ref{fig:deterministic_4x4_color_conventions}. Strength-1 overlapping outputs are displayed, as they are within more $c \times c$ cells than their strength-2 counterparts. No binding interactions occur between a macrotile, and any overlapping output. Such overlaps can only possibly prevent the resolved macrotile from adding output slats for that direction (if the resolved tile has corresponding output glues), which is clearly unnecessary if a resolved macrotile already exists in that direction and has output the overlapping slats.}
    \label{fig:deterministic_4x4_ATG}
    \end{figure}
    
    When $t_n$ is simulated by a macrotile, its output slats may only assemble into a small, fixed set of configurations to allow for cooperation with macrotiles in adjacent locations. Each valid output slat template is shown in Figure \ref{fig:deterministic_4x4_output_templates}.    
    
    In order to guaranteed that mismatches do not block the construction of $t_n$, the following conditions must be met: 
    \begin{enumerate}
        \item The presence of any additional output slats over what is required to cause a given macrotile to resolve to $t_n$ may not block the ability of that macrotile to resolve to $t_n$.

        \item The presence of any additional output slats over what is required to cause a given macrotile to resolve to $t_n$ may not block the ability of the macrotile representing $t_n$ to exhibit output in any direction apart from that which is currently occupied by incident output slats.
        
    \end{enumerate}

    Both of these conditions are satisfied in our macrotiles for simulating directed temperature-2 aTAM systems. Figure \ref{fig:deterministic_4x4_S2} shows a macrotile template for any strength-2 input tile signatures, and Figures \ref{fig:deterministic_4x4_adjacent} and \ref{fig:deterministic_4x4_ATG} show tiles with adjacent and across-the-gap input tile signatures and their corresponding macrotile templates, respectively, while Figure \ref{fig:deterministic_4x4_examples} shows some additional macrotile examples and Figure \ref{fig:deterministic_4x4_output_cooperation} shows all possible combinations of cooperation.
    In these figures, output locations are marked to signify the ability to include an output slat template.
    Additionally, Figures \ref{fig:deterministic_4x4_S2}-\ref{fig:deterministic_4x4_ATG} include the overlapping placements of potentially incident output slat groups which may be placed into the macrotile in the cases of simulating mismatches in $\calT$. By inspecting these figures, it may be verified that none of those incident output slat groups would block either the ability for the macrotile to resolve to $t_n$, or prevent it from creating output in any direction apart from that which is currently occupied by the incident slat group.
    The longest slat-length to be exhibited by a macrotile is of length $4c$, which is shown in Figure \ref{fig:deterministic_4x4_S2_example}. To build $\mathcal{S}$ for each $t_n$, we use the same methods as are described in Section \ref{sec:zig-zag}.
    As with the macrotile representation $R$ in the proof of Theorem \ref{thm:zig-zag}, $R$ can simply inspect each macrotile location (each of size $4c \times 4c$) and resolve to the correct corresponding tile as soon as a body slat appears determining the tile of $\calT$ to which it should map by the prefix of its name. Furthermore, following the same arguments as in the proof of Theorem \ref{thm:zig-zag}, the construction succeeds in building $\mathcal{S}$ so that it simulates $\calT$ under $R$ by proper and consistent alignment of input and output regions and correct assignment of glues.  (An example of a partial assembly can be seen in Figure \ref{fig:standard_4x4+ATG_construction}.)
    Therefore, $\mathcal{S}$ simulates $\calT$, an arbitrary directed temperature-2 aTAM system, under $R$ using cooperativity $c$ and macrotiles of size $4c \times 4c$, with a maximum slat length of $4c$, and Theorem \ref{thm:deterministic+mismatches} is proven.

    \begin{figure}
    \centering
    \begin{subfigure}{0.3\textwidth}
        \centering
        \includegraphics[width=1.0\textwidth]{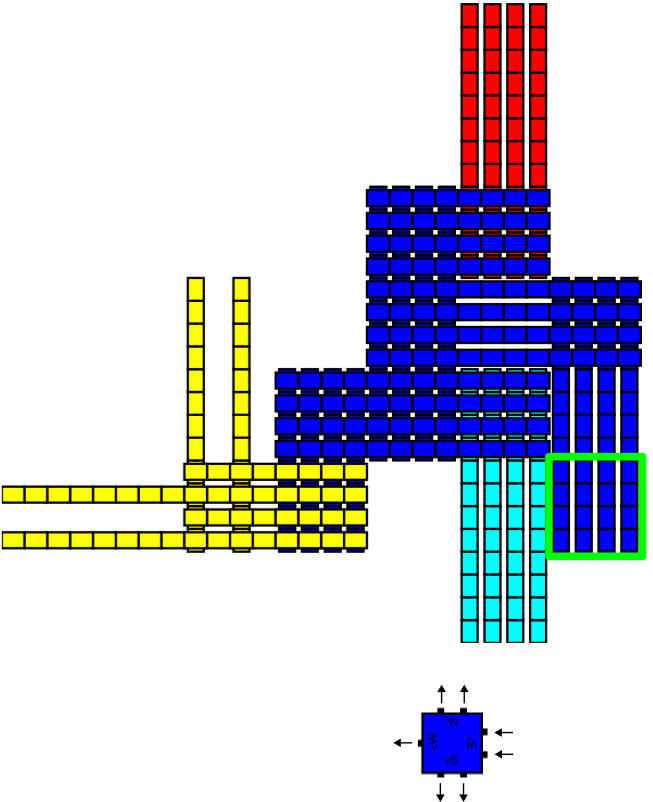}
        \caption{\label{fig:deterministic_4x4_S2_example}}
    \end{subfigure}
    \hspace{20pt}
    \begin{subfigure}{0.3\textwidth}
        \centering
        \includegraphics[width=1.0\textwidth]{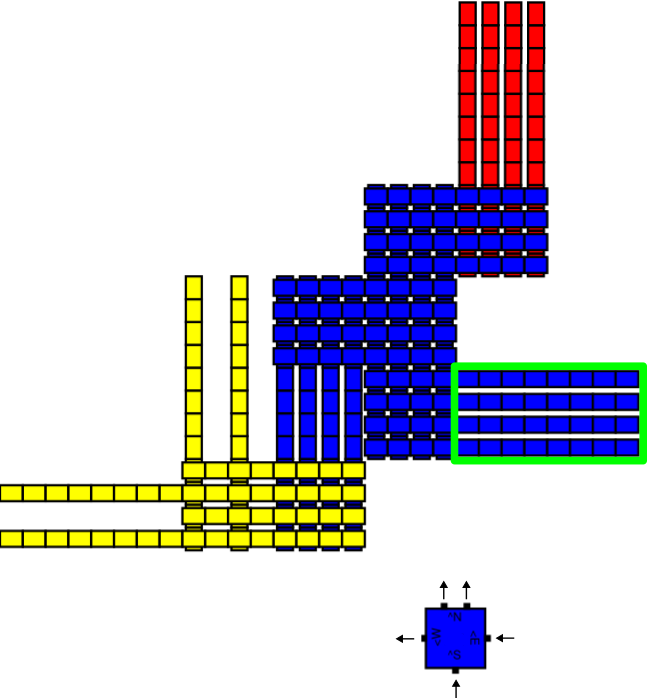}
        \caption{\label{fig:deterministic_4x4_SE_example}}
    \end{subfigure}
    \hspace{20pt}
    \caption{Example macrotiles for a directed temperature-2 aTAM system. (a) Macrotile which recieves an east strength-2 input, and provides north strength-2, south strength-2, and west strength-1 outputs. (b) Macrotile which recieves east and south strength-1 inputs, and provides north strength-2, and west strength-1 outputs. Output slat templates are colored in accordance to Figure \ref{fig:deterministic_4x4_color_conventions}, and input domains are marked with a green box. \label{fig:deterministic_4x4_examples}}
    \end{figure}
    
    \begin{figure}
    \centering
    \begin{subfigure}{0.24\textwidth}
        \centering
        \includegraphics[width=1.0\textwidth]{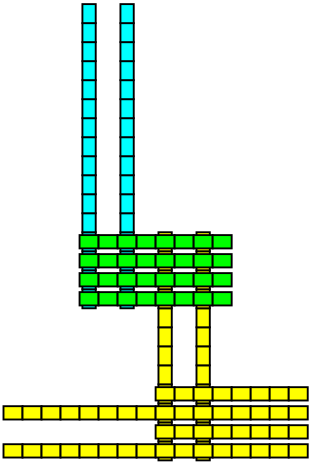}
        \caption{\label{fig:deterministic_4x4_NE_coop}}
    \end{subfigure}
    \hspace{20pt}
    \begin{subfigure}{0.24\textwidth}
        \centering
        \includegraphics[width=1.0\textwidth]{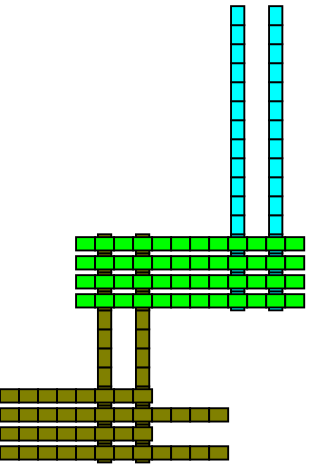}
        \caption{\label{fig:deterministic_4x4_NW_coop}}
    \end{subfigure}
    \hspace{20pt}
    \begin{subfigure}{0.24\textwidth}
        \centering
        \includegraphics[width=1.0\textwidth]{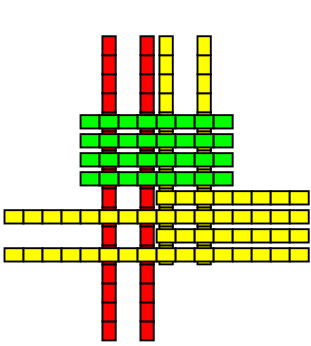}
        \caption{\label{fig:deterministic_4x4_SE_coop}}
    \end{subfigure}
    \begin{subfigure}{0.24\textwidth}
        \centering
        \includegraphics[width=1.0\textwidth]{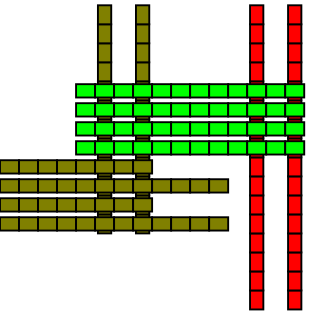}
        \caption{\label{fig:deterministic_4x4_SW_coop}}
    \end{subfigure}
    \begin{subfigure}{0.36\textwidth}
        \centering
        \includegraphics[width=1.0\textwidth]{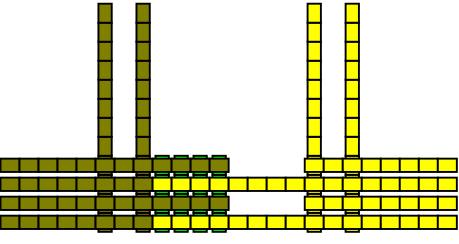}
        \caption{\label{fig:deterministic_4x4_EW_coop}}
    \end{subfigure}
    \begin{subfigure}{0.06\textwidth}
        \centering
        \includegraphics[width=1.0\textwidth]{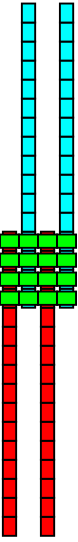}
        \caption{\label{fig:deterministic_4x4_NS_coop}}
    \end{subfigure}
    \caption{ Examples of cooperation in the construction of directed temperature-2 aTAM systems, showing only the output slats from neighbors and body slats of the resolving macrotile. (a) north and east output slat templates resolve macrotile $t_n$ through adjacent cooperation. (b) north and west output slat templates resolve macrotile $t_n$ through adjacent cooperation. (c) south and east output slat templates resolve macrotile $t_n$ through adjacent cooperation. (d) south and west output slat templates resolve macrotile $t_n$ through adjacent cooperation. (e) east and west output slat templates resolve macrotile $t_n$ through across-the-gap cooperation. (f) north and south output slat templates resolve macrotile $t_n$ through across-the-gap cooperation. Slat groups are colored in accordance with the color conventions in Figure \ref{fig:deterministic_4x4_color_conventions}.\label{fig:deterministic_4x4_output_cooperation}}
    \end{figure}

    \begin{figure}
    \centering
    \includegraphics[width=1.0\textwidth]{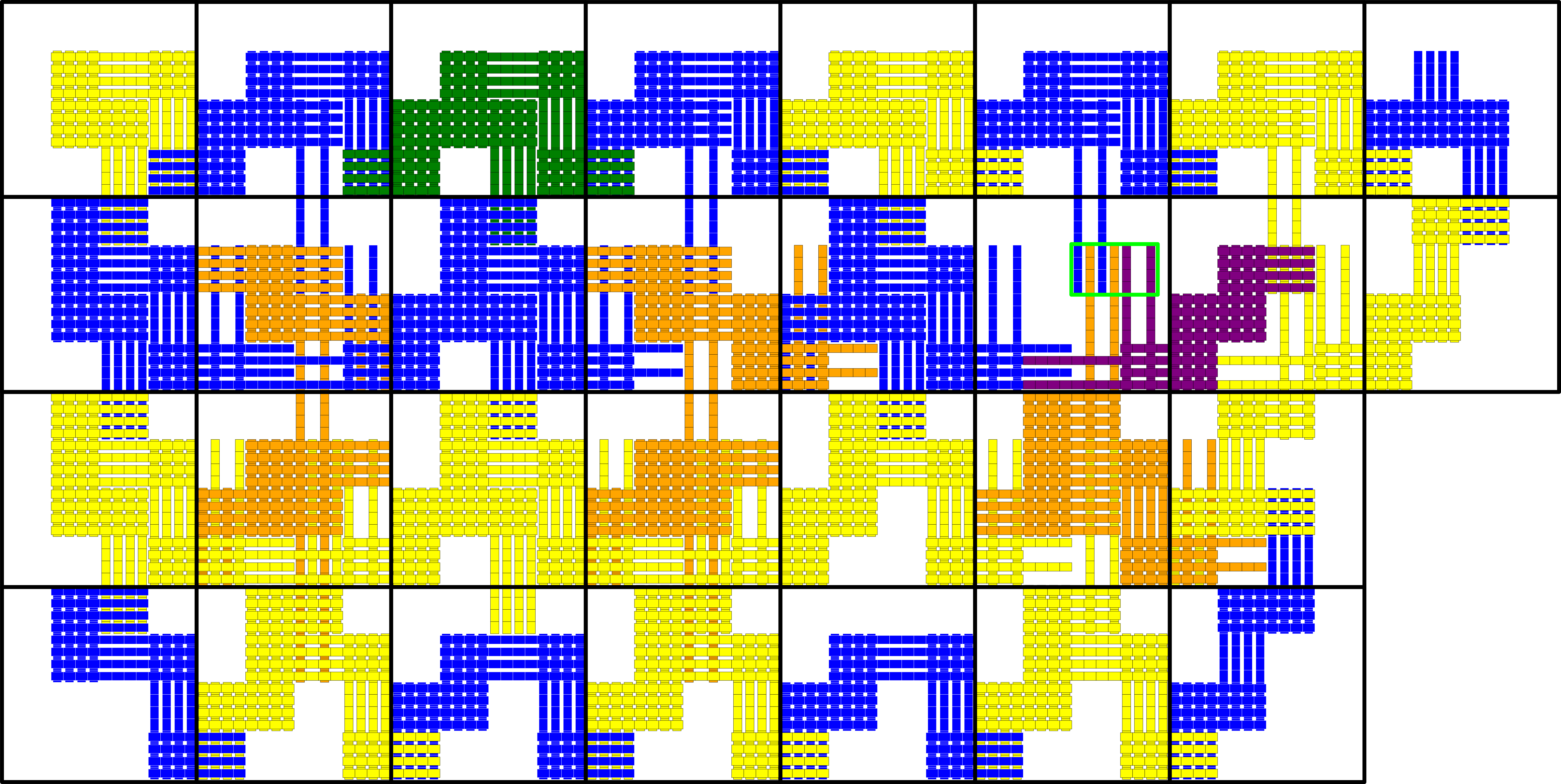}
    \caption{An example an assembly composed of (partial) $4c \times 4c$ macrotiles simulating a directed temperature-2 aTAM system for $c = 4$. Thirty of the macrotile locations are outlined in black squares. The macrotile simulating $t$ from Figure \ref{fig:deterministic_4x4_cell_coloring} would attach into the macrotile location at coordinates $(5,2)$, containing a green square. Its (light blue) body slats would attach to the 2 west output slats from the macrotile to the east (purple) and the 2 north output slats from the macrotile to the south (orange). These light blue slats can bind in any order, and as soon as one binds, the macrotile resolves to $t$. Only once they have all 4 bound can the red body slats bind, then the purple or blue body slats can bind. Only once either of those have bound can the green or orange output slats bind, then after the green output slats bind, may the grey output slats bind. Thus, the growth of a macrotile is well-ordered, and outputs are only presented after a macrotile resolves, enforcing the restrictions of simulation. In this example, the growth of purple body slats, and orange output slats will be blocked from assembly, this is expected behavior, as it is the result of a mismatch occurring. Despite the blocking that occurs, the macrotile still resolves to $t$, and the mismatch does not block simulation.}
    \label{fig:standard_4x4+ATG_construction}
    \end{figure}
    
\end{proof}


\subsection{Technical details for the simulation of the full class of aTAM systems}\label{sec:full-sim-appendix}

In this section we provide the full technical details for the proof of Theorem \ref{thm:full-aTAM}. This section expands on the overview provided in Section~\ref{sec:full-sim}.

\begin{proof}

To prove Theorem~\ref{thm:full-aTAM}, let $\calT = (T, \sigma, 2)$ be an arbitrary directed temperature-2 aTAM system. Here we describe the construction of a slat system $\calS = (S, \sigma', c)$ which intrinsically simulates $\calT$. We note that without loss of generality, $\calT$ is assumed to be an IO-TAS so that all of its tile types are IO-marked. (Otherwise, we can apply the simple transformation discussed in Section \ref{sec:IO-marking}) Keep in mind that the process of adding IO-marks to an arbitrary aTAM tile set will result in tiles with a minimal number of input glues. That is, we can enforce that every set of input glues on a specific tile type is both sufficient and necessary for the attachment of that tile. This will be important for this construction. Let $G_T$ be the set of all glues appearing on tiles in $T$ and let the cooperativity $c$ be any positive even integer. The slats in $S$ can be organized into 3 different categories: \emph{input} slats which bind to the output of adjacent macrotiles, \emph{decision} slats which act to choose the tile type in $T$ to which a macrotile will resolve, and \emph{output} slats which carry the information from the decision slats to the sides of the macrotile after it has resolved.

We note that all slats in this construction are defined with lengths that are multiples of $c$. For a slat of length $L=Nc$, there are two convenient coordinate systems: the \emph{natural coordinates} and the \emph{segment coordinates}. In the natural coordinate system, locations along the slat are indexed by a variable $x$ ranging from $0$ to $L-1$, increasing from west to east for horizontal slats and north to south for vertical slats. In segment coordinates, the slat is logically divided into $N$ contiguous sections of length $c$ and each section is indexed by variable $i$ ranging from $0$ to $N-1$. Then, each location within a section is indexed by variable $j$ ranging from $0$ to $c-1$, so that the location along the slat at segment coordinates $(i,j)$ is the same as the location with natural coordinates $x=Ni+j$. Throughout this section, we will describe the glues of slats as functions of these coordinates, noting again that we adhere to the convention that any horizontal slat will only have glues on its D faces and vertical slats only on their U faces. Each location along the length of a slat therefore only admits one potential glue location. Additionally, all slats will generally be defined as a group of $c$ slats, so a 3rd coordinate $k$ ranging from $0$ to $c-1$ may be introduced to refer to a specific slat among $c$ slats in a group.

\begin{figure}
    \centering
    \includegraphics[width=0.5\textwidth]{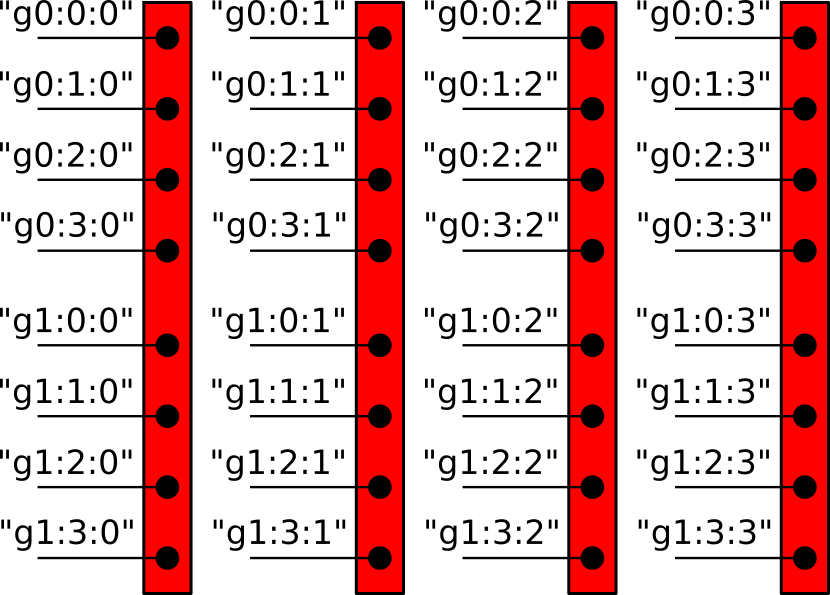}
    \caption{A group of $c$ slats, each of length $2c$ defined by the glue function $f(i,j,k) = \text{\say{g[i][j][k]}}$. In this case $c=4$. Glues on the slats are illustrated as black dots with associated string labels. The segment coordinate $i$ increases for each group of $c$-glues along the length of a slat, $j$ increases within each group of $c$-glues on a slat, and $k$ indexes among a group of $c$ slats.}
    \label{fig:undirected-segment-coordinates}
\end{figure}

Using these segment coordinates for a slat type, along with the additional coordinate $k$ to distinguish between the $c$ slats in a group, we may conveniently describe the glues on an entire slat group using a function, called a \emph{glue function}, that maps the coordinate triple $(i,j,k)$ to a glue label. Figure~\ref{fig:undirected-segment-coordinates} illustrates a group of vertical slats defined by the glue function $f(i,j,k) = \text{\say{g[i]:[j]:[k]}}$, where the \say{guillemets} denote a \emph{string form} where variables enclosed in square brackets are replaced by the corresponding value of the respective variable, similar to format strings available in many programming languages. If $i=2$, $j=3$, and $k=0$ for instance, then the string form \say{g[i]:[j]:[k]} is the same as the string ``g2:3:0''. If a location along a slat is not specified by a glue function or if the glue is assigned the empty string, then it is assumed that no glue exists at that location.

\begin{figure}
    \centering
    \includegraphics[width=0.9\textwidth]{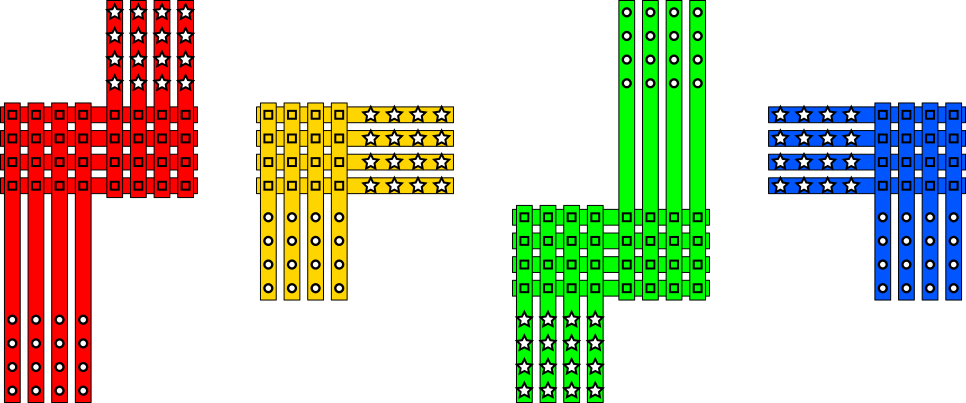}
    \caption{The slats associated with glues for each IO-marked direction assuming the cooperativity $c$ is $4$. The red slats are associated with a `$\vee$' marked glue, the yellow with a `$<$' marked glue, the green with a `$\wedge$' marked glue, and the blue with a `$>$' marked glue. White stars on the slats represents the part of the slat which encode the output from a neighboring macrotile. Squares indicate a pair of complementary interior slat glues unique to the specific location between the corresponding pair of slats. Circles represent slat glues encoding the tile glue being represented by the slats.}
    \label{fig:undirected-glue-slats}
\end{figure}

\paragraph*{Input Slats} The input slats are defined with respect to the glues in $G_T$. Let $g\in G_T$ be a glue with input direction marking $d$. The shape and number of slats defined for $g$ depends solely on $d$, but the glues appearing on the slats depends on the particular glue as well. Figure~\ref{fig:undirected-glue-slats} illustrates the general shape of these slats for each direction of input marking. We denote the set of input slats associated with a glue $g$ as $S_g$. If the input marking $d$ is $\vee$, then $S_g$ will consist of $3c$ slats, illustrated in red in Figure~\ref{fig:undirected-glue-slats} and described here. The first $c$ of these slats are vertical slats of length $2c$ described by the glue function:
\[
f_1(i,j,k) = 
\begin{cases}
\text{\say{IO([g])-[j]:[k]}} & \text{when } i = 0 \\
\text{\say{C1([g])-[j]:[k]}} & \text{when } i = 1 \\
\end{cases}
\]
The next $c$ slats are horizontal with length $2c$ and are described by the glue function:
\[
f_2(i,j,k) = 
\begin{cases}
\text{\say{C2([g])-[k]:[j]*}} & \text{when } i = 0 \\
\text{\say{C1([g])-[k]:[j]*}} & \text{when } i = 1 \\
\end{cases}
\]
And the remaining $c$ slats are vertical with length $3c$ and are described by:
\[
f_3(i,j,k) = 
\begin{cases}
\text{\say{C2([g])-[j]:[k]}} & \text{when } i = 0 \\
\text{\say{}} & \text{when } i = 1 \\
\text{\say{GI([g])-[j]:[k]}} & \text{when } i = 2 \\
\end{cases}
\]

In the glue definitions above, ``[g]'' refers to the label of the IO-marked glue $g$. The glues prefixed with ``IO'' are intended to bind to the output glues from a neighboring macrotile which will have the same labels. The glues prefixed with ``C1'' or ``C2'' denote internal glues which serve only to connect the 3 different slat groups so that they attach in the configuration shown in Figure~\ref{fig:undirected-glue-slats}. Lastly, the glues prefixed by ``GI'' simply act to encode the input glue $g$ for use by the decision slats. Each of the glue ends with a string of the form \say{[row]:[col]} denoting the row and column of the glue within the corresponding $c\times c$ cell. Note that in horizontal slats, the coordinate $k$ acts as a row and $j$ as a column, while the opposite is true for vertical slats. This serves only to distinguish the glues and to ensure that only the desired slats attach at the intended offsets. In the cases where $d$ is $<$, $\wedge$, or $>$, the slats are defined similarly so that the stars in Figure~\ref{fig:undirected-glue-slats} are realized as ``IO'' prefixed glues and the circles are realized as ``G'' prefixed glues. 

\begin{figure}
    \centering
    \includegraphics[width=0.5\textwidth]{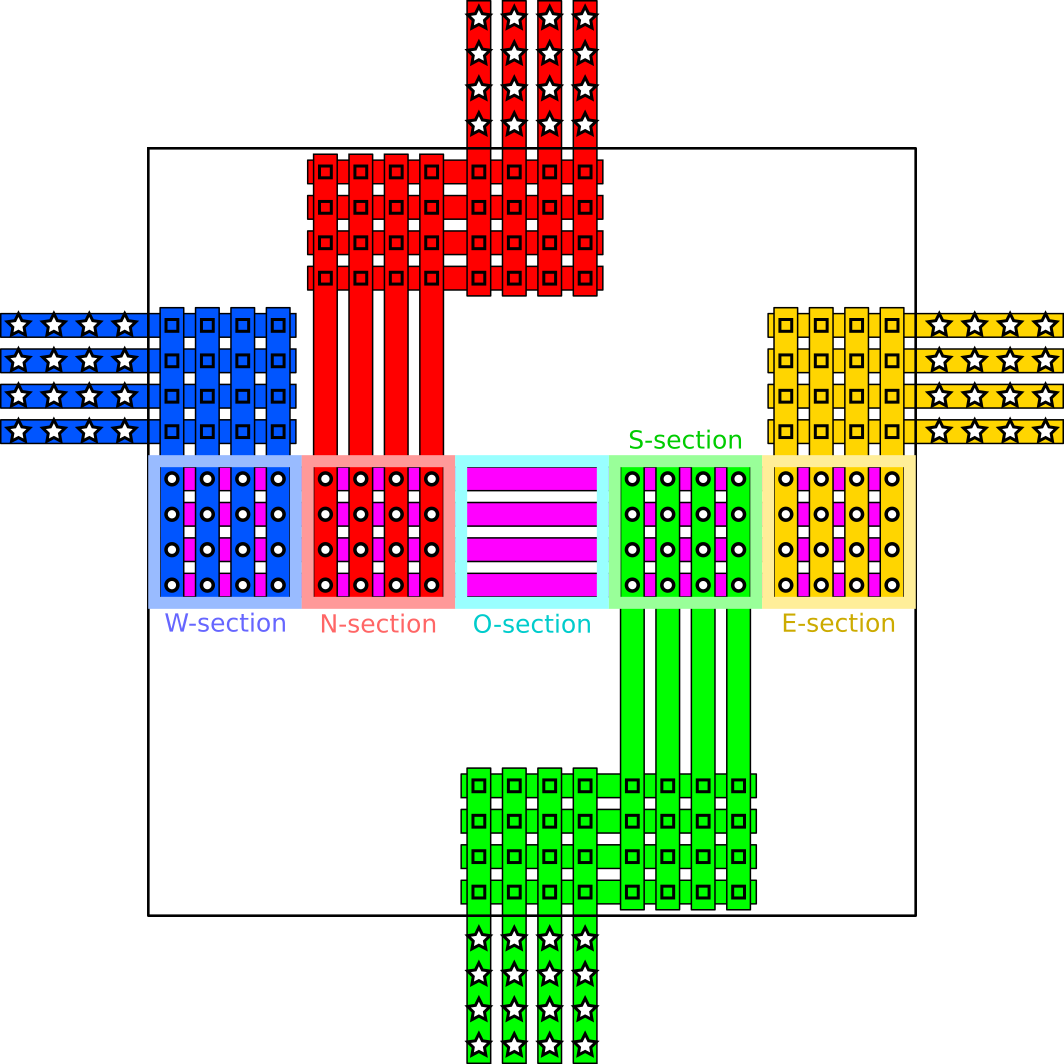}
    \caption{Decision slats (magenta) are logically divided into 5 sections, 4 for the directional inputs from adjacent macrotiles and one which acts as an output. The various sections are organized along the decision slats so that they match with the relative positions of the input slats within a macrotile.}
    \label{fig:undirected-decision-sections}
\end{figure}

\paragraph*{Decision Slats}

Recall the tile types in $T$ are assumed to be IO-marked. For each tile type $t\in T$ we define $c$ horizontal decision slats $s^k_t$ for $k\in\{0,\ldots,c-1\}$, each of length $5c$, which we think of as being logically divided into $5$ contiguous sections of length $c$. From west to east, these sections are called the $W$-section, $N$-section, $O$-section, $S$-section, and $E$-section, where $W,N,S,E$ correspond to the cardinal directions and $O$ denotes an \emph{output section} whose purpose will be explained shortly. In other words, the segment coordinate $i$ of the $W$-, $N$-, $O$-, $S$-, and $E$-sections are $0$, $1$, $2$, $3$, and $4$ respectively. The slat type $s^0_t$ will play a slightly different role from the slats $s^1_t, \ldots, s^{c-1}_t$, specifically this slat will be responsible for causing the macrotile to resolve. The glues appearing in the $O$-section of slat type $s^0_t$ will encode that the macrotile represents tile type $t$, while the glues appearing in the $O$-section of the other $s^k_t$ slat types will simply consist of generic glues, not specific to any tile type in $T$.
In regards to the directional sections, the glues on slat type $s^k_t$ depend on how many input glues tile type $t$ has. If the number of input glues is $1$, then that glue $g$ must have a strength of at least 2. In this case, let $d$ be the direction of glue $g$ and define the slats $s^k_t$ so that it has glues \say{GI([g])-[k]:[j]*} in all $c$ of it's $j$ coordinates in the $d$ section. In this way the slats $s^k_t$ may attach to the input slats encoding the glue $g$ from direction $d$. On the other hand, if tile type $t$ has 2 input glues $g_1$ from direction $d_1$ and $g_2$ from direction $d_2$, then the slats $s^k_t$ will have glues with label \say{GI([g$_1$])-[k]:[j]*} in the $d_1$-section for locations where $j$ is between $0$ to $c/2 - 1$, and likewise glues with label \say{GI([g$_2$])-[k]:[j]*} in the $j\in\{0,\ldots,c/2-1\}$ half of the $d_2$-section. In other words, decision slats corresponding to tiles with a strength-2 input will be able to attach any time input slats encoding that strength-2 input are present, while decision slats corresponding to tiles with cooperative inputs only have half of their glues present in each directional region and require both sets of input slats to be present before they may attach. This is illustrated in Figure~\ref{fig:undirected-decision-slat-example}.

\begin{figure}
    \centering
    \includegraphics[width=\textwidth]{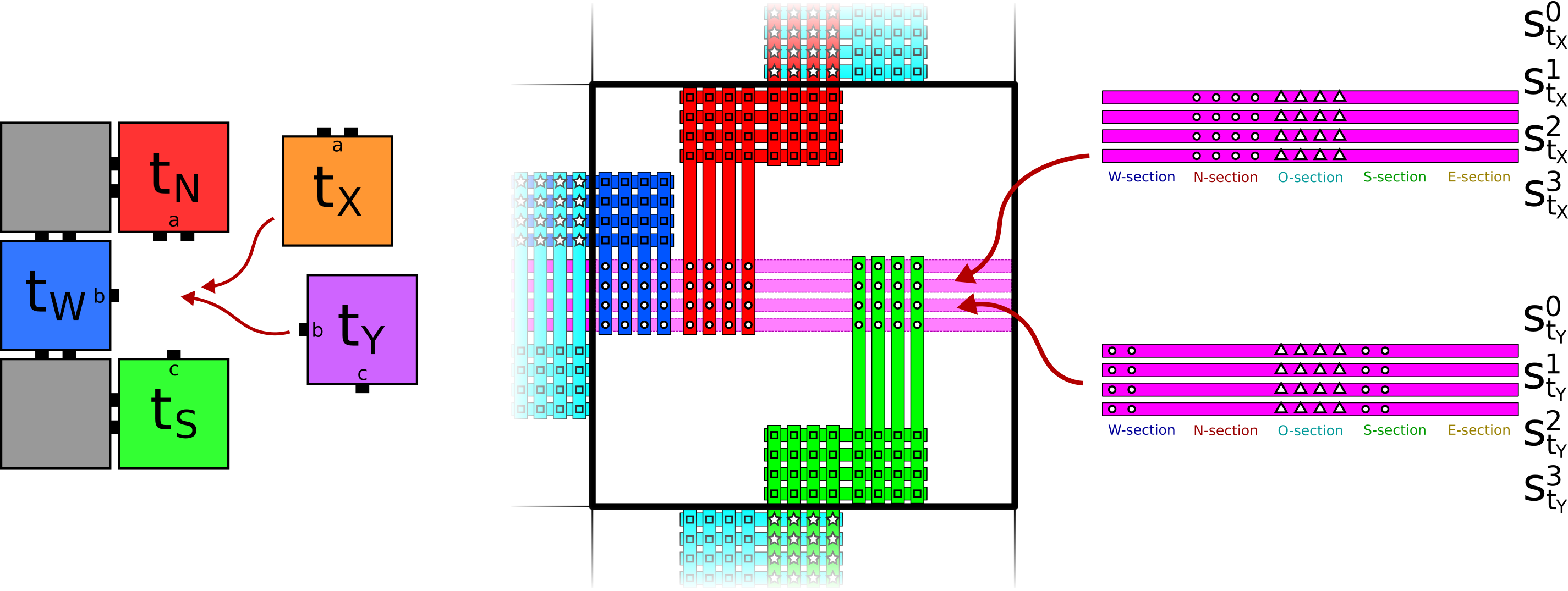}
    \caption{An example undirected aTAM system (left) where both tiles $t_X$ and $t_Y$ may attach in the same location. Tile $t_X$ may attach with a strength-2 glue on its north and $t_Y$ by cooperation between strength-1 glues on its west and south. The example aSAM simulation (right) has input slats with glues in locations for the north, south, and west corresponding to the glues presented by $t_X$'s north and $t_Y$'s south and west, respectively. In order to simulate the undirected attachment, the decision slats (magenta) may attach non-deterministically into the center locations within the slat macrotile. The decision slats corresponding to $t_X$ have glues occupying all locations in their $N$-section, allowing them to attach when only a set of north input slats are present corresponding to the tile glue $a$. The $t_Y$ decision slats have their glues in the $W$- and $S$-sections, but only in half of the possible locations. In this way, both the input slats corresponding to tile glues $b$ and $c$ must be present for these slats to attach. Keep in mind that a mix of decision slats corresponding to $t_X$ and $t_Y$ may attach since each is independent from the others; however only the slat with segment coordinate $k=0$ will determine the output, the other slats have generic glues in their $O$-section. Note that it is not shown in this figure, but glues in the aTAM system are IO marked so that an $a$ glue presented from the north would be incompatible with slats representing an $a$ glue from the east for instance.}
    \label{fig:undirected-decision-slat-example}
\end{figure}

The $O$-section of $s^0_t$ will have glues with label \say{T[t]-0:[j]} for each $j\in\{0,\ldots,c-1\}$ where ``[t]'' refers to some string distinct to tile type $t$. In other words, the $O$-section encodes the tile $t$ as well as its $j$ coordinate within the section. The reason for encoding the $j$ coordinate is simply to ensure that the output slats attaching to these decision slats each have a dedicated attachment site. While the slats $s^1_t,\ldots, s^{c-1}_t$ are defined similarly to $s^0_t$ in regards to their directional sections, the output section for these slats contain generic glues. Specifically, regardless of which tile $t$ is being represented, these slats will have the glue \say{TX-[k]:[j]*} in the respective locations of their $O$-section. Consequently, only the $s^0_t$ slat has any say in how the macrotile will resolve, the others simply act to ensure that there are $c$ glues arranged vertically so that output slats may eventually attach to the $O$-section. The macrotile representation function for this simulation simply checks for the slat $s^0_t$; whichever tile is represented by the specific slat that attaches in that location is the tile to which the macrotile resolves.

In addition to the slats $s^k_t$ defined for every tile type $t\in T$, we also define the slats $c^k_t$ and $d^k_t$ for $k\in\{0,\ldots,c-1\}$ and $t\in T$ called \emph{decision layout slats}. These slats attach to the decision slats described above once the macrotile has resolved and provide a place for the output slats described below to attach and present an encoding of the output glues of $t$ to the relevant sides. The tiles $c^k_t$ are each vertical slats of length $2c$; the northernmost $c$ glues on these tiles attach to the $O$-section glues from the decision slats. These can be described by the glue function:
\[
c^k_t(i,j,k) = 
\begin{cases}
\text{\say{T([t])-0:[k]}} & \text{when } i = 0 \\
\text{\say{TX-[j]:[k]}} & \text{when } i = 0 \text{ and } j>0 \\
\text{\say{D([t])-[j]:[k]}} & \text{when } i = 1 \\
\end{cases}
\]
Notice that the northernmost glue is different than the next $c-1$ glues corresponding to the way the decision tiles $s^k_t$ are defined. The $d^k_t$ slat types depend on the output glues of tile type $t$, very similar to how the glues on the decision tiles depended on the input glues. These slats are also horizontal with length $5c$ and may be thought of as divided into the same directional sections. The glues in the $O$-section simply match with the slat types $c^k_t$. That is, the $O$-sections contain glues of the form \say{D([t])-[k]:[j]*} so that they may only attach if the macrotile resolved to tile type $t$. The remaining sections will have all $c$ glues if $t$ has an output glue in the corresponding direction. In other words, if $t$ has output glue $g$ on its $d$ side, then the $c$ slats $d^k_t$ will have glues of the form \say{GO([g])-[k]:[j]*} in the $c$ locations indexed by $j$ of the $d$-section (to restate: $i=0$ for $W$, $i=1$ for $N$, $i=3$ for $S$, and $i=4$ for $E$).

\paragraph*{Output slats}
Finally, for each glue $g\in G_T$, a set of output slats is defined to propagate the output glues to the respective side. These are designed similarly to the input slats, but instead of attaching to glues with the prefix ``IO'' and placing glues with the prefix ``GI'' (for \emph{glue input}), the output slats attach to glues with the prefix ``GO'' (for \emph{glue output}) and eventually place glues prefixed by ``IO''. Figure~\ref{fig:undirected-out-slats} illustrates what the output slats look like for the different directions an output glue may have.
                  {'coords': [(6,-6),(6,-5)]}],

\begin{figure}
    \centering
    \includegraphics[width=0.8\textwidth]{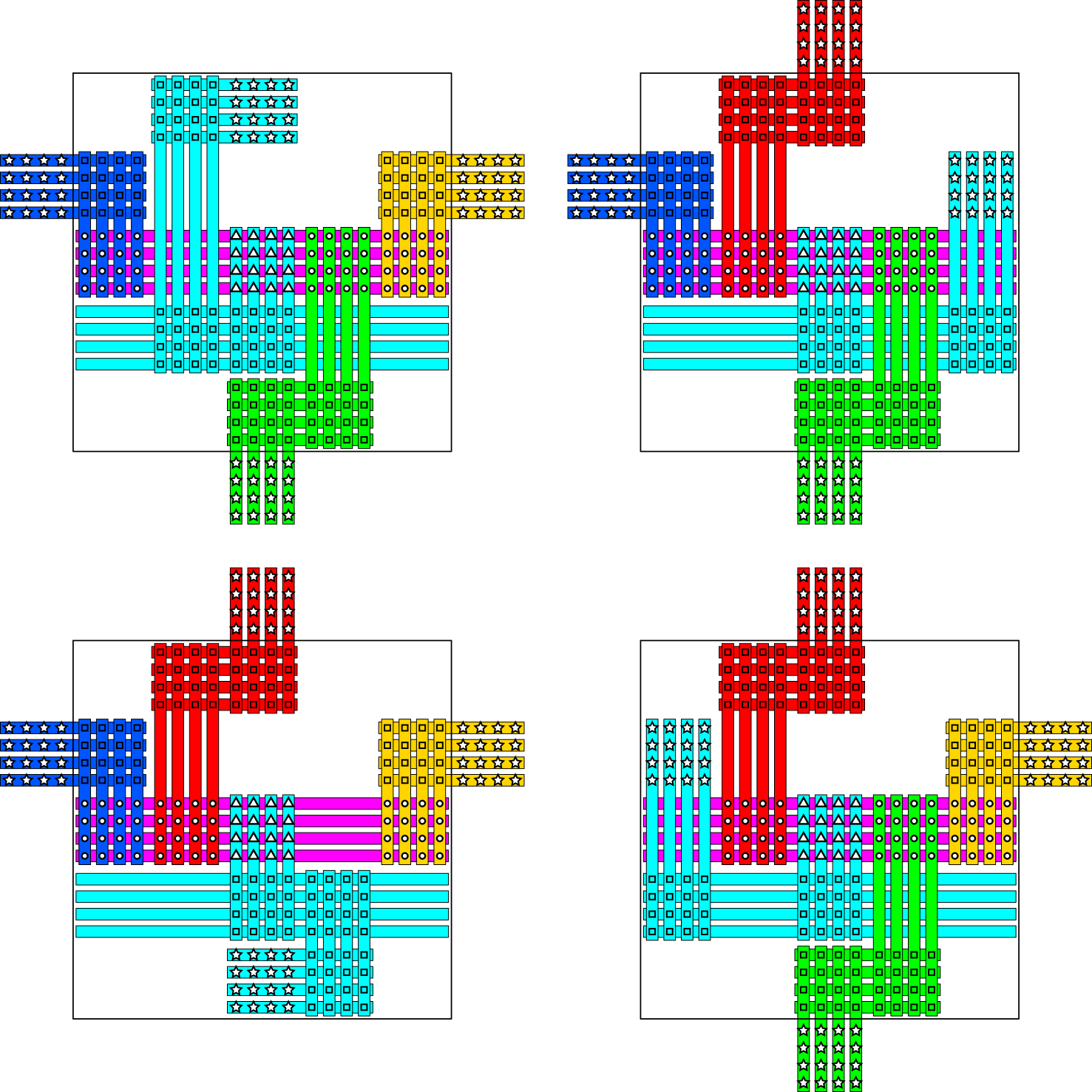}
    \caption{Output slats (cyan) growing into each of the 4 directions: north (top left), east (top right), south (bottom left), and west (bottom right). Output slats begin by attaching to the glues (illustrated as white triangles) in the $O$-section of the decision slats (magenta). A row of horizontal slats then attach with glues that encode the output directions of the aTAM tile into which the macrotile resolved.}
    \label{fig:undirected-out-slats}
\end{figure}

The output slats for a glue $g\in G_T$ with output direction $\wedge$ consist of $2$ groups of $c$ slats. The first group consists of vertical length-$4c$ slats, while the second consists of horizontal length-$2c$ slats. The glues on these slats may be described respectively by the following glue functions
\[
f_4(i,j,k) = 
\begin{cases}
\text{\say{C3([g])-[j]:[k]}} & \text{when } i = 0 \\
\text{\say{GO([g])-[j]:[k]}} & \text{when } i = 3 \\
\end{cases}
\]
and
\[
f_5(i,j,k) = 
\begin{cases}
\text{\say{C3([g])-[k]:[j]*}} & \text{when } i = 0 \\
\text{\say{IO([g])-[k]:[j]*}} & \text{when } i = 1 \\
\end{cases}
\]
The output slats for glues with other directions are defined similarly so that they match the illustrations in Figure~\ref{fig:undirected-out-slats}. The result will be a set of glues associated with the glue $g$ with labels prefixed by``IO''. These will act to allow the input glues to attach in neighboring macrotiles.

\paragraph*{Simulation}

To verify that the slats presented in this construction correctly simulate the tiles in $T$, we first describe how to convert the seed $\sigma$ into the simulation seed $\sigma'$. For each tile type $t$ at location $(x,y)\in \dom{\sigma}$, we simply place the decision slats $s^k_t$, $c^k_t$ and $d^k_t$ corresponding to $t$ in their respective locations within the macrotile whose south-westernmost corner is $(cx, cy)$. Then corresponding to each output glue on $t$, we place the corresponding output slats so they attach in the appropriate locations along the decision tiles. It is often the case that the seed $\sigma$ consists of a single tile with no input glues, but in the case that the seed consists of multiple tiles, there will be input glues on all but one of the tiles. For these we also include the corresponding input slats in the seed $\sigma'$.

As previously mentioned, the macrotile representation function $R$ is simply defined so that it inspects the location of the decision slat in each macrotile location. If it is empty, $R$ maps the location to empty space, else it maps the macrotile to the unique tile type of $T$ corresponding to the decision slat.

It should be clear from the construction above that the glues between the input slats corresponding to a particular glue $g\in G_T$ are dedicated to particular $c\times c$ cells within each macrotile. In other words, because these glues are labelled with the corresponding direction and glue they represent, the input slats are required to attach as a group if the neighboring macrotile has presented the corresponding ``IO'' prefixed glues. The only exception to this occurs when there is a mismatch between the glues of 2 tiles in $\calT$. That being said, the process of IO-marking tiles in $T$ ensures that all glues present on the exterior of $\calT$-assembly are output glues. This fact ensures that the only type of mismatch that may occur between the glues of tiles in a $\calT$-assembly is between 2 output glues. In this way, the only thing that may hinder the input slats corresponding to a glue from $G_T$ from attaching as a group is if they attempt to grow in an already resolved macrotile with output slats already present. This clearly cannot affect the resolution of either macrotile since both must have resolved already, and furthermore it cannot affect the other output slats (those for glues being presented in the other directions) since any newly placed input slats will not affect the decision tiles already placed. Once enough input slats are present for the decision slats to attach, there may be a non-deterministic choice for the which of the slats $s^k_t$ end up attaching, but by the definition of the slats above, after these attachments it is clear which output slats must then attach. The fact that mismatches are handled solely in the cells of the macrotile dedicated to slats corresponding to the respective directions, along with the fact that all undirectedness in $\calT$ is handled solely by which horizontal decision slats attach, and that all other slats within a macrotile attach with dedicated glues specific to each location ensures that the system $\calS$ must model the system $\calT$ (here the set $\Pi_\alpha$ in the definition of ``models'' can easily be taken to be the full set of $\calS$-assemblies mapping to each $\calT$-assembly $\alpha$). Equivalent productions follow from the simplicity of the macrotile representation function (which depends only on the decision slats in the center of the macrotile); and because the decision tiles for tile type $t\in T$ may attach only when the respective input glues are present as encoded by the corresponding input slats, it is not difficult to see that $\calT$ must follow $\calS$. Thus, $\mathcal{S}$ correctly simulates $\calT$ under $R$. Furthermore, it does so using macrotiles of size $5c \times 5c$ and with the longest slats being of length $5c$. Therefore, Theorem \ref{thm:full-aTAM} is proven.

\end{proof}

\subsubsection[Handling arbitrary values of tau in T]{Handling arbitrary values of $\tau$ in $\calT$}

It should also be noted that a similar construction works to simulate an aTAM system using any temperature value, not just $\tau=2$. To do this, the cooperativity $c$ of $\calS$ needs to be at least as large as the temperature being simulated. In this context, input and output slats would work identically, however the decision slats need to be adjusted. In the above construction, decision slats are introduced for every way in which a tile $t$ might attach to a combination of input glues. To extend this to handle arbitrary temperatures, the decision slats corresponding to some tile type $t$ to be simulated, must only be able to attach if input slats are present corresponding to all of the necessary input glues of $t$. If for instance, we were simulating a temperature-3 system and we had a tile type $t$ that may attach so long as the north, east, and west each present a strength-1 glue, we could simulate this using a decision slat corresponding to $t$ with at least 1 glue in each of the $N$-, $E$-, and $W$-sections. Specifically, the total number of glues in these sections would have to add to the cooperativity $c$ to allow the tile to attach, so if $c=4$, then it could be the case that the $N$-section of the decision slat corresponding to $t$ had $2$ glues while the $E$- and $W$-sections each had 1 glue. In this way the decision slat would only be able to attach when all input glues are present. We also note that the choice of the $N$-section having 2 glues is arbitrary; it could have been the $W$-section with 2 glues for instance, so long as slats corresponding to each input glue of the simulated tile are necessary to be present. It will always be the case that the input slats present all $c$ glues to the decision row and it will always be the case that decision slats bind to $c$ glues. It is the distribution of glues in the various directional sections of the decision slats that enforces that they correctly represent their respective tiles in $\calT$. Whichever decision slat attaches in the north most decision row will then encode the simulated tile and output slats will attach accordingly to encode the output glues. Consequently accommodating arbitrary temperatures in this way requires that the cooperativity value $c$ is at least $\tau$. With this small modification our construction can simulate arbitrary aTAM systems of any temperature.

\end{document}